\newtheorem{theorem}{Theorem}
\newtheorem{proposition}{Proposition}
\newtheorem{corollary}{Corollary}
\newtheorem{lemma}{Lemma}
\newtheorem{definition}{Definition}
\newtheorem{example}{Example}
\newtheorem{remark}{Remark}
\newcommand{\CAC}{\mathrm{CAC}}
\newcommand{\CACe}{\mathrm{CAC^e}}
\newcommand{\HH}{\mathsf{H}}
\newcommand{\C}{\mathcal{C}}
\newcommand{\Z}{\mathbb{Z}}
\begin{document}
\title{Optimal Constant-Weight and Mixed-Weight Conflict-Avoiding Codes} 

\author{
{Yuan-Hsun Lo,~\IEEEmembership{IEEE~Member}, Tsai-Lien Wong, Kangkang Xu,\\ Yijin Zhang,~\IEEEmembership{IEEE~Senior Member}}
\thanks{This article was presented in part at the 2024 IEEE International Symposium on Information Theory.}
\thanks{
This work was supported in part by the National Science and Technology Council, Taiwan, under Grants 112-2115-M-153-004-MY2 and 113-2115-M-110-003-MY2, and in part by the National Natural Science Foundation of China under Grant 62071236. \textit{(Corresponding author: Yijin Zhang)}
}
\thanks{Y.-H. Lo is with Department of Applied Mathematics, National Pingtung University, Taiwan.  Email: yhlo0830@gmail.com}
\thanks{T.-L. Wong and K. Xu are with Department of Applied Mathematics, National Sun Yat-sen University, Taiwan. Email: tlwong@math.nsysu.edu.tw, ivykkxu107@gmail.com}
\thanks{Y. Zhang is with Nanjing University of Science and Technology, Nanjing 210094, China. Email: yijin.zhang@gmail.com}
}

\maketitle

\begin{abstract}
A conflict-avoiding code (CAC) is a deterministic transmission scheme for asynchronous multiple access without feedback.
When the number of simultaneously active users is less than or equal to $w$, a CAC of length $L$ with weight $w$ can provide a hard guarantee that each active user has at least one successful transmission within every consecutive $L$ slots.
In this paper, we generalize some previously known constructions of constant-weight CACs, and then derive several classes of optimal CACs by the help of Kneser's Theorem and some techniques in Additive Combinatorics.
Another spotlight of this paper is to relax the identical-weight constraint in prior studies to study mixed-weight CACs for the first time, for the purpose of increasing the throughput and reducing the access delay of some potential users with higher priority.
As applications of those obtained optimal CACs, we derive some classes of optimal mixed-weight CACs.
\end{abstract}


\section{Introduction}\label{sec:intro}

A conflict-avoiding code (CAC)~\cite{LT05} is a deterministic grant-free scheme for asynchronous multiple access without feedback.
Unlike probabilistic schemes, a CAC can offer a hard guarantee on the worst-case delay relying on its good cross-correlation property.
This hard guarantee is desirable to provide satisfactory services for many mission-critical applications with ultra-reliable and low-latency communications (URLLC)~\cite{Bennis18PIEEE}, such as industrial automation, intelligent transportation, telemedicine, and Meta-Universe.
Other deterministic schemes, like protocol sequences~\cite{LWLZCX23}, rendezvous sequences~\cite{CZWZYL22}, can also been seen as variants of CACs for other performance guarantees. 

Following~\cite{LT05}, this paper considers the collision channel model without feedback~\cite{MM_85}.
The time axis is partitioned into equal-length time slots, whose duration corresponds to the transmission time for one packet.
Assume that there is no global time synchronization among the users and no feedback information from the receiver.
So, each user $i$ has a relative time offset $\tau_i$ in unit of a slot, which is random but remains fixed throughout the communication session.
In a slot, if two or more than two users are transmitting packets simultaneously, then a collision occurs and all of the packets are lost; otherwise, the packet transmitted by a unique user can be received successfully.
Let $\mathcal{C}$ be a CAC of length $L$ with weight $w$.
Each codeword in $\mathcal{C}$ consists of $w$ elements $x_1,x_2,\ldots,x_w$, where $0\leq x_i\leq L-1$.
Each user is preassigned a unique codeword from $\mathcal{C}$, and a user $i$ sends out a packet at slot $t+\tau_i$ if and only if the user $i$ is active and the corresponding codeword contains an integer $x_i=t+\tau_i$ (mod $L$).

By applying a CAC of length $L$ with weight $w$ to access, any two active users have at most one collision between them in a period of $L$ slots no matter what the time offsets are.
This property guarantees that each active user has at least one successful transmission in a period of $L$ slots if there are at most $w$ active users at the same time.
The design goal of CACs is to maximize the number of codewords (i.e., the number of potential users that can support) for given $L$ and $w$.
Note that the CACs design for the slot-synchronous model can be extended to that for the fully asynchronous model~\cite{MM_85}.

\subsection{Conflict-Avoiding Codes}

Let $\mathbb{Z}_L\triangleq\{0,1,\ldots,L-1\}$ denote the ring of residue modulo $L$, and let $\mathbb{Z}_L^*\triangleq\mathbb{Z}_L\setminus\{0\}$.
For $S\subseteq\mathbb{Z}_L$, let 
\begin{equation}\label{eq:difference-set-def}
d^*(S)\triangleq\{a-b\, (\text{mod } L):\,a,b\in S, a\neq b\}
\end{equation}
denote the set of \emph{(nonzero) differences} of $S$.

\begin{definition}\label{def:CAC}\rm
Let $L$ and $w$ be two positive integers with $L>w$. 
A \textit{conflict-avoiding code (CAC)} $\mathcal{C}$ of length $L$ with weight $w$ is a collection of $w$-subsets, called codeword, of $\mathbb{Z}_L$ such that
\begin{equation}\label{eq:CAC}
d^*(S)\cap d^*(S')=\emptyset \quad \forall S,S'\in\mathcal{C}, S\neq S'.
\end{equation}
\end{definition}

The condition in \eqref{eq:CAC} is called the \textit{disjoint-difference-set} property.
Without loss of generality, we may assume that all codewords contain $0$.
Let $\CAC(L,w)$ denote the class of all CACs of length $L$ with weight $w$.
The maximum size of a code in $\CAC(L,w)$ is denoted by $K(L,w)$, i.e.,
\begin{align*}
K(L,w)\triangleq\max\{|\C|:\,\C\in\CAC(L,w)\}.
\end{align*}
A code $\C\in\CAC(L,w)$ is called \textit{optimal} if its code size achieves $K(L,w)$.
As $\bigcup_{S\in\C}d^*(S)\subseteq\mathbb{Z}_L^*$ for $\C\in\CAC(L,w)$, an optimal code $\C\in\CAC(L,w)$ is said to be \textit{tight} if $\bigcup_{S\in\C}d^*(S)=\mathbb{Z}_L^*$.

A $w$-subset $S\subseteq\mathbb{Z}_L$ is said to be \textit{equi-difference} with \textit{generator} $g\in\Z_L^*$ if $S$ is of the form $\{0,g,2g,\ldots,(w-1)g\}$.
Observe that $d^*(S)=\{\pm g,\pm 2g,\ldots,\pm (w-1)g\}$ and $|d^*(S)|\leq 2w-2$ if $S$ is an equi-difference codeword with generator $g$.
Note that $S$ is called \textit{exceptional} if $|d^*(S)|$ is strictly less than $2w-2$.
A CAC is called equi-difference if it entirely consists of equi-difference codewords.
Let $\CACe(L,w)\subset\CAC(L,w)$ denote the class of all equi-difference codes and $K^e(L,w)$ be the maximum size among $\CACe(L,w)$.
Obviously, $K^e(L,w)\leq K(L,w)$.

For fixed $w$, it was shown in~\cite{SWC10} that $K(L,w)$ increases approximately with slope $(2w-2)^{-1}$ as a function of length $L$, and meanwhile, an asymptotically upper bound of $K(L,w)$ was given in~\cite{SW10}.
Based on some finite-field properties, some constructions of CACs for general weights can be found in~\cite{MMSJ07}, together with a series of optimal CACs with weight $w=4,5$.
For small $w$, the exact value of $K(L,3)$ is completely determined by \cite{LT05,JMJTT07,MFU09,FLM10} for even $L$.
As for odd length, $K(L,3)$ is determined for $L$ being some particular prime~\cite{LT05} and some composite number with particular factors~\cite{Levenshtein07,FLS14,MZS14,MM17}.
If only equi-difference codewords are concerned, $K^e(L,w)$ is obtained for some particular $L$ with $w=3$ in \cite{WF13,LMSJ14} and with weight $w=4$ in \cite{LFL15,LMJ16}.
In the case of tight CACs, \cite{Momihara07} presented a necessary and sufficient condition for the existence of tight equi-difference CACs of weight $3$, which was rewritten in the notion of multiplicative order of $2$ in \cite{FLS14}.


\subsection{Known Optimal Constant-Weight CACs}



We shall recall some previously known results on optimal CACs provided in literature.
The first two ones are based on the theory of quadratic residues.

Given a positive integer $n$, a nonzero element $a\in\mathbb{Z}_n$ is called a \textit{quadratic residue} if there exists an integer $x\in\mathbb{Z}_n$ such that $a=x^2$; otherwise, $a$ is called a \textit{quadratic non-residue}.
Consider an odd prime $p$.
The \textit{Legendre symbol} on $\mathbb{Z}_p$ is defined (e.g.,~\cite{IR90}) as, for $a\in\mathbb{Z}_p$, 
\begin{align*}
\left(\frac{a}{p}\right)\triangleq
\begin{cases}
1 & \text{if } a \text{ is a quadratic residue modulo }p, \\
-1 & \text{if } a \text{ is a quadratic non-residue modulo }p, \\
0 & \text{if } a=0.
\end{cases}
\end{align*}
It can be shown that the Legendre symbol is multiplicative:
\begin{equation}\label{eq:Legendre}
\left(\frac{ab}{p}\right) = \left(\frac{a}{p}\right)\left(\frac{b}{p}\right).
\end{equation}

The following result is given in~\cite[Theorems 3 and 7]{SW10}, which plays an important role in the derivation of a tight asymptotic upper bound on $K(L,w)$.

\begin{theorem}[\cite{SW10}]\label{thm:known_w-1p}
Let $p$ be an odd prime and $w$ be an integer such that $2\leq w\leq p$.
If 
\begin{equation}\label{eq:QR1}
\left(\frac{-1}{p}\right)=-1
\end{equation}
and
\begin{equation}\label{eq:QR2}
\left(\frac{i}{p}\right)\left(\frac{i-w+1}{p}\right)=-1,\ \forall i=1,2,\ldots,w-2, 
\end{equation}
then there exists a code in $\CACe((w-1)p,w)$ with $(p-1)/2$ codewords.
In particular, if $w-1$ is an odd prime such that $p\geq 2w-1$, then
\begin{align*}
K\left((w-1)p,w\right)=\frac{p-1}{2}.
\end{align*}
\end{theorem}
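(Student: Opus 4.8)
The plan is to prove the two inequalities separately. The construction, giving $K((w-1)p,w)\ge(p-1)/2$, uses \eqref{eq:QR1}--\eqref{eq:QR2}; the matching upper bound $K((w-1)p,w)\le(p-1)/2$ only uses that $w-1$ is an odd prime and $p\ge 2w-1$.

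For the lower bound, write $L=(w-1)p$ and identify $\Z_L\cong\Z_{w-1}\times\Z_p$ by the Chinese Remainder Theorem. Let $Q\subseteq\Z_p^*$ be the set of quadratic residues, so $|Q|=(p-1)/2$; for each $t\in Q$ let $g_t\in\Z_L$ be the unique element with $g_t\equiv1\pmod{w-1}$ and $g_t\equiv t\pmod p$, and set $S_t=\{0,g_t,2g_t,\dots,(w-1)g_t\}$. Since $g_t$ is a unit of $\Z_L$ (being coprime to both $w-1$ and $p$), each $S_t$ is a genuine equi-difference codeword with $d^*(S_t)=g_tI$, where $I=\{\pm1,\dots,\pm(w-1)\}$, and $|d^*(S_t)|=2w-2$. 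The main point is to verify $d^*(S_t)\cap d^*(S_{t'})=\emptyset$ for $t\neq t'$. Reading off the elements of $g_tI$ in CRT coordinates — the pairs $(k,kt)$ and $(w-1-k,-kt)$ for $k=1,\dots,w-2$ together with the two elements $(0,\pm(w-1)t)$ — one finds after an elementary but somewhat lengthy case split that an element of $g_tI$ can coincide with one of $g_{t'}I$ in only two ways: matching the two elements whose first coordinate is $0$, which forces $t\equiv-t'\pmod p$ and is barred by \eqref{eq:QR1} since then $-1\notin Q$; or matching $(k,kt)$ with $(w-1-k,-kt')$ for some $k\in\{1,\dots,w-2\}$, which forces $t/t'\equiv(k-w+1)/k\pmod p$ and is barred by \eqref{eq:QR2}, because that condition makes $(k-w+1)/k$ a non-residue while $t/t'\in Q$. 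Hence $\C=\{S_t:t\in Q\}$ is a code in $\CACe(L,w)$ of size $(p-1)/2$ (the codewords are distinct, their difference sets being disjoint and nonempty); in fact those difference sets partition $\Z_L$ minus its subgroup of order $w-1$.

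For the upper bound I claim that, when $w-1$ is an odd prime and $p\ge 2w-1$, every $w$-subset $S\subseteq\Z_L$ has $|d^*(S)|\ge 2w-2$. Granting this, any $\C\in\CAC(L,w)$ satisfies $|\C|(2w-2)\le|\Z_L^*|=(w-1)p-1<(2w-2)\bigl(\tfrac{p-1}{2}+1\bigr)$, so $|\C|\le(p-1)/2$. To prove the claim, apply Kneser's theorem to $D=S+(-S)$: with $H=\mathrm{Stab}(D)$ one gets $|D|\ge 2|S|-|H|=2w-|H|$, while $D$ is a union of $H$-cosets and contains $0$, hence $H\subseteq D$. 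Since $L$ is a product of the two distinct primes $w-1$ and $p$ (distinct because $p\ge 2w-1>w-1$), $|H|\in\{1,w-1,p,L\}$; a short check in each case — for $|H|=1$, $|D|\ge 2w-1$; for $|H|=w-1$, $|D|\ge w+1$ is a multiple of $w-1$ hence $\ge 2(w-1)$; for $|H|\in\{p,L\}$, $|D|\ge p\ge 2w-1$ — gives $|D|\ge 2w-2$, so $|d^*(S)|=|D|-1\ge 2w-3$. Finally $L$ is odd, so $\Z_L$ has no $2$-torsion and the symmetric set $d^*(S)=-d^*(S)$ has even cardinality; thus $|d^*(S)|\ge 2w-2$. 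Combining with the construction yields $K((w-1)p,w)=(p-1)/2$.

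The step I expect to be the bottleneck is the disjointness check: one must carefully enumerate, in CRT coordinates, the ways two equi-difference difference sets can overlap, and verify that exactly the two obstructions above arise — one isolating \eqref{eq:QR1} and the other producing precisely the $w-2$ Legendre-symbol conditions in \eqref{eq:QR2}. By contrast the upper bound is comparatively mechanical once one notices that the parity of $|d^*(S)|$, forced by $L$ being odd, removes the slack that Kneser's inequality leaves in the case $|H|=w-1$.
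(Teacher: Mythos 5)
Your proposal is correct and follows essentially the same route as the paper: the lower bound is exactly the construction of Theorem~\ref{thm:construction_direct} with $d=1$, $r=1$ (equi-difference codewords with generators $(1,t)$, $t$ a quadratic residue, disjointness reduced via CRT coordinates to \eqref{eq:QR1}--\eqref{eq:QR2}), and the upper bound rules out exceptional codewords by Kneser's theorem applied to $d(S)$ and an analysis of the stabilizer subgroup, followed by the same counting of $\Z_{(w-1)p}^*$. The only (minor) deviation is in the case $|\HH(d(S))|=w-1$, where you replace the refined Kneser estimate via $|S+\HH(d(S))|$ used in Lemma~\ref{lem:nonexceptional} by the weak Kneser bound together with divisibility of $|d(S)|$ by $|\HH(d(S))|$ and the parity of $|d^*(S)|$ forced by $L$ being odd; this works, but it is why your upper bound needs $w-1$ prime, whereas the paper's Lemma~\ref{lem:nonexceptional} dispenses with that hypothesis.
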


The following is an adaptation of a construction given in~\cite[Theorem 5.1]{MMSJ07}.



\begin{theorem}[\cite{MMSJ07}]\label{thm:known_wp}
Let $p$ be a prime such that $\gcd(p,w)=1$. 
If there is a code in $\CACe(p,w)$ with $m$ codewords and
\begin{align*}
\left(\frac{i}{p}\right)\left(\frac{i-w}{p}\right)=-1,\ \forall i=1,2,\ldots,w-1, 
\end{align*}
then there exists a code in $\CACe(wp,w)$ with $m+\frac{p-1}{2}+1$ codewords.
\end{theorem}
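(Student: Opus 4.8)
The plan is to build the code inside $\Z_{wp}\cong\Z_w\times\Z_p$, the latter decomposition via the Chinese Remainder Theorem, writing each element of $\Z_{wp}$ as a pair $(a,b)$ with $a\in\Z_w$, $b\in\Z_p$, so that the equi-difference codeword with generator $g=(g_1,g_2)$ is $\{k(g_1,g_2):0\le k\le w-1\}$ with difference set $\{\pm k(g_1,g_2):1\le k\le w-1\}$. Observe first that the hypothesis forces $w\le p-1$: if $w\ge p+1$ then the index $i=p$ occurs and $\left(\frac{p}{p}\right)=0$ contradicts the assumed relation, while $w=p$ is ruled out by $\gcd(p,w)=1$. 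I then take three families of codewords: (i) the $m$ codewords with generators $(0,g_2)$, where $g_2$ runs over the generators of the given code in $\CACe(p,w)$; (ii) the single codeword with generator $(1,0)$, i.e. $\{(k,0):0\le k\le w-1\}$; and (iii) the $(p-1)/2$ codewords with generators $(1,t)$, where $t$ runs over the set $\mathrm{QR}$ of quadratic residues modulo $p$. Because $w\le p-1$, each generator genuinely yields a $w$-subset, and the $m+1+(p-1)/2$ codewords thus obtained are pairwise distinct (their first or second coordinates already tell them apart). What remains is the disjoint-difference-set property.

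The cross-family part is routine bookkeeping. The differences of family (i) lie in $\{0\}\times\Z_p^*$; those of family (ii) form exactly $(\Z_w\setminus\{0\})\times\{0\}$, since $\{\pm k\bmod w:1\le k\le w-1\}=\Z_w\setminus\{0\}$; and a difference $\pm k(1,t)$ of family (iii) has first coordinate $\pm k\bmod w\in\Z_w\setminus\{0\}$ and, because $1\le k\le w-1<p$ and $t\ne0$, second coordinate $\pm kt\bmod p\ne0$. Hence the differences of family (iii) have both coordinates nonzero, so the three families have pairwise disjoint difference sets. Within family (i), disjointness is inherited from the disjoint-difference-set property of the given code in $\CACe(p,w)$, and family (ii) is a single codeword; so only the intra-family-(iii) case is left.

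That case is the crux, and the main obstacle. Suppose $\varepsilon k(1,t)=\varepsilon' k'(1,t')$ with $\varepsilon,\varepsilon'\in\{\pm1\}$, $k,k'\in\{1,\ldots,w-1\}$, $t,t'\in\mathrm{QR}$; after scaling by $\varepsilon$ assume $\varepsilon=1$. Comparing first coordinates gives $k\equiv\varepsilon' k'\pmod w$. If $\varepsilon'=1$ then $k=k'$ and the second coordinates force $t=t'$, so the two differences belong to the same codeword. If $\varepsilon'=-1$ then $k+k'\equiv0\pmod w$, hence $k'=w-k$, and the second coordinates give $kt\equiv(k-w)t'\pmod p$, i.e. $t/t'\equiv(k-w)/k\pmod p$. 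Thus a collision between distinct codewords of family (iii) would require the ratio of two quadratic residues to equal $(k-w)/k$ for some $k\in\{1,\ldots,w-1\}$. But $t/t'$ is a quadratic residue, whereas multiplicativity of the Legendre symbol together with the hypothesis gives $\left(\frac{(k-w)/k}{p}\right)=\left(\frac{k-w}{p}\right)\left(\frac{k}{p}\right)=-1$, so every $(k-w)/k$ is a quadratic non-residue. (When $w$ is even, the borderline index $k=w/2$ specializes this to $\left(\frac{-1}{p}\right)=-1$, again exactly what the hypothesis supplies; and the involution $k\mapsto w-k$ shows $\{(k-w)/k\}$ is closed under inversion, so checking unordered pairs $\{t,t'\}$ suffices.) Hence no such collision occurs, family (iii) has the disjoint-difference-set property, and the union of the three families is the desired code in $\CACe(wp,w)$ of size $m+\frac{p-1}{2}+1$. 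The conceptual content of this last step — which is where I expect the real work to lie — is the observation that intra-family-(iii) disjointness is equivalent to $\mathrm{QR}$ being an independent set in the Cayley graph on $\Z_p^*$ with connection set $\{(k-w)/k:1\le k\le w-1\}$, and that the stated Legendre-symbol condition says precisely that this connection set consists of non-residues.
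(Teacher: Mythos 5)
Your proof is correct and follows essentially the same route the paper uses for its generalization (Theorem~\ref{thm:construction_wpr}, specialized to $r=1$): decompose $\Z_{wp}\cong\Z_w\times\Z_p$ via the CRT and take generators $(0,g)$ for $g$ in the given code, $(1,t)$ for $t$ a quadratic residue, and $(1,0)$, with the Legendre-symbol hypothesis ruling out exactly the opposite-sign collisions $k+k'=w$ inside the quadratic-residue family. The only difference is cosmetic: you derive $w<p$ from the hypotheses rather than assuming $p\ge 2w-1$ as the paper does for its $p^r$ version, which is fine since the $r=1$ argument needs no more.
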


By Theorem~\ref{thm:known_wp}, \cite{MMSJ07} obtained optimal $\C\in\CAC(4p,4)$ with $K(4p,4)=|C|=\frac{p-1}{6}+\frac{p-1}{2}+1$, where $p=13\ (\bmod\, 24)$ and satisfies some particular conditions.

The last one is about a construction given in~\cite[Theorem 13]{SWC10}.

\begin{theorem}[\cite{SWC10}]\label{thm:known_2w-1p}
Let $p$ be a prime number such that $p>2w-1$.
If there is a code in $\mathrm{CAC^e}(p,w)$ with $m$ codewords, then there exists a code in $\CACe((2w-1)p,w)$ with $p+m$ codewords.
In particular, if $p-1$ is divisible by $2w-2$ and $m=(p-1)/(2w-2)$, then
\begin{align*}
K\left((2w-1)p,w\right)=p+\frac{p-1}{2w-2}.
\end{align*}
\end{theorem}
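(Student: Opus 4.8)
The plan is to lift the assumed length-$p$ code into $\Z_{(2w-1)p}$ through the Chinese Remainder Theorem and to adjoin $p$ further equi-difference codewords that exactly tile the remaining ``$\Z_{2w-1}$-part'' of the difference space; optimality in the tight case will then follow from a counting bound whose only substantial ingredient is a uniform lower bound $|d^{*}(S)|\ge 2w-2$ on the difference set of an arbitrary $w$-subset $S\subseteq\Z_{(2w-1)p}$, which I will extract from Kneser's theorem. Since $p$ is a prime with $p>2w-1$ we have $\gcd(2w-1,p)=1$, so I identify $\Z_{(2w-1)p}$ with $\Z_{2w-1}\times\Z_p$ via CRT, and I write $T=\{\pm1,\pm2,\dots,\pm(w-1)\}$ for the set of $2w-2$ integers occurring as differences of the standard length-$w$ progression. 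The two arithmetic facts about $T$ that drive everything are: reduction modulo $2w-1$ maps $T$ bijectively onto $\Z_{2w-1}^{*}$ (because $\{1,\dots,w-1\}\cup\{-(w-1),\dots,-1\}\equiv\{1,\dots,2w-2\}\pmod{2w-1}$), and reduction modulo $p$ maps $T$ injectively into $\Z_p^{*}$ (because $2w-2<p$).

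Write the assumed code in $\CACe(p,w)$ as $\{D_1,\dots,D_m\}$ with $D_i=\{0,h_i,2h_i,\dots,(w-1)h_i\}$, so that $d^{*}(D_i)=h_iT$ (mod $p$) and these $m$ sets are pairwise disjoint. For each $h\in\Z_p$ let $E_h$ be the equi-difference codeword with generator the CRT-element $(1,h)$, namely $E_h=\{(j\bmod(2w-1),\,jh\bmod p):0\le j\le w-1\}$, and for $1\le i\le m$ let $F_i$ be the equi-difference codeword with generator $(0,h_i)$. Then $|E_h|=|F_i|=w$, and a direct computation gives $d^{*}(E_h)=\{(t\bmod(2w-1),\,th\bmod p):t\in T\}$, which has $2w-2$ distinct elements, one for each first coordinate in $\Z_{2w-1}^{*}$, while $d^{*}(F_i)=\{0\}\times d^{*}(D_i)$. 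I claim $\C=\{E_h:h\in\Z_p\}\cup\{F_i:1\le i\le m\}$ is a code in $\CACe((2w-1)p,w)$ of size $p+m$. The difference sets are pairwise disjoint: any element of $d^{*}(E_h)$ has nonzero first coordinate while any element of $d^{*}(F_i)$ has first coordinate $0$, so $d^{*}(E_h)\cap d^{*}(F_i)=\emptyset$; for $i\ne j$ we have $d^{*}(F_i)\cap d^{*}(F_j)=\{0\}\times(d^{*}(D_i)\cap d^{*}(D_j))=\emptyset$; and for $h\ne h'$, an equality $(t\bmod(2w-1),\,th)=(s\bmod(2w-1),\,sh')$ with $t,s\in T$ forces $t=s$ (injectivity of reduction on $T$) and then $h=h'$ (as $t$ is a unit modulo $p$), a contradiction. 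In particular the $E_h$ are mutually distinct and distinct from each $F_i$, so $|\C|=p+m$, which proves the first assertion.

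For the optimality claim, suppose $(2w-2)\mid(p-1)$ and take the assumed code with $m=(p-1)/(2w-2)$; the construction already gives $K((2w-1)p,w)\ge p+(p-1)/(2w-2)$. For the reverse inequality it suffices to show $|S-S|\ge 2w-1$ for every $w$-subset $S\subseteq\Z_{(2w-1)p}$, since then the disjoint difference sets of any $\C'\in\CAC((2w-1)p,w)$ satisfy $(2w-2)|\C'|\le(2w-1)p-1=(2w-2)p+(p-1)$, i.e. $|\C'|\le p+(p-1)/(2w-2)$ (an integer by hypothesis). I apply Kneser's theorem to $S-S=S+(-S)$: letting $H$ be the stabilizer of $S-S$, one has $|S-S|\ge 2|S+H|-|H|$, where $|S+H|$ is a multiple of $|H|$ with $|S+H|\ge|S|=w$. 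Since $|H|$ divides $(2w-1)p$, there are three cases. If $|H|=1$ then $|S-S|\ge 2w-1$ at once. If $p\mid|H|$ then $|H|\ge p\ge 2w$, so $|S-S|\ge 2|H|-|H|=|H|\ge 2w$. If $|H|=d$ with $d\mid 2w-1$ and $d>1$, write $2w-1=dq$ with $q$ odd: for $q=1$, $|S+H|$ is a multiple of $2w-1$ exceeding $w$, hence $\ge 2w-1$, giving $|S-S|\ge 2w-1$; for $q=2r+1\ge3$ one computes $w=dr+\tfrac{d+1}{2}$, hence $\lceil w/d\rceil=r+1$, so $|S+H|\ge d(r+1)$ and $|S-S|\ge 2d(r+1)-d=d(2r+1)=2w-1$. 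Thus $|S-S|\ge 2w-1$ always, and combining the two bounds gives $K((2w-1)p,w)=p+(p-1)/(2w-2)$.

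The construction itself is routine once the CRT identification is set up; the single point requiring care there is that $T$ reduces bijectively onto $\Z_{2w-1}^{*}$, which is exactly what lets the $p$ codewords $E_h$ tile $\Z_{2w-1}^{*}\times\Z_p$ while remaining disjoint from the lifted codewords $F_i$. I expect the genuine obstacle to be the uniform bound $|d^{*}(S)|\ge 2w-2$ underlying the upper half of the optimality statement: it really uses that the only divisors of $(2w-1)p$ strictly between $1$ and $2w-1$ are proper divisors of the odd number $2w-1$, and the cleanest argument I see is the Kneser computation above; alternatively one may invoke the known bound $K(L,w)\le\lfloor(L-1)/(2w-2)\rfloor$, valid whenever $\Z_L$ carries no exceptional length-$w$ arithmetic progression, which $\Z_{(2w-1)p}$ does not.
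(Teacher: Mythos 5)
Your proof is correct and takes essentially the same route as the paper's own treatment (Theorems~\ref{thm:construction_2w-1pr} and \ref{thm:main_2w-1pr} specialized to $r=1$): the identical CRT construction with generators $(1,h)$, $h\in\Z_p$, and $(0,h_i)$ lifted from the length-$p$ code, and the identical optimality argument via Kneser's theorem applied to the stabilizer of $d(S)$, whose divisibility analysis (your three cases) is exactly what the paper packages as Corollary~\ref{cor:excpetional-bound} and Lemma~\ref{lem:nonexceptional} before the final counting bound.
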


\subsection{Mixed-Weight Conflict-Avoiding Codes}

A CAC assumes that all the users have the same number of transmission opportunities under the same throughput/delay performance requirement.
However, in heterogeneous systems~\cite{TF15,CLW19} with different individual performance requirements, users may be divided into several groups according to their \emph{priority}: users with higher priority should have higher probability to successfully transmit their packets in order to increase their throughput and reduce their access delay.
Motivated by this heterogeneity, we propose a generalization of CACs, called \emph{mixed-weight CACs} by increasing the weights of some codewords, so that those users assigned with larger-weight codewords of length $L$ are able to transmit more packets successfully during every $L$ consecutive slots and enjoy smaller worst-case delay within $L$ slots, which will be analyzed after Theorem~\ref{thm:mixed-2w-1pr-optimal}.

\begin{definition}\label{def:mixCAC}\rm
Let $L$ be a positive integer and $\mathcal{W}$, called \textit{weight-set}, be a set of positive integers.
A \emph{mixed-weight CAC} $\mathcal{C}$ of length $L$ with weight-set $\mathcal{W}$ is a collection of subsets of $\mathbb{Z}_L$ such that, (i) each subset is of size in $\mathcal{W}$; and (ii) $\mathcal{C}$ satisfies the disjoint-difference-set property as shown in~\eqref{eq:CAC}.
\end{definition}

Let $\text{CAC}(L,\mathcal{W})$ denote the class of all mixed-weight CACs of length $L$ with weight-set $\mathcal{W}$.
Similar to the design goal of CACs, the problem of mixed-weight CACs aims to maximize the total number of codewords that can be supported, when $L$ and $\mathcal{W}$ are given.
However, as the number of high priority users is relatively smaller than the others, it would be meaningful to maximize the number of low priority users when the numbers of high priority ones are fixed.

Let $\mathcal{W}^*=\{w^*_1,\ldots,w^*_t\}$ and $\mathcal{W}$ be two sets of positive integers with $w^*_1>\cdots> w^*_t> w,\, \forall w\in\mathcal{W}$. 
For a $t$-tuple with non-negative integers $\boldsymbol{n}=(n_1,\ldots,n_t)$, denote by $K(L,\mathcal{W};\mathcal{W}^*,\boldsymbol{n})$ be the maximum size of some code in $\text{CAC}(L,\mathcal{W}^*\cup\mathcal{W})$, in which the number of codewords with weight $w^*_i$ is exactly $n_i$ for all $i$.
A code $\mathcal{C}\in\text{CAC}(L,\mathcal{W}^*\cup\mathcal{W})$ is called \textit{optimal} if $|\mathcal{C}|=K(L,\mathcal{W};\mathcal{W}^*,\boldsymbol{n})$ and agrees the size-constraint of $w^*_i$-weight codewords, for each $i$.
We simply denote by $K(L,w;w^*,n)$ when $\mathcal{W}=\{w\}$ and $\mathcal{W}^*=\{w^*\}$.

\subsection{Main Contributions}

The considered length in this paper is of the form $L=ap^r$, where $\gcd(a,p)=1$ for some $a$ and some prime $p$.
Since $\gcd(a,p)=1$, we have $\mathbb{Z}_{ap^r}\cong\mathbb{Z}_{a}\times\mathbb{Z}_{p^r}$.
A natural bijection between $\mathbb{Z}_{ap^r}$ and $\mathbb{Z}_{a}\times\mathbb{Z}_{p^r}$ is via the Chinese Remainder Theorem (CRT)~\cite{IR90}, i.e., $\theta:\mathbb{Z}_{ap^r}\to\mathbb{Z}_{a}\times\mathbb{Z}_{p^r}$ by 
\begin{equation}\label{eq:CRT-correspondence}
\theta(x)=(x\,(\bmod\, a), x\,(\bmod\, p^r)).
\end{equation}
Therefore, a $w$-subset in $\mathbb{Z}_{ap^r}$ can be simply put as a $w$-subset in $\mathbb{Z}_{a}\times\mathbb{Z}_{p^r}$.

In this paper, we will provide various direct constructions to obtain CACs, say Theorems~\ref{thm:construction_direct}, \ref{thm:construction_wpr} and \ref{thm:construction_2w-1pr}, by extending the applicable code length from $ap$ to $ap^r$ for some odd prime $p$ and any positive integer $r$.
As elements in $\Z_{ap^r}$ can be viewed as ordered pairs in $\Z_a\times\Z_{p^r}$ via the CRT correspondence, the main idea is to express an element in $\Z_{p^r}$ as its $p$-ary representation.
As such, a $w$-subset in $\Z_{ap}$ can be raised to various $w$-subsets in $\Z_{ap^r}$ according to which layer the $w$-subsets lying one.
The concept ``layer'' will be defined in Section~\ref{sec:CAC_direct_p}.
The disjoint-difference-set property is then confirmed by some nice algebraic property of the $p$-ary representation, say Proposition~\ref{prop:p-ary}, involving the invertible elements in $\Z^*_{p^r}$.

By the help of some results in Additive Combinatorics, the sufficient conditions when the obtained CACs in Theorems~\ref{thm:construction_direct}, \ref{thm:construction_wpr} and \ref{thm:construction_2w-1pr} being optimal are characterized in Theorems~\ref{thm:main_w-1dpr}, \ref{thm:main_wpr} and \ref{thm:main_2w-1pr}, which are the generalizations of Theorems~\ref{thm:known_w-1p}, \ref{thm:known_wp} and \ref{thm:known_2w-1p}, respectively.
More precisely, as the evaluation of the number of exceptional codewords plays a key role in deriving upper bounds on the maximum size of a CAC, we study the stabilizers of the difference set of an exceptional codeword in more detail by applying Kneser's theorem carefully.
Let $H$ be the set of stabilizers of $d^*(S)\cup\{0\}$ for some exceptional $w$-subset $S$ in $\mathbb{Z}_L$.
We first derive a bound on $|H|$ in Corollary~\ref{cor:excpetional-bound}, and then claim that $|H|$ does not divide $w-1$ nor $2w-1$ in Lemma~\ref{lem:nonexceptional}.
This key lemma leads to a tight upper bound on $K(ap^r,w)$ for some $a$ and some prime $p$, since it can rule out many potential exceptional codewords. 
It is worth noting that the use of Kneser's theorem in deriving upper bounds on the maximum size of a CAC can be found in literature.
See the proofs of Theorem~\ref{thm:known_w-1p} and Theorem~\ref{thm:known_2w-1p}, i.e., \cite[Theorems 3 and 7]{SW10} and \cite[Theorem 13]{SWC10}, for instance.
It should be noted that this is the motivation of our revisit of Kneser's theorem in this paper.
However, instead of investigating the set of stabilizers of the difference set of an exceptional codeword, the prior works focused on studying the difference set itself, which loses some essential information and results in some additional conditions should be added to guarantee the tightness of the upper bounds they obtained.
For example, Lemma~\ref{lem:nonexceptional} can relax the two conditions that $w-1$ is an odd prime and $p\geq 2w-1$ for the optimality in Theorem~\ref{thm:known_w-1p}.
See Table~\ref{tab:comparison} for a comparison of our results and the corresponding previously known results.

\begin{table}[h]
\begin{center}
\begin{tabular}{l|l|l}
Reference & Applicable Length & Conditions for Optimality \\ \hline \hline
Theorem~\ref{thm:known_w-1p} \cite{SW10} & $L=(w-1)p$ & $p\geq 2w-1$ and $w-1$ is an odd prime \\
Theorem~\ref{thm:main_w-1dpr} & $L=\frac{w-1}{d}p^r$, $\forall r\geq 1$ and $d|(w-1)$ & $p\geq w$, $d|(w-1)$, and $2d|(p-1)$ \\ \hline
Theorem~\ref{thm:known_wp} \cite{MMSJ07} & $L=wp$ & N/A \\ 
Theorem~\ref{thm:main_wpr} & $L=wp^r$, $\forall r\geq 1$ & $m=(p-1)/(2w-2)\geq 1$ \\ \hline
Theorem~\ref{thm:known_2w-1p} \cite{SWC10} & $L=(2w-1)p$ & $m=(p-1)/(2w-2)>1$ \\ 
Theorem~\ref{thm:main_2w-1pr} & $L=(2w-1)p^r$, $\forall r\geq 1$ & $m=(p-1)/(2w-2)\geq 1$ \\ \hline 
\end{tabular}
\end{center}
\caption{A comparison between previously known results and ours.}
\label{tab:comparison}
\end{table}

Finally, as applications of those obtained optimal CACs, we derive some classes of optimal mixed-weight CACs.

Here is the summary of our contribution.
\begin{enumerate}[1.]
\item Generalize Theorem~\ref{thm:known_w-1p} in two aspects: (i) Extend the applicable length $L=(w-1)p$ to $\frac{w-1}{d}p^r$, for any factor $d$ of $w-1$ and integer $r\geq 1$; and (ii) Remove the conditions that $p\geq 2w-1$ and $w-1$ is an odd prime when the equality holds.
\item As an application of Theorem~\ref{thm:main_w-1dpr}, we obtain constructions of optimal CACs: 
	\begin{enumerate}[(i)]
	\item $\C\in\CAC((w-1)p^r,w)$ with $|\C|=(p^r-1)/2$ for infinitely many primes $p$. See Corollary~\ref{coro:w-1prw} for some examples with $4\leq w\leq 11$.
	\item $\C\in\CAC(2p^r,5)$ with $|\C|=(p^r-1)/4$ for all primes $p\equiv 5\ (\bmod\ 24)$.
	\item $\C\in\CAC(3p^r,7)$ with $|\C|=(p^r-1)/2$ for all primes $p\equiv 5\ (\bmod\ 8)$ with $10^{(p-1)/4}\equiv 1\ (\bmod\ p)$.
	\end{enumerate}
\item Extend the applicable length $L$ of Theorem~\ref{thm:known_wp} (resp., Theorem \ref{thm:main_wpr}) from $wp$ (resp., $(2w-1)p$) to $wp^r$ (resp., $(2w-1)p^r$), for any integer $r\geq 1$. 
In particular, we provide a sufficient condition for the constructed CACs of length $wp^r$ to be optimal, which is missing in Theorem~\ref{thm:known_wp}.
Analogous construction for CACs of length $p^r$ and its optimality are given as well.
\item  We relax the constant-weight constraint of traditional CACs to define mixed-weight CACs for the first time. We propose a general construction of mixed-weight CACs consisting of three or more different weights.
Finally, we provide three classes of optimal mixed-weight CACs containing two weights.
\end{enumerate}

The rest of this paper is organized as follows.
We set up some notations and useful results in Additive Combinatorics in Section~\ref{sec:Kneser}.
A new class of optimal CACs based on a direct construction is provided in Section~\ref{sec:CAC_direct}, while three classes of optimal CACs based on extending smaller-length CACs are proposed in~Section~\ref{sec:CAC-recursive}.
Section~\ref{sec:MW-CAC} is devoted to derive optimal mixed-weight CACs.
Some concluding remarks are given in Section~\ref{sec:conclusion}.

\section{Additive Combinatorics and Kneser's Theorem}\label{sec:Kneser}

A non-empty subset $S\subseteq\mathbb{Z}_L$ is said to be \textit{equi-difference} with \textit{generator} $g\in\mathbb{Z}_L$ if it is in the form $$\left\{0,g,2g,\ldots,(|S|-1)g\right\}.$$
Obviously, $|d^*(S)|\leq 2(|S|-1)$ when $S$ is equi-difference.
$S$ is called \textit{exceptional} if $|d^*(S)|<2(|S|-1)$.
Observe that in a CAC of length $L$, the union of all difference sets is a subset of $\mathbb{Z}^*_L$.
Therefore, if any $w$-subset of $\mathbb{Z}_L$ is not exceptional, then $K(L,w)\leq\lfloor\frac{L-1}{2w-2}\rfloor$.
So it is desired to characterize exceptional subsets in more details.

We need some results on Additive Combinatorics~\cite{TV06}.
For two subsets $A,B\subseteq\mathbb{Z}_L$ and an element $x\in\mathbb{Z}_L$, define
\begin{align*}
x+A &\triangleq \{x+a:\,a\in A\}, \\
A+B &\triangleq \{a+b:\,a\in A, b\in B\}, \text{ and} \\
A-B &\triangleq \{a-b:\,a\in A, b\in B\}.
\end{align*}
Moreover, define
\begin{align*}
d(A)\triangleq A-A.
\end{align*}
Note that $0\in d(A)$ and $d(A)\setminus\{0\}=d^*(A)$, the set of differences of $A$ given in~\eqref{eq:difference-set-def}.

Let $T$ be a non-empty subset in $\mathbb{Z}_L$.
The set of \textit{stabilizers} of $T$ in $\mathbb{Z}_L$ is defined as
\begin{align*}
\HH(T) \triangleq \{h\in\mathbb{Z}_L:\,h+T=T\}.
\end{align*}
It is obvious that $0\in\HH(T)$ and $\HH(T)$ is a subgroup of $\mathbb{Z}_L$.
So, it holds that $|\HH(T)|$ divides $L$ by Lagrange's theorem.
$T$ is said to be \textit{periodic} if $\HH(T)$ is non-trivial, that is, $\HH(T)\neq\{0\}$.
Here are some well-known (e.g., see~\cite{SWC10}) properties of the set of stabilizers. 

\begin{proposition}\label{prop:stabilizer}
Let $T\subseteq\mathbb{Z}_L$ be non-empty.
\begin{enumerate}[(i)]
\item $\HH(T)$ is a subgroup of $\mathbb{Z}_L$, and thus $|\HH(T)|$ divides $L$.
\item If $0\in T$, then $\HH(T)\subseteq T$.
\end{enumerate}
\end{proposition}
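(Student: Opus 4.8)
The plan is to verify both parts directly from the definition $\HH(T)=\{h\in\mathbb{Z}_L:\,h+T=T\}$; both are elementary, and the only steps requiring a little care are closure under inverses and the invocation of Lagrange's theorem.

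For part (i), I would first observe that $0\in\HH(T)$ since $0+T=T$, so $\HH(T)$ is nonempty. Next, given $h_1,h_2\in\HH(T)$, I would compute $(h_1+h_2)+T=h_1+(h_2+T)=h_1+T=T$, using associativity of addition in $\mathbb{Z}_L$ (applied elementwise to the set) together with $h_2+T=T$ and then $h_1+T=T$; hence $h_1+h_2\in\HH(T)$, which gives closure. Since $\mathbb{Z}_L$ is finite, a nonempty subset closed under the group operation is automatically a subgroup; alternatively, from $h+T=T$ one obtains $T=(-h)+(h+T)=(-h)+T$, so $-h\in\HH(T)$, exhibiting inverses explicitly. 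Therefore $\HH(T)$ is a subgroup of $\mathbb{Z}_L$, and by Lagrange's theorem $|\HH(T)|$ divides $|\mathbb{Z}_L|=L$.

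For part (ii), assuming $0\in T$, I would take an arbitrary $h\in\HH(T)$ and note that $h=h+0\in h+T=T$; since $h$ was arbitrary, this yields $\HH(T)\subseteq T$.

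I do not anticipate any real obstacle here: the statement is a routine warm-up, and its purpose is to license later arguments to treat $\HH(T)$ as a subgroup whose order divides $L$ and, whenever $0\in T$ (as for difference sets $d(A)$ with $0\in A$, or for codewords normalized to contain $0$), as a subset of $T$ itself. The only thing to be careful about is not conflating the set-translation $h+T$ with elementwise addition in a way that hides the associativity step used in proving closure.
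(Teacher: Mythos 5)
Your proof is correct and complete: the closure computation $(h_1+h_2)+T=h_1+(h_2+T)=T$, the explicit inverse argument (or finiteness), Lagrange's theorem for the divisibility claim, and $h=h+0\in h+T=T$ for part (ii) are exactly the standard arguments. The paper itself states this proposition as well-known (citing \cite{SWC10}) and gives no proof, so your write-up simply supplies the routine verification the authors omitted; there is nothing to compare beyond noting that your approach is the expected one.
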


For example, consider the set $T=\{0,1,4,5,6,9\}\subseteq\Z_{10}$.
We have $\HH(T)=\{0,5\}$, which is a subgroup of $\Z_{10}$.
Note that $T=d(S)$, where $S=\{0,1,5,6\}$ is an exceptional subset in $\Z_{10}$.
One can see that $S+H=\{0,5\}\uplus\{1,6\}=H\uplus(1+H)$, a disjoint union of cosets of $H$.

In what follows, we will discuss $\HH(d(S))$ in more detail.
Before that, let us revisit Kneser's theorem~\cite[Theorem 5.5]{TV06}, which plays an important role in the derivation of the upper bound on the maximum number of codewords in a CAC.

\begin{theorem}[\cite{Kneser53,TV06}]\label{thm:Kneser}
Let $A$ and $B$ be two non-empty subsets in $\mathbb{Z}_L$, and let $H=\HH(A+B)$.
Then, 
\begin{equation}\label{eq:Kneser1}
|A+B|\geq |A+H|+|B+H|-|H|.
\end{equation}
In particular, 
\begin{equation}\label{eq:Kneser2}
|A+B|\geq |A|+|B|-|H|.
\end{equation}
\end{theorem}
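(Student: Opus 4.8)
Kneser's addition theorem is classical~\cite{Kneser53,TV06}, and I would give the standard argument: induction on $|B|$ powered by the Dyson $e$-transform, with a passage to a quotient group to do the coset bookkeeping. Write $H=\HH(A+B)$, a subgroup of $\Z_L$ by Proposition~\ref{prop:stabilizer}, and record the easy facts that $\HH(A)\subseteq H$ and $\HH(B)\subseteq H$ (any translation fixing $A$, or $B$, fixes $A+B$) and that $A+B$ is a union of $H$-cosets. The base case $|B|=1$ is immediate: then $H=\HH(A+b)=\HH(A)$, so $A=A+H$ and $|A+B|=|A|=|A+H|=|A+H|+|B+H|-|H|$, and \eqref{eq:Kneser2} follows from $|A+H|\ge|A|$.

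For $|B|\ge2$, the plan is to fix $a\in A$, $b\in B$ and form the Dyson transform $A'=A\cup\bigl(B+(a-b)\bigr)$ and $B'=B\cap\bigl(A-(a-b)\bigr)$. I would verify the routine identities $A'+B'\subseteq A+B$, $|A'|+|B'|=|A|+|B|$, $A\subseteq A'$, $\emptyset\neq B'\subseteq B$, and $B'\subsetneq B$ whenever $B+(a-b)\not\subseteq A$. If for \emph{some} admissible $a,b$ this last inclusion fails, then $|B'|<|B|$ and the induction hypothesis applies to $(A',B')$: with $H'=\HH(A'+B')$ it gives $|A+B|\ge|A'+B'|\ge|A'+H'|+|B'+H'|-|H'|$. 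When $H'=H$ (so the $H$-coset structure is untouched) one finishes by comparing $A',B'$ with $A,B$ one $H$-coset at a time, using $|A'|+|B'|=|A|+|B|$ and the $H$-periodicity of $A+B$. If instead $B+(a-b)\subseteq A$ for every admissible pair, a short separate argument shows $A$ is itself highly periodic and the bound holds directly.

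The main obstacle is the case where the transform changes the stabilizer, i.e.\ $H'\neq H$. Here one must reconcile the $H$-periodic set $A+B$ with the $H'$-periodic set $A'+B'\subseteq A+B$, which I would do by working in the quotient by the subgroup generated by $H$ and $H'$ and showing that ``saturating'' $A$ to the smallest such-periodic superset costs no more than the transform gained, closing the estimate via a nested induction on the index $[\Z_L:H]$ (equivalently, Kneser's original saturation step). Ruling out a deficiency in this incomparable-stabilizers situation is the genuinely delicate point; the full bookkeeping is exactly that of Kneser's proof and of~\cite[Theorem~5.5]{TV06}, which I would reproduce or cite for the remaining details.
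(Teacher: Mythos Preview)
The paper does not prove Theorem~\ref{thm:Kneser} at all: it is stated with the attribution~\cite{Kneser53,TV06} and then used as a black box (in Corollary~\ref{cor:excpetional-bound}, Lemma~\ref{lem:nonexceptional}, and the optimality proofs). So there is nothing in the paper to compare your argument against; you are supplying a proof where the authors simply cite one.

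That said, your sketch is the standard Dyson $e$-transform route and is broadly correct in outline. A couple of small points: your base case is fine, and the transform identities you list are the right ones. The place where your write-up is genuinely incomplete is exactly where you flag it---when $H'\neq H$, the ``saturation'' bookkeeping is the heart of Kneser's argument, and your description (``work in the quotient by the subgroup generated by $H$ and $H'$'' together with a nested induction) is more of a signpost than a proof. If you intend to include a self-contained proof rather than a citation, that step needs to be written out in full; otherwise, since the paper itself is content to cite~\cite{Kneser53} and~\cite[Theorem~5.5]{TV06}, doing the same is entirely in keeping with its treatment.
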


By applying Theorem~\ref{thm:Kneser}, we immediately have the following corollary.

\begin{corollary}\label{cor:excpetional-bound}
Let $S$ be a $w$-subset in $\mathbb{Z}_L$.
If $S$ is exceptional, i.e., $|d^*(S)|<2w-2$, then $2\leq |\HH(d(S))|\leq 2w-2$.
\end{corollary}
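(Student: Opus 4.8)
The plan is to apply Kneser's theorem (Theorem~\ref{thm:Kneser}) with $A=B=S$, so that $A+B=S+S$ and $H=\HH(S+S)$. The first thing to notice is that $\HH(S+S)=\HH(d(S))$: indeed, $d(S)=S-S$ is simply a translate of $S+S$ when we replace $S$ by $-S$, and more directly, $h+(S+S)=S+S$ if and only if $h+(S-S)=S-S$ because translating one copy of $S$ by $h$ and the other by $-h$ are related operations; in any case the stabilizer subgroup is the same. So writing $H=\HH(d(S))$, Kneser gives $|S+S|\ge |S+H|+|S+H|-|H|=2|S+H|-|H|$.

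Next I would relate $|S+S|$ to $|d^*(S)|$. Since $0\in S$, we have $S\subseteq S+S$ and $-S\subseteq S+S$... actually the cleaner route is through $d(S)$: we have $|d(S)|=|d^*(S)|+1\le 2w-2$ by the exceptional hypothesis. Applying Kneser with $A=S$, $B=-S$ (noting $\HH(d(S))=H$ again) yields $|d(S)|=|S-S|\ge |S+H|+|{-}S+H|-|H|=2|S+H|-|H|$ since $|{-}S+H|=|S+H|$. Because $S\subseteq S+H$, we have $|S+H|\ge w$, and since $H\subseteq S+H$ with $0\in H$ we also get $|S+H|\ge |H|$. Combining, $2w-2\ge |d(S)|\ge 2|S+H|-|H|\ge 2w-|H|$, which forces $|H|\ge 2$. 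This gives the lower bound.

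For the upper bound $|H|\le 2w-2$: since $0\in S$, Proposition~\ref{prop:stabilizer}(ii) applies to $d(S)$ (which contains $0$), giving $H=\HH(d(S))\subseteq d(S)$, hence $|H|\le |d(S)|=|d^*(S)|+1\le 2w-2$ by the exceptional assumption. That is immediate. So the only real content is the lower bound, and the only subtlety there is being careful about the identification $\HH(S+S)=\HH(S-S)$ and about the inequality $|S+H|\ge w$, both of which follow from $0\in H$ and $S$ being a union-friendly set under translation; $S+H$ is a union of $H$-cosets containing $S$, so its size is a multiple of $|H|$ that is at least $w$.

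The main obstacle — really a bookkeeping point rather than a genuine difficulty — is making sure Kneser is applied to the right sumset so that the stabilizer appearing in the bound is exactly $\HH(d(S))$, and then extracting $|S+H|\ge w$ cleanly. One should also double-check the edge behavior: if $S$ were itself periodic then $H$ could be larger, but the chain $2w-2\ge 2|S+H|-|H|$ together with $|S+H|\ge |H|$ still only yields $|H|\ge 2$, and the exceptional hypothesis is what prevents the trivial $H=\{0\}$ from being consistent. I expect the whole argument to be three or four lines once the sumset identification is fixed.
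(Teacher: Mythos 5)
Your final argument is correct and is essentially the paper's own proof: the upper bound comes from Proposition~\ref{prop:stabilizer}(ii) since $0\in d(S)$, and the lower bound from Kneser's theorem applied to $A=S$, $B=-S$, whose stabilizer is exactly $\HH(d(S))$ (the paper invokes \eqref{eq:Kneser2} directly to get $|d(S)|\geq 2w-|\HH(d(S))|$, while you pass through $|S+H|\geq w$ via \eqref{eq:Kneser1} — an immaterial difference). Your opening digression claiming $\HH(S+S)=\HH(d(S))$ is not actually justified, but you correctly abandon that route, so it does not affect the proof.
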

\begin{proof}
Firstly, by definition, $d(S)=d^*(S)\uplus\{0\}$.
It follows that $|d(S)|=|d^*(S)|+1\leq 2w-2$.
Since $0\in d(S)$, by Proposition~\ref{prop:stabilizer}(ii), $|\HH(d(S))|\leq |d(S)|\leq 2w-2$.

Secondly, since $d(S)=S-S$, by plugging $A=S$ and $B=-S$ into~\eqref{eq:Kneser2}, we have
\begin{align*}
2w-2 & \geq |d(S)| = |S+(-S)| \\
& \geq |S|+|-S|-|\HH(d(S))| =2w-|\HH(d(S))|,
\end{align*}
which implies that $|\HH(d(S))|\geq 2$.
\end{proof}

We find some codewords will not be exceptional if the size of stabilizers of their difference sets satisfies some certain divisibility relations.

\begin{lemma}\label{lem:nonexceptional}
Let $L,w$ be positive integers.
For any $w$-subset $S\subseteq\Z_L$, if $|\mathsf{H}(d(S))|$ divides $w-1$ or $2w-1$, then $S$ is not exceptional.
\end{lemma}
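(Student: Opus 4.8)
The plan is to argue by contradiction. Suppose $S=\{0,s_1,\ldots,s_{w-1}\}$ is a $w$-subset of $\mathbb Z_L$ that is exceptional, and write $T=d(S)=S-S$ and $H=\mathsf H(T)$. By Corollary~\ref{cor:excpetional-bound} we know $2\le|H|\le 2w-2$ and $|H|$ divides $L$. Since $0\in T$, Proposition~\ref{prop:stabilizer}(ii) gives $H\subseteq T$, so $T$ is a union of cosets of $H$; write $T=d(S)+H$ and note that $S+H$ is likewise a union of cosets of $H$ (because $(S+H)-(S+H)=T+H=T$, we actually have $\mathsf H(S+H)=H$ as well). Let $k=|(S+H)/H|$ be the number of $H$-cosets meeting $S$, so $|S+H|=k|H|$ and clearly $k|H|\ge|S|=w$, i.e. $k\ge w/|H|$.

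Next I would feed $A=S$, $B=-S$ into the \emph{sharp} form of Kneser's theorem, inequality~\eqref{eq:Kneser1}: since $\mathsf H(S+(-S))=\mathsf H(d(S))=H$ and $|{-S}+H|=|S+H|=k|H|$, we get
\[
|d(S)|=|S+(-S)|\ \ge\ |S+H|+|{-S}+H|-|H|\ =\ 2k|H|-|H|\ =\ (2k-1)|H|.
\]
Because $S$ is exceptional, $|d(S)|=|d^*(S)|+1\le 2w-2$, hence $(2k-1)|H|\le 2w-2$, that is
\[
|H|\ \le\ \frac{2w-2}{2k-1}.
\]
Combined with $k\ge w/|H|$, i.e. $k|H|\ge w$, this pins $k$ and $|H|$ into a narrow range; in particular $(2k-1)|H|\le 2w-2<2k|H|$ forces $k|H|$ to lie in the half-open interval $[w, (2k-1)|H|/(2k-2)\cdot w)$... more usefully, from $2k|H|>2w-2$ and $(2k-1)|H|\le 2w-2$ we get $|H|>2w-2-(2k-1)|H|\ge 0$ automatically, and subtracting the two bounds gives $|H|\le 2w-2-(2k-1)|H|+ |H|$, so the clean consequence is
\[
(2k-1)|H|\ \le\ 2w-2\ <\ 2k|H|,\qquad\text{equivalently}\qquad 2w-1 - |H|\ \le\ (2k-1)|H|\ \le\ 2w-2 .
\]
Now suppose toward the contradiction that $|H|$ divides $w-1$, say $w-1=c|H|$ for some positive integer $c$ (note $|H|\ge 2$ and $|H|\le 2w-2$ so $c\ge 1$). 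Then $2w-2=2c|H|$, and $(2k-1)|H|\le 2c|H|<2k|H|$ forces $2k-1\le 2c<2k$, which is impossible since $2c$ is even and strictly between the consecutive integers $2k-1$ and $2k$. Similarly if $|H|$ divides $2w-1$, write $2w-1=c|H|$; here $c$ is necessarily odd because $2w-1$ is odd and $|H|\mid 2w-1$... wait, $|H|$ could be even only if it divides an odd number, which is impossible, so in the $2w-1$ case $|H|$ is odd. Then from $(2k-1)|H|\le 2w-2=c|H|-1<c|H|$ we get $2k-1\le c-1$ hence $2k\le c$, while $2w-1=c|H|<2k|H|$ gives $c<2k$, so $c=2k-?$... the two give $2k\le c<2k$, again impossible. (I will present both cases uniformly: the exceptional condition squeezes $(2k-1)|H|$ strictly between $2w-3$ and $2w-1$ in ``$|H|$-units'', and divisibility of $w-1$ or $2w-1$ by $|H|$ makes the target value land exactly on a forbidden parity/residue, contradiction.)

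The main obstacle I anticipate is handling the edge cases where $k$ or $|H|$ is extremal — in particular $k=1$ (so $S\subseteq$ a single coset of $H$, forcing $|H|\ge w$ and $|d(S)|\ge|H|$, which interacts with $|d(S)|\le 2w-2$ to give $w\le|H|\le 2w-2$) and the possibility $|H|=2w-2$ exactly. In the $k=1$ subcase the bound $(2k-1)|H|=|H|\le 2w-2$ is not by itself contradictory, so I would need the extra observation that $S$ lying in one $H$-coset (with $0\in S$ so $S\subseteq H$) makes $S$ an equi-difference-like configuration inside a cyclic group of order $|H|\mid L$; then a short direct argument, or a second application of Kneser \emph{inside} $H$, rules out $|H|\mid w-1$ and $|H|\mid 2w-1$. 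I will also need to double-check that $\mathsf H(S+H)=H$ (not merely $\supseteq$), which follows because $S+H$ and $-S+H$ have the same stabilizer and $(S+H)+(-S+H)=d(S)+H=d(S)$ has stabilizer exactly $H$. Once these corner cases are dispatched, the parity argument above closes every remaining case, and the lemma follows.
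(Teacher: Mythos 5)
Your argument is correct and is essentially the paper's proof: apply the sharp form of Kneser's theorem \eqref{eq:Kneser1} with $A=S$, $B=-S$ to get $|d(S)|\ge 2|S+H|-|H|$, use that $|H|$ divides $|S+H|\ge w$, and derive an integrality/parity contradiction with $|d(S)|\le 2w-2$ under either divisibility hypothesis. The edge cases you worry about ($k=1$, $|H|=2w-2$) need no separate treatment, since your chain $(2k-1)|H|\le 2w-2<2k|H|$ together with $|H|\mid(w-1)$ or $|H|\mid(2w-1)$ already yields the contradiction for every $k\ge 1$.
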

\begin{proof}
Suppose to the contrary that $S$ is exceptional.
For notational convenience, denote by $H_S=\mathsf{H}(d(S))$.

We first consider the case when $|H_S|$ divides $w-1$.
Assume $w-1=k|H_S|$ for some integer $k\geq 1$.
Since $S$ is exceptional, we have 
\begin{equation}\label{eq:nonexcpetional_w-1_1}
|d(S)|=|d^*(S)|+1\leq 2w-2=2k|H_S|.
\end{equation}
Since $H_S$ is a subgroup of $\mathbb{Z}_{L}$, we have $H_S=-H_S$, which implies that $|-S+H_S|=|-(S+H_S)|=|S+H_S|$.
Plugging $A=S$ and $B=-S$ into \eqref{eq:Kneser1} yields that
\begin{align}
|d(S)| = |S+(-S)| &\geq |S+H_S| + |-S+H_S| - |H_S| \notag \\
&= 2|S+H_S| - |H_S|. \label{eq:nonexcpetional_w-1_2}
\end{align}
As $S+H_S$ is a disjoint union of cosets of $H_S$, $|H_S|$ divides $|S+H_S|$.
On the other hand, $|S+H_S|\geq |S|=w=k|H_S|+1$.
Hence we have $|S+H_S|\geq (k+1)|H_S|$.
It follows from \eqref{eq:nonexcpetional_w-1_1} and \eqref{eq:nonexcpetional_w-1_2} that
\begin{align*}
2k|H_S| \geq |d(S)|\geq 2|S+H_S|-|H_S| \geq (2k+1)|H_S|,
\end{align*}
which is a contradiction.

Now, consider the case when $|H_S|$ divides $2w-1$.
Assume $2w-1=h|H_S|$, for some odd $h\geq 1$.
Since $S$ is exceptional, we have
\begin{equation}\label{eq:nonexceptional_2w-1_1}
|d(S)|=|d^*(S)|+1 \leq 2w-2 = h|H|-1.
\end{equation}
Observe that $|S+H_S|\geq|S|=w=\frac{1}{2}(h|H_S|+1)>\frac{h}{2}|H_S|$.
Since $|H_S|$ divides $|S+H_S|$ and $h$ is odd, we further have $|S+H_S|\geq\frac{h+1}{2}|H_S|$.
Following the same argument in the derivation of \eqref{eq:nonexcpetional_w-1_2}, we have
\begin{equation}\label{eq:nonexceptional_2w-1_2}
|d(S)| \geq 2|S+H_S|-|H_S| \geq h|H_S|.
\end{equation}
It follows from \eqref{eq:nonexceptional_2w-1_1} and \eqref{eq:nonexceptional_2w-1_2} that $h|H_S|-1\geq |d(S)| \geq h|H_S|$, a contradiction occurs.
\end{proof}

\begin{example}\rm
Let $L=60$ and $w=4$.
Consider the following codewords $S_1,\ldots,S_5$ and the corresponding $d(S_i)$ and $\HH(d(S_i))$ for each $i$.
\begin{center}
\begin{tabular}{c|l|l|l}
$i$ & $S_i$ & $d(S_i)$ & $\HH(d(S_i))$ \\ \hline
$1$ & $\{0,15,30,45\}$ & $\{0,15,30,45\}$ & $\{0,15,30,45\}$ \\
$2$ & $\{0,12,24,36\}$ & $\{0,12,24,36,48\}$ & $\{0,12,24,36,48\}$ \\
$3$ & $\{0,10,20,30\}$ & $\{0,10,20,30,40,50\}$ & $\{0,10,20,30,40,50\}$ \\
$4$ & $\{0,8,30,38\}$ & $\{0,8,22,30,38,52\}$ & $\{0,30\}$ \\
$5$ & $\{0,8,16,24\}$ & $\{0,8,16,24,36,44,52\}$ & $\{0\}$
\end{tabular}
\end{center}
The first four codewords are exceptional and have $2\leq |\HH(d(S_i))|\leq 2w-2=6$, while the last one is non-exceptional and has $|\HH(d(S_5))|=1$.
This matches the assertions of Corollary~\ref{cor:excpetional-bound} and Lemma~\ref{lem:nonexceptional}.
Note that one can check that each $\HH(d(S_i))$ is a subgroup of $\Z_{60}$.
Note also that there is no codeword $S$ with $|\HH(d(S))|=3$ due to Lemma~\ref{lem:nonexceptional}.
So, this example covers all possible values $|\HH(d(S))|=2,4,5,6$ when $S$ is exceptional.
\end{example}

We recall a fundamental result in Group Theory, which will be used in subsequent sections.

\begin{proposition}\label{prop:unique_subgroup}
The subgroup of $\mathbb{Z}_L$ is uniquely determined by its order.
More precisely, for any divisor $d$ of $L$, the unique subgroup of $\mathbb{Z}_L$ with order $d$ is $\{0,L/d,2L/d,\ldots,(d-1)L/d\}$.
\end{proposition}

Finally, we summarize the useful notations throughout this paper in Table~\ref{tab:notation}.
Note that some notations will be defined in subsequent sections.

\begin{table}[ht]
\begin{center}
\begin{tabular}{|c|c|}
\hline
$\CAC(L,w)$ & The class of all CACs of length $L$ with weight $w$ \\ \hline
$\CACe(L,w)$ & The class of all equi-difference CACs of length $L$ with weight $w$ \\ \hline
$\CAC(L,\mathcal{W})$ & The class of all mixed-weight CACs of length $L$ with weight-set $\mathcal{W}$ \\ \hline
$K(L,w)$ & The maximum size of a code in $\CAC(L,w)$ \\ \hline
$K^e(L,w)$ & The maximum size of a code in $\CACe(L,w)$ \\ \hline
\multirow{2}{2cm}{$K(L,w;w^*,n)$} 
 & The maximum size of a code in $\CAC(L,\{w,w^*\})$ where \\
 & the number of codewords with weight $w^*$ is exactly $n$ \\ \hline
$\Z_L$ & $\{0,1,2,\ldots,L-1\}$, the ring of residue modulo $L$ \\ \hline
$\Z^*_L$ & The set of nonzeros in $\Z_L$ \\ \hline
$\Z^\times_L$ & The set of units in $\Z^*_L$ \\ \hline
$\theta$ & The CRT correspondence \\ \hline
$d(S)$ & $S-S$, the set of differences of $S$ \\ \hline
$d^*(S)$ & $d(S)\setminus\{0\}$, the set of nonzero differences of $S$ \\ \hline
$\HH(T)$ & The set of stabilizers of $T$ \\ \hline
$\left(\frac{a}{p}\right)$ & The Legendre symbol of $a$ on $\Z_p$ \\ \hline
\multirow{2}{2mm}{$\mathsf{L}_t$} 
 & The $t$-th layer in $\Z^*_{p^r}$, consisting of elements whose nonzero \\
 & least significant digit in their $p$-representation is $p^t$  \\ \hline
\multirow{2}{7mm}{$\mathcal{S}_r(A)$} 
 & The set of elements in $\Z^*_{p^r}$ whose nonzero least significant \\
 & digit values in their $p$-ary representation are in $A$ \\ \hline
\multirow{2}{8mm}{$H^e(p)$} 
 & $\langle\alpha^e\rangle$, the multiplicative subgroup of $\Z^\times_p$ generated by $\alpha^e$, \\
 & where $\alpha$ is a primitive element \\ \hline
$H^e_j(p)$ & The coset $\alpha^j H^e(p)$ \\ \hline
$\mathcal{H}^e(p)$ & The collection of cosets of $H^e(p)$ \\ \hline
\end{tabular}
\end{center}
\caption{Main notations used in this paper}
\label{tab:notation}
\end{table}

\section{New Optimal CACs based on Direct Constructions}\label{sec:CAC_direct}

\subsection{$p$-ary representation}\label{sec:CAC_direct_p}

We first introduce the $p$-ary representation of a positive integer and its useful properties.

Given a positive integer $n$, let 
\begin{align*}
\mathbb{Z}^{\times}_n\triangleq\{x\in\mathbb{Z}_n:\,\gcd(x,n)=1\}.
\end{align*}
$\mathbb{Z}^{\times}_n$ is the set of \emph{units} (i.e., invertible elements) in $\mathbb{Z}^*_n$, and thus is a multiplicative group.
Note that $\mathbb{Z}^{\times}_n=\mathbb{Z}^{*}_n$ when $n$ is a prime.

Let $p$ be an odd prime and $r$ a positive integer.
For $c\in\mathbb{Z}_{p^r}$, consider the $p$-ary representation $c=c_0+c_1p+\cdots+c_{r-1}p^{r-1}$.
For $t=0,1,\ldots,r-1$, let $\mathsf{L}_t$ be the collection of $c\in\mathbb{Z}^{*}_{p^r}$ whose nonzero least significant digit in its $p$-ary representation is $p^t$.
Obviously, $|\mathsf{L}_t|=(p-1)p^{r-t-1}$, and $\mathsf{L_0},\mathsf{L}_1,\ldots,\mathsf{L}_{r-1}$ form a partition of $\mathbb{Z}^{*}_{p^r}$, i.e., $\mathbb{Z}_{p^r}^{*}=\mathsf{L}_0\uplus\mathsf{L}_1\uplus\cdots\uplus\mathsf{L_{r-1}}$.
Integers in $\mathsf{L}_t$ are called in the $t$-th \textit{layer}.

For a non-empty $A\subseteq\mathbb{Z}^{*}_p$, we arise it to a subset in $\mathbb{Z}^{*}_{p^r}$, for any positive integer $r$, by defining
\begin{equation}\label{eq:S_r(A)-def}
\mathcal{S}_r(A) \triangleq A_0\uplus A_1\uplus \cdots \uplus A_{r-1},
\end{equation}
where 
\begin{equation*}
A_t = \{c\in\mathsf{L}_t:\,c_t\in A\}.
\end{equation*}
$\mathcal{S}_r(A)$ is the collection of elements in $\mathbb{Z}^{*}_{p^r}$ whose nonzero least significant digit values in their $p$-ary representation are in $A$.
Obviously, $|A_t|=|A|p^{r-1-t}$ for each $t$, and thus 
\begin{equation}\label{eq:S_r(A)-size}
|\mathcal{S}_r(A)|=|A|\left(1+p+\cdots+p^{r-1}\right) =|A|\frac{p^r-1}{p-1}.
\end{equation}

Here is a useful property of the $p$-ary representation of $c\in\mathbb{Z}^{*}_{p^r}$, where the proof is straightforward and is omitted.

\begin{proposition}\label{prop:p-ary}
Let $p$ be an odd prime and $r$ a positive integer.
For $j\in\mathsf{L}_0$ and $c\in\mathsf{L}_t$, $0\leq t\leq r-1$, one has $jc\in\mathsf{L}_t$ and
\begin{equation}\label{eq:p-ary}
(jc)_t = j_0\cdot c_t \ (\bmod\, p).
\end{equation}
Note that $\mathsf{L}_0=\mathbb{Z}^{\times}_{p^r}$, which is the set of units in $\mathbb{Z}^*_{p^r}$.
\end{proposition}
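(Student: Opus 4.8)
The plan is to reduce the claim to a single factorization that exposes the $p$-adic valuation of $c$. First I would write $c\in\mathsf{L}_t$ in the form $c=p^t c'$, where $c'=c_t+c_{t+1}p+\cdots+c_{r-1}p^{r-t-1}$ is the integer obtained by deleting the zero digits of $c$ below position $t$ and shifting the remaining digits down. Since $c_t\neq 0$ we have $c'\equiv c_t\not\equiv 0\pmod p$, so $\gcd(c',p)=1$; equivalently, $t$ is exactly the $p$-adic valuation of the integer $c$ with $0\le c<p^r$. Because $j\in\mathsf{L}_0=\mathbb{Z}^{\times}_{p^r}$, we also have $\gcd(j,p)=1$ and $j\equiv j_0\pmod p$ with $j_0\in\{1,\dots,p-1\}$.

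Next I would multiply and reduce: as integers $jc=p^t(jc')$, and writing $jc'=qp^{r-t}+s$ with $s=(jc'\bmod p^{r-t})$ gives $jc\equiv p^t s\pmod{p^r}$ with $0\le p^t s<p^r$. Hence the $p$-ary digits of $jc\bmod p^r$ at positions $0,1,\dots,t-1$ are all zero, and the digits at positions $t,\dots,r-1$ are precisely the digits of the integer $s=(jc'\bmod p^{r-t})$ shifted up by $t$. In particular the least significant digit of $jc$ sits at position $t$ and equals $s\bmod p=(jc')\bmod p=(j_0c_t)\bmod p$. Since $p$ is prime and $j_0,c_t$ are both nonzero modulo $p$, the product $j_0c_t\bmod p$ is nonzero, so this digit is genuinely the least significant \emph{nonzero} digit of $jc$. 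This shows $jc\in\mathsf{L}_t$ and $(jc)_t=j_0c_t\pmod p$, which is exactly \eqref{eq:p-ary}; the final assertion $\mathsf{L}_0=\mathbb{Z}^{\times}_{p^r}$ is immediate from the definition, since $c_0\neq 0$ is equivalent to $\gcd(c,p)=1$.

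The only step that deserves to be stated explicitly rather than waved past is the interaction between the shift by $p^t$ and the reduction modulo $p^r$: one must observe that reducing $p^t m$ modulo $p^r$ coincides with multiplying $(m\bmod p^{r-t})$ by $p^t$, so passing from $c$ to $jc$ neither disturbs any digit at position $\ge t$ nor produces a carry into positions below $t$. Everything else is routine: the digit map $c\mapsto c_t$ and the layer sets $\mathsf{L}_t$ are defined so that the identities above are read off directly, and primality of $p$ enters only to guarantee that a product of two nonzero residues modulo $p$ is nonzero. This is why the statement records the proof as straightforward.
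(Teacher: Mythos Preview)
Your argument is correct: the paper itself omits the proof, stating only that it is ``straightforward,'' and your write-up supplies exactly the routine $p$-adic valuation computation that justifies the claim. There is nothing to compare against, and your closing remark already identifies why the authors felt no proof was needed.
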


\subsection{A direct construction}


Let $p$ be a prime and $\alpha\in\Z_p$ be a primitive element, i.e., $\Z^{\times}_p=\langle\alpha\rangle\triangleq\{\alpha^i:\,0\leq i\leq p-2\}$.
For any divisor $e$ of $p-1$, let $H^e(p)\triangleq\langle\alpha^e\rangle$ denote the multiplicative subgroup of $\Z_p^\times$ generated by $\alpha^e$, and denote by $$\mathcal{H}^e(p)\triangleq\{H^e_j(p)=\alpha^j\langle\alpha^e\rangle:\,j=0,1,\ldots,e-1\}$$ the collection of cosets of $H^e(p)$.
A set $\{a_0,a_1,\ldots,a_{e-1}\}$ of $e$ distinct elements in $\Z^{\times}_p$ is called a \emph{system of distinct representative (SDR)} of $\mathcal{H}^e(p)$ if $a_j\in H^e_j(p)$ for $0\leq j<e$.


\begin{theorem}\label{thm:construction_direct}
Let $w,d$ be positive integers and $p$ be a prime such that $d|(w-1)$, $2d|(p-1)$ and $p\geq w$.
If 
\begin{equation}\label{eq:SDR1}
\left\{\pm 1,\pm 2,\ldots, \pm d\right\} \text{ forms an SDR of }\mathcal{H}^{2d}(p),
\end{equation}
and for $1\leq i\leq \frac{w-1}{d}-1$, 
\begin{equation}\label{eq:SDR2}
\left\{ i+\frac{j(w-1)}{d}, i-\frac{(j+1)(w-1)}{d}:\,j=0,1,\ldots,d-1 \right\} \text{ forms an SDR of }\mathcal{H}^{2d}(p),
\end{equation}
then for any integer $r\geq 1$, there exists a code $\C\in\CACe(\frac{w-1}{d}p^r,w)$ with $(p^r-1)/2d$ codewords.
\end{theorem}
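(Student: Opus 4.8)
The plan is to construct the code explicitly and verify the disjoint-difference-set property via the $p$-ary machinery of Section~\ref{sec:CAC_direct_p}. First I would work in $\Z_{\frac{w-1}{d}p^r}\cong\Z_{\frac{w-1}{d}}\times\Z_{p^r}$ by the CRT correspondence $\theta$ of~\eqref{eq:CRT-correspondence}. For each ``base'' generator $g$ lying in a coset-representative orbit of $\mathcal{H}^{2d}(p)$ inside $\Z_{p^r}^{*}$, and for each layer $\mathsf{L}_t$, $0\le t\le r-1$, one forms an equi-difference codeword of weight $w$ whose generator maps, under $\theta$, to a pair $(b, g)$ with $g\in\mathsf{L}_t$; the first coordinate $b\in\Z_{\frac{w-1}{d}}$ is chosen (as in the $r=1$ construction underlying Theorem~\ref{thm:known_w-1p}) so that the $\Z_{\frac{w-1}{d}}$-components of $\{0,g,2g,\dots,(w-1)g\}$ wrap around exactly $d$ times, forcing several ``repeated'' differences and making each codeword exceptional with $|d^*(S)|=2(w-1)/d$ rather than $2w-2$. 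Counting: there are $\frac{p-1}{2d}$ orbit representatives in $\Z_p^{*}$ (since $2d\mid p-1$), and lifting through the $r$ layers with the weight-$p^{r-1-t}$ multiplicities of~\eqref{eq:S_r(A)-size} gives $\frac{p-1}{2d}\cdot\frac{p^r-1}{p-1}=\frac{p^r-1}{2d}$ codewords, as claimed; I would phrase this using $\mathcal{S}_r$ applied to a set $A$ of size $(p-1)/(2d)$ hitting each coset of $H^{2d}(p)$ once, exactly the SDR-type hypothesis in~\eqref{eq:SDR1}.

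The heart of the argument is showing these codewords have pairwise disjoint difference sets. For a codeword with generator corresponding to $g\in\mathsf{L}_t$, its nonzero differences are $\{\pm g,\pm 2g,\dots,\pm\frac{w-1}{d}g\}$ after the exceptional collapse; under $\theta$ each such difference $kg$ has $p^r$-component $kg\in\mathsf{L}_t$ with least-significant digit $k_0\cdot g_t \pmod p$ by Proposition~\ref{prop:p-ary}. So two differences from two different codewords agree only if they sit in the same layer $\mathsf{L}_t$ and their digit-level data coincide: that is, $k\,g_t\equiv \pm k'\,g'_t\pmod p$ for some $1\le k,k'\le (w-1)/d$. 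Dividing, this says $g_t/g'_t$ (an element of $\Z_p^{\times}$) equals a ratio $\pm k/k'$. Hypothesis~\eqref{eq:SDR1} — that $\{\pm1,\dots,\pm d\}$ is an SDR of $\mathcal{H}^{2d}(p)$ — together with the index ranges $1\le i\le (w-1)/d-1$ in~\eqref{eq:SDR2} is precisely what guarantees that all the relevant ratios $\pm k/k'$ with $1\le k,k'\le (w-1)/d$ land in pairwise distinct cosets of $H^{2d}(p)$ (equivalently: no nontrivial such ratio is a $2d$-th power times the reciprocal of another), so that distinct orbit representatives $g,g'$ cannot be related this way unless $g=g'$ and $k=k'$. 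I would also need to check that within a single codeword the only coincidences among $\{kg\}$ are the intended wrap-around ones (guaranteeing the weight is genuinely $w$ and $|d^*(S)|$ is exactly $2(w-1)/d$, not smaller), which uses $p\ge w$ to prevent accidental collisions in the $\Z_{p^r}$-coordinate and the exact divisibility $d\mid w-1$ to control the $\Z_{\frac{w-1}{d}}$-coordinate.

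The main obstacle I anticipate is the bookkeeping in the second condition~\eqref{eq:SDR2}: when $g,g'$ lie in the same layer $\mathsf{L}_t$, equality of differences reduces not just to $\pm k/k'$ but to ratios of the form $\bigl(i+\frac{j(w-1)}{d}\bigr)\big/\bigl(i'+\frac{j'(w-1)}{d}\bigr)$ arising from comparing the $i$-th difference of one lifted codeword with the $i'$-th of another when both are built over the \emph{same} base generator but different ``offsets'' — and one must show these are separated into the $2d$ distinct cosets exactly by~\eqref{eq:SDR2}. Equivalently, translating to the familiar $r=1$ picture: the lifting through layers is difference-set-preserving up to the digit map, so the whole verification collapses to the statement that the original length-$\frac{w-1}{d}p$ construction has the disjoint-difference property, which is where~\eqref{eq:SDR1}–\eqref{eq:SDR2} do their work (and is the natural generalization of the Legendre-symbol conditions~\eqref{eq:QR1}–\eqref{eq:QR2} in Theorem~\ref{thm:known_w-1p} to the case $2d\mid p-1$). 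I would isolate that base-case verification as the crux, then invoke Proposition~\ref{prop:p-ary} to propagate it through all $r$ layers and conclude.
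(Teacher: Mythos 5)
Your high-level skeleton (CRT splitting $\Z_{\frac{w-1}{d}p^r}\cong\Z_{\frac{w-1}{d}}\times\Z_{p^r}$, generators of the form $(1,g)$ with $g$ ranging over a lift $\mathcal{S}_r(\cdot)$ of a size-$\frac{p-1}{2d}$ set, and Proposition~\ref{prop:p-ary} to reduce each layer to a mod-$p$ statement) is the same as the paper's, but the execution has genuine errors. First, your central structural claim is false: choosing the first coordinate so that the $\Z_{\frac{w-1}{d}}$-components wrap around $d$ times does \emph{not} make the codewords exceptional. A difference is the ordered pair $k(1,g)=(k\bmod\frac{w-1}{d},\,kg\bmod p^r)$, and two differences coincide only if both coordinates do; since $\frac{w-1}{d}\mid(k-k')$ together with $p^{r-t}\mid(k-k')$ (for $g\in\mathsf{L}_t$) forces $k=k'$ when $p\geq w$ and $2d\mid p-1$, each codeword has the full $2w-2$ distinct differences. (This is also true of the $r=1$, $d=1$ construction behind Theorem~\ref{thm:known_w-1p}; its codeword count works precisely because the codewords are non-exceptional.) Consequently your disjointness check, which only compares multipliers $1\le k,k'\le\frac{w-1}{d}$ and asks that the ratios $\pm k/k'$ fall in distinct cosets, is carried out on the wrong difference sets and with a condition that is neither what \eqref{eq:SDR1}--\eqref{eq:SDR2} assert nor sufficient. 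The actual mechanism is: all multipliers $\pm1,\ldots,\pm(w-1)$ occur; a coincidence $i(1,g)=j(1,h)$ forces $i\equiv j\ (\bmod\ \frac{w-1}{d})$, so $i,j$ lie in the same block $T_s$ of size $2d$ (multiples of $\frac{w-1}{d}$ when $s=0$, the sets in \eqref{eq:SDR2} otherwise), and $i\cdot g_t=j\cdot h_t\ (\bmod\ p)$ with $g_t,h_t$ in the subgroup $H^{2d}(p)$ puts $i,j$ in the same coset; the SDR hypotheses then give $i=j$ and hence $g=h$.

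Two further points. Your description of the generator set is internally inconsistent: a set of size $\frac{p-1}{2d}$ cannot ``hit each coset of $\mathcal{H}^{2d}(p)$ once'' (there are only $2d$ cosets); the correct choice is to take all digits in the single fixed coset $H^{2d}(p)$ itself, i.e.\ $g\in\mathcal{S}_r(H^{2d}(p))$, which gives the count $\frac{p^r-1}{2d}$ by \eqref{eq:S_r(A)-size}. Finally, you defer the crux --- why \eqref{eq:SDR1}--\eqref{eq:SDR2} yield disjointness in the base length-$\frac{w-1}{d}p$ picture --- to ``where the conditions do their work,'' but Theorem~\ref{thm:known_w-1p} covers only $d=1$, so the general-$d$ base case is part of what must be proved here; as it stands, the heart of the argument is missing and the parts that are sketched rest on the incorrect exceptional-collapse premise.
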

\begin{proof}
Since $\gcd(\frac{w-1}{d},p)=1$ due to $w\leq p$, one has $\mathbb{Z}_{\frac{w-1}{d}p^r}\cong\mathbb{Z}_{\frac{w-1}{d}}\times\mathbb{Z}_{p^r}$.
So, for the sake of convenience, the elements of codewords are represented as ordered pairs in $\mathbb{Z}_{\frac{w-1}{d}}\times\mathbb{Z}_{p^r}$ due to the CRT correspondence as shown in \eqref{eq:CRT-correspondence}.

Suppose $\alpha$ is a primitive element of $\mathbb{Z}^{\times}_p$.
Let $\Gamma=\{\alpha^{2dj}:\,0\leq j<\frac{p-1}{2d}\}$.
In other words, $\mathbb{Z}^{\times}_p=\langle\alpha\rangle$ and $\Gamma=H^{2d}(p)\,\,(=\langle\alpha^{2d}\rangle)$.
For $g\in\mathcal{S}_r(\Gamma)$, define a $w$-subset 
\begin{align*}
S_g\triangleq\{k(1,g)\in\mathbb{Z}_{(w-1)/d}\times\mathbb{Z}_{p^r}:\,k=0,1,2,\ldots,w-1\}.
\end{align*}
Note that $|\mathcal{S}_r(\Gamma)|=(p^r-1)/2d$ by \eqref{eq:S_r(A)-size}.
We claim that $\{S_g:\,g\in\mathcal{S}_r(\Gamma)\}$ forms the desired code, that is, $d^*(S_g)$, $g\in\mathcal{S}_r(\Gamma)$ are mutually disjoint.

The difference set of $S_g$ can be written as
\begin{align*}
d^*(S_g)&=\{k(1,g)\in\mathbb{Z}_{\frac{w-1}{d}}\times\mathbb{Z}_{p^r}:\,k=\pm1,\pm2,\ldots,\pm(w-1)\} \\
&=\{k(1,g)\in\mathbb{Z}_{\frac{w-1}{d}}\times\mathbb{Z}_{p^r}:\,k\in T_0\uplus T_1\uplus\cdots\uplus T_{\frac{w-1}{d}-1}\},
\end{align*}
where 
\begin{equation*}
T_0\triangleq\{j\cdot\frac{w-1}{d}:\,j=\pm1,\pm2,\ldots,\pm d\}
\end{equation*}
and
\begin{equation*}
T_s\triangleq\{s+\frac{j(w-1)}{d}, s-\frac{(j+1)(w-1)}{d}:\,j=0,1,\ldots,d-1\},
\end{equation*}
for $1\leq s\leq \frac{w-1}{d}-1$.
Note that, if $k\in\{\pm1,\pm2,\ldots,\pm(w-1)\}$ such that $k=s$ (mod $\frac{w-1}{d}$) for some $0\leq s\leq\frac{w-1}{d}-1$, then $k\in T_s$.
By the assumptions \eqref{eq:SDR1} and \eqref{eq:SDR2}, one has
\begin{equation}\label{eq:SDR3}
T_s \text{ forms an SDR of } \mathcal{H}^{2d}(p), \, \forall\, 0\leq s\leq \frac{w-1}{d}-1.
\end{equation}

Suppose to the contrary that $d^*(S_g)\cap d^*(S_h)\neq\emptyset$ for some $g\neq h$, $g,h\in\mathcal{S}_r(\Gamma)$.
Assume 
\begin{equation}\label{eq:assume-contrary}
i(1,g)=j(1,h)\in\Z_{\frac{w-1}{d}}\times\Z_{p^r}    
\end{equation}
for some $i,j\in\{\pm1,\pm2,\ldots,\pm(w-1)\}$.
The first component in~\eqref{eq:assume-contrary} implies that $i,j\in T_s$ for some $s$, while the second component indicates that $ig$ and $jh$ are on the same layer, say $\mathsf{L}_t$.
Since $i,j\in\{\pm1,\ldots,\pm(w-1)\}\subseteq\mathsf{L}_0$ due to $w\leq p$, by Proposition~\ref{prop:p-ary}, we have $g,h\in\mathsf{L}_t$.
More precisely, $g=i^{-1}\cdot ig\in\mathsf{L}_t$ and $h=j^{-1}\cdot jh\in\mathsf{L}_t$.
Therefore,
\begin{equation}\label{eq:assume-contrary-2}
i\cdot g_t=j\cdot h_t \ (\bmod\ p).
\end{equation}
Observe that $g_t,h_t\in H^{2d}(p)$ and $i,j\in T_s$, which is an SDR of $\mathcal{H}^{2d}(p)$ by~\eqref{eq:SDR3}.
It follows from~\eqref{eq:assume-contrary-2} that $i=j$.
Finally, plugging $i=j$ into the second component in~\eqref{eq:assume-contrary} yields $i(g-h)=0$, implying $g=h$, which is a contradiction.
\end{proof}

\begin{example}\rm
Let $p=37, w=7$ and $d=2$.
We have $(w-1)/d=3$.
The $2d=4$ cosets of $\mathcal{H}^{4}(37)$ are
\begin{align*}
H^4_0(37) &= \{1,7,9,10,12,16,26,33,34\}, \\
H^4_1(37) &= \{5,6,8,13,17,19,22,23,35\}, \\
H^4_2(37) &= \{3,4,11,21,25,27,28,30,36\}, \\
H^4_3(37) &= \{2,14,15,18,20,24,29,31,32\}.
\end{align*}
One can verify that each of $\{\pm1,\pm2\}$, $\{1,-2,4,-5\}$, $\{-1,2,-4,5\}$ forms an SDR of $\mathcal{H}^{4}(37)$.
By Theorem~\ref{thm:construction_direct}, we have an equi-difference CAC $\C$ of length $3\cdot {37}^r$ and weight $7$ with $|\C|=(37^r-1)/4$, for any integer $r\geq 1$.
When $r=1$, the set of generators is $\{\theta^{-1}((1,g)):\,g\in H_0^4(37)\}=\{1,7,10,16,34,46,49,70,100\}$, where $\theta:\mathbb{Z}_{111}\to\mathbb{Z}_3\times\mathbb{Z}_{37}$ is the bijection given in~\eqref{eq:CRT-correspondence}.
When $r=2$, the set of generators of $\C$ is
\begin{align*}
\left\{\theta^{-1}((1,a+37b)):\,a\in H^4_0(37)\text{ and }0\leq b\leq 36 \text{ or } a=0 \text{ and }b\in H^4_0(37)\right\},
\end{align*} 
where $\theta$ is the bijective mapping $\mathbb{Z}_{4107}\to\mathbb{Z}_3\times\mathbb{Z}_{37^2}$ now.
The corresponding generators for $b=0$ are: $1,7,10,16,34, 1378, 1381, 1402, 2764$, and for $a=0$ are: $37, 259, 370, 592, 1258, 1702, 1813$, $2590$, $3700$.
\end{example}

\subsection{Optimal CACs of length $\frac{w-1}{d}p^r$ and weight $w$}

Here is our main result in this section, which generalizes both \cite[Theorem 3.7]{MMSJ07} and \cite[Theorem 7]{SW10}.

\begin{theorem}\label{thm:main_w-1dpr}
Let $w,d$ be positive integers and $p$ be a prime such that $d|(w-1)$, $2d|(p-1)$ and $p\geq w$.
If the two conditions in \eqref{eq:SDR1} and \eqref{eq:SDR2} hold, then for any integer $r\geq 1$,
\begin{align*}
K\left(\frac{w-1}{d}p^r,w\right)=\frac{p^r-1}{2d}.
\end{align*}
\end{theorem}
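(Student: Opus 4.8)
The plan is to prove the matching upper bound $K\!\left(\frac{w-1}{d}p^r,w\right)\le\frac{p^r-1}{2d}$, since the construction in Theorem~\ref{thm:construction_direct} already supplies a code of exactly this size. So let $L=\frac{w-1}{d}p^r$ and let $\C$ be any code in $\CAC(L,w)$. The differences $d^*(S)$, $S\in\C$, are pairwise disjoint subsets of $\Z_L^*$, so if every codeword were non-exceptional (i.e.\ $|d^*(S)|=2w-2$) we would immediately get $|\C|\le\frac{L-1}{2w-2}=\frac{(w-1)p^r/d-1}{2w-2}$, which is essentially $\frac{p^r}{2d}$ up to a small slack; the whole difficulty is controlling the exceptional codewords and squeezing this bound down to the exact value $\frac{p^r-1}{2d}$.

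First I would analyze the possible stabilizer subgroups of $d(S)$ for an exceptional $w$-subset $S\subseteq\Z_L$. By Corollary~\ref{cor:excpetional-bound}, $H_S:=\HH(d(S))$ satisfies $2\le|H_S|\le 2w-2$, and $|H_S|$ divides $L=\frac{w-1}{d}p^r$. By Lemma~\ref{lem:nonexceptional}, $|H_S|$ cannot divide $w-1$; this is the key restriction. Writing $|H_S|=m_1 p^{s}$ with $m_1\mid\frac{w-1}{d}$ and $0\le s\le r$, I would argue that $m_1 p^s\mid w-1$ forces a contradiction unless $s\ge 1$ (since any divisor of $\frac{w-1}{d}$ that is $\le 2w-2$ and $p$-free must divide $w-1$, as $p\ge w>2$ means $p\nmid(2w-2)$ and the $p$-free part of $|H_S|$ being a divisor of $\frac{w-1}{d}$ already divides $w-1$); hence $p\mid|H_S|$ for every exceptional codeword. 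Then, using Proposition~\ref{prop:unique_subgroup}, the unique subgroup of $\Z_L$ of order divisible by $p$ of the relevant size contains the subgroup $\{0\}\times\Z_{p^r}$ when written through the CRT isomorphism $\Z_L\cong\Z_{(w-1)/d}\times\Z_{p^r}$ — more carefully, $H_S$ contains the unique subgroup of $\Z_{p^r}$ of order $p^s\ge p$, i.e.\ $p^{r-s}\Z_{p^r}\subseteq$ the second coordinate of $H_S$.

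The main obstacle I anticipate is the counting step: bounding how many exceptional codewords a single code $\C$ can contain and how their (large, periodic) difference sets consume the space $\Z_L^*$, so as to reach exactly $\frac{p^r-1}{2d}$ rather than a bound weaker by an additive constant. The strategy would be: for an exceptional $S$, since $H_S\subseteq d(S)$ and $|H_S|$ is a multiple of $p$ dividing $L$, the difference set $d(S)$ (and hence $d^*(S)$) is a union of cosets of $H_S$ and has size at least $|H_S|\ge p$, in fact at least $2|S+H_S|-|H_S|$ by Kneser (Theorem~\ref{thm:Kneser}); combined with $|S+H_S|\ge|S|=w$ and the coset-divisibility of $|S+H_S|$ by $|H_S|$, one gets a concrete lower bound on $|d^*(S)|$ that is comparable to $2w-2$. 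I would then set up a weighted count: each non-exceptional codeword uses exactly $2w-2$ differences, each exceptional one uses $|d^*(S)|\ge$ some value $\ge 2w-2$ minus a controlled deficiency $\delta_S$, and crucially there is at most one exceptional codeword whose difference set can lie inside a given large coset structure (indeed two exceptional codewords with overlapping periodic structure would violate disjointness). Totalling, $|\C|\cdot(2w-2)\le |\Z_L^*| + \sum\delta_S = L-1+\sum\delta_S$, and I would show $\sum\delta_S\le 2w-3$ (so that the floor $\lfloor\frac{L-1+2w-3}{2w-2}\rfloor=\lfloor\frac{(w-1)p^r/d+w-2}{2w-2}\rfloor$) collapses exactly to $\frac{p^r-1}{2d}$ using $p\ge w$ and $2d\mid(p-1)$, which guarantees $p^r\equiv 1\pmod{2d}$ so the target value is an integer. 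Verifying this arithmetic identity and pinning the deficiency bound — essentially showing at most one exceptional codeword exists and it costs at most one "unit" of slack — is where the real work lies; everything else is assembling Corollary~\ref{cor:excpetional-bound}, Lemma~\ref{lem:nonexceptional}, Proposition~\ref{prop:unique_subgroup}, and Kneser's theorem.
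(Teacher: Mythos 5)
Your route is the same as the paper's: take the code of Theorem~\ref{thm:construction_direct} for the lower bound, and for the upper bound use Corollary~\ref{cor:excpetional-bound} and Lemma~\ref{lem:nonexceptional} to show that any exceptional codeword $S$ has $p\mid|\HH(d(S))|$, so exceptional codewords can exist only when $p\le 2w-2$, in which case $|\HH(d(S))|=p$; then Proposition~\ref{prop:unique_subgroup} together with $\HH(d(S))\subseteq d(S)$ and disjointness of difference sets forces at most one exceptional codeword, and that codeword satisfies $|d^*(S)|\ge p-1$. All of these ingredients appear (at least implicitly) in your proposal, so the structure is sound.

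The step that would fail as written is the final deficiency count. You propose to prove $\sum_S\delta_S\le 2w-3$ and assert that $\bigl\lfloor\frac{L-1+2w-3}{2w-2}\bigr\rfloor=\bigl\lfloor\frac{(w-1)p^r/d+w-2}{2w-2}\bigr\rfloor$; but $L-1+(2w-3)=\frac{w-1}{d}p^r+2w-4$, not $\frac{w-1}{d}p^r+w-2$, so the two floors you equate are different, and the one your stated budget actually gives does not collapse: since $\frac{L+2w-4}{2w-2}=\frac{p^r-1}{2d}+\frac{1}{2d}+\frac{w-2}{w-1}$, the floor equals $\frac{p^r-1}{2d}+1$ whenever $2d\le w-1$ — in particular for $d=1$ and $w\ge 3$, precisely the case that is supposed to recover and strengthen Theorem~\ref{thm:known_w-1p}. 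The budget you need (and what the paper uses) is $w-1$: the unique exceptional codeword exists only if $p\le 2w-2$ and has $|d^*(\widehat S)|\ge p-1$, so its deficiency is $2w-1-p\le w-1$ by the hypothesis $p\ge w$; with numerator $L-1+(w-1)=\frac{w-1}{d}p^r+w-2$ one gets $\frac{p^r-1}{2d}+\frac{1}{2d}+\frac{w-2}{2w-2}$, whose fractional excess is always less than $1$, so the floor is exactly $\frac{p^r-1}{2d}$ for every $d\mid(w-1)$ (and in the exception-free case $p>2w-2$ the bound $\lfloor\frac{L-1}{2w-2}\rfloor=\frac{p^r-1}{2d}$ is immediate). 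You already have everything needed for this sharper per-codeword estimate, so the repair is small, but the inequality $\sum_S\delta_S\le 2w-3$ by itself does not yield the theorem.
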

\begin{proof}
By Theorem~\ref{thm:construction_direct}, it suffices to show that for any code $\mathcal{C}\in\CAC(\frac{w-1}{d}p^r,w)$, one has $|\mathcal{C}| \leq \frac{p^r-1}{2d}$.

Let $\mathcal{E}\subseteq\mathcal{C}$ be the collection of all exceptional codewords in $\C$.
For notational convenience, denote by $H_S=\HH(d(S))$ for $S\in\mathcal{E}$.
Since, by Proposition~\ref{prop:stabilizer}(i), $H_S$ must contain the element $0$, so we further denote by $H_S^*=H_S\setminus\{0\}$.

Consider any $S\in\mathcal{E}$.
Since $H_S$ is a subgroup of $\mathbb{Z}_{\frac{w-1}{d}p^r}$, one has $|H_S|$ divides $\frac{w-1}{d}p^r$.
Moreover, $|H_S|$ does not divide $\frac{w-1}{d}$ since it does not divide $w-1$ by Lemma~\ref{lem:nonexceptional}.
As $|H_S|\geq 2$ by Corollary~\ref{cor:excpetional-bound}, it follows that $|H_S|$ is a multiple of $p$.
Note that $|H_S|\leq 2w-2$ by Corollary~\ref{cor:excpetional-bound} again.
We consider two cases.

\textit{Case 1:} $p>2w-2$. 
In this case we have $|H_S|\leq 2w-2<p$, which is a contradiction to $|H_S|$ a multiple of $p$.
In other words, there is no exceptional codeword in this case.

\textit{Case 2:} $p\leq 2w-2$.
Since $|H_S|\leq 2w-2\leq 2p-2$, it must be the case that $|H_S|=p$.
By Proposition~\ref{prop:unique_subgroup}, such an exceptional codeword is unique.
As $0\in d(S)$, Proposition~\ref{prop:stabilizer}(ii) implies that $|H_{S}|\leq|d(S)|=|d^*(S)|+1$.
So we have $d^*(S)\geq |H_S|-1=p-1$.

It concludes that there is at most one exceptional codeword in $\C$, and the unique codeword, denoted by $\widehat{S}$ if exists, satisfies $d^*(\widehat{S})\geq p-1$.
When $\widehat{S}$ does not exist, by the disjoint-difference-set property, we have
\begin{align*}
\frac{w-1}{d}p^r-1 = |\mathbb{Z}^*_{\frac{w-1}{d}p^r}| \geq (2w-2)|\C|,
\end{align*}
and then 
\begin{align*}
|\C|\leq \left\lfloor\frac{p^r-1}{2d}+\frac{\frac{w-1}{d}-1}{2w-2}\right\rfloor = \frac{p^r-1}{2d}.
\end{align*}
When $\widehat{S}$ does exist, by the disjoint-difference-set property, we have
\begin{align*}
\frac{w-1}{d}p^r-1 = |\mathbb{Z}^*_{\frac{w-1}{d}p^r}| \geq (2w-2)(|\C|-1) + (p-1),
\end{align*}
and then
\begin{align*}
|\C| \leq \left\lfloor\frac{p^r-1}{2d} + 1-\frac{p-\frac{w-1}{d}}{2w-2}\right\rfloor = \frac{p^r-1}{2d},
\end{align*}
where the last equality is due to the necessary condition $p\leq 2w-2$ of the existence of $\widehat{S}$.
\end{proof}

In the rest of this section, we will obtain a series of optimal CACs by exploring primes $p$ that satisfy the two conditions in \eqref{eq:SDR1} and \eqref{eq:SDR2}.
Note that when $d=1$, the subgroup $H^{2}(p)$ is the same as $Q(p)$, the group consists of all quadratic residues modulo $p$.
The two conditions \eqref{eq:SDR1} and \eqref{eq:SDR2} are then identical to \eqref{eq:QR1} and \eqref{eq:QR2}, respectively, and hence Theorem~\ref{thm:main_w-1dpr} (for the case of $r=1$) can be reduced to Theorem~\ref{thm:known_w-1p}.

%
%

We first list some well-known results in the followings (e.g., see \cite[Theorems 9.6, 9.10, and Problem~10 in Chapter~9.3]{Burton10}). 
Note that these results can be derived by Gauss's Lemma (e.g., \cite[Theorem 9.5]{Burton10}) and the Law of Quadratic Reciprocity.

\begin{lemma}[\cite{Burton10}]\label{lem:QRLaw}
Let $p$ be an odd prime. One has
\begin{enumerate}[(i)]
\item $\big(\frac{-1}{p}\big)=-1$ if and only if $p\equiv 3\ (\bmod\ 4)$,
\item $\big(\frac{2}{p}\big)=1$ if and only if $p\equiv \pm 1\ (\bmod\ 8)$,
\item $\big(\frac{3}{p}\big)=1$ if and only if $p\equiv \pm 1\ (\bmod\ 12)$,
\item $\big(\frac{5}{p}\big)=1$ if and only if $p\equiv \pm 1\ (\bmod\ 10)$,
\item $\big(\frac{6}{p}\big)=1$ if and only if $p\equiv \pm 1,\pm 5\ (\bmod\ 24)$, and
\item $\big(\frac{7}{p}\big)=1$ if and only if $p\equiv \pm 1,\pm 3,\pm 9\ (\bmod\ 28)$.
\end{enumerate}
\end{lemma}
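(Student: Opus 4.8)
The plan is to reduce every item to three classical facts: the first supplementary law $\big(\frac{-1}{p}\big)=(-1)^{(p-1)/2}$, the second supplementary law $\big(\frac{2}{p}\big)=(-1)^{(p^2-1)/8}$ (which is precisely the output of Gauss's Lemma applied to $a=2$), and the Law of Quadratic Reciprocity $\big(\frac{q}{p}\big)\big(\frac{p}{q}\big)=(-1)^{\frac{p-1}{2}\cdot\frac{q-1}{2}}$ for distinct odd primes $p,q$. Items (i) and (ii) then follow at once: $(-1)^{(p-1)/2}=-1$ exactly when $(p-1)/2$ is odd, i.e. $p\equiv 3\pmod 4$; and inspecting $p\bmod 8\in\{1,3,5,7\}$ shows $(p^2-1)/8$ is even precisely when $p\equiv\pm 1\pmod 8$, giving $\big(\frac{2}{p}\big)=1$.

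For (iii) and (vi) I would invoke reciprocity with $q=3$ and $q=7$ respectively. Since both satisfy $q\equiv 3\pmod 4$, this yields $\big(\frac{q}{p}\big)=(-1)^{(p-1)/2}\big(\frac{p}{q}\big)$. The quadratic residues modulo $3$ are $\{1\}$ and those modulo $7$ are $\{1,2,4\}$, so $\big(\frac{p}{q}\big)$ depends only on $p\bmod q$, while the sign $(-1)^{(p-1)/2}$ depends only on $p\bmod 4$. Running the Chinese Remainder Theorem over the residue classes modulo $4q$ coprime to $4q$ (so modulo $12$ and modulo $28$) and keeping those classes in which the two factors carry the same sign produces exactly $p\equiv\pm 1\pmod{12}$ in the first case and $p\equiv\pm 1,\pm 3,\pm 9\pmod{28}$ in the second.

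For (iv), because $5\equiv 1\pmod 4$, reciprocity collapses to $\big(\frac{5}{p}\big)=\big(\frac{p}{5}\big)$; the quadratic residues modulo $5$ are $\{1,4\}=\{\pm 1\}$, so this equals $1$ iff $p\equiv\pm 1\pmod 5$, and since $p$ is odd the condition modulo $5$ upgrades to $p\equiv\pm 1\pmod{10}$. For (v), the multiplicativity \eqref{eq:Legendre} gives $\big(\frac{6}{p}\big)=\big(\frac{2}{p}\big)\big(\frac{3}{p}\big)$, so $\big(\frac{6}{p}\big)=1$ exactly when $\big(\frac{2}{p}\big)$ and $\big(\frac{3}{p}\big)$ agree in sign; evaluating the characterizations from (ii) and (iii) on the eight residue classes modulo $24$ coprime to $24$ and collecting those with matching signs yields $p\equiv\pm 1,\pm 5\pmod{24}$.

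There is no genuine obstacle: each statement is textbook material and the argument is a finite verification. The only step requiring care is the Chinese Remainder Theorem bookkeeping in (iii), (v), and (vi) — one must enumerate every residue class modulo $4q$ (or modulo $24$) coprime to the modulus, correctly evaluate both Legendre-symbol factors on each, and confirm that the resulting ``good'' set coincides with the stated $\pm$-symmetric list. Since this involves only $8$ or $12$ classes, I would present these checks compactly, e.g. in a small table, rather than case by case.
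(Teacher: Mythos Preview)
Your proposal is correct and matches the paper's stance exactly: the paper does not supply a proof of this lemma but simply cites it as standard (Burton, Theorems~9.6, 9.10, and a Chapter~9.3 exercise) and remarks that all six items follow from Gauss's Lemma and the Law of Quadratic Reciprocity, which is precisely the reduction you carry out. Your case analysis in (iii)--(vi) via the supplementary laws, reciprocity, and CRT enumeration over the residues modulo $4q$ (or $24$) is the textbook route and is sound.
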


We have the following optimal CACs.

\begin{corollary}\label{coro:w-1prw}
Let $p$ be an odd prime and $r$ be any positive integer.
One has 
\begin{enumerate}[(i)]
\item $K(3p^r,4)=(p^r-1)/2$ if $p\equiv -1\ (\bmod\ 8)$, 
\item $K(4p^r,5)=(p^r-1)/2$ if $p\equiv -1\ (\bmod\ 12)$, 
\item $K(5p^r,6)=(p^r-1)/2$ if $p\equiv -1,-5\ (\bmod\ 24)$, 
\item $K(6p^r,7)=(p^r-1)/2$ if $p\equiv -1,-9\ (\bmod\ 40)$, 
\item $K(7p^r,8)=(p^r-1)/2$ if $p\equiv -1,-49\ (\bmod\ 120)$, 
\item $K(8p^r,9)=(p^r-1)/2$ if  $p\equiv -1,59,-109,-121,131,-169\ (\bmod\ 420)$, 
\item $K(9p^r,10)=(p^r-1)/2$ if  $p\equiv -1,-9,31,-81,111,-121\ (\bmod\ 280)$, and
\item $K(10p^r,11)=(p^r-1)/2$ if  $p\equiv -1,-5,-25,43,47,67\ (\bmod\ 168)$.
\end{enumerate}
\end{corollary}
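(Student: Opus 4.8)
The plan is to deduce each part of Corollary~\ref{coro:w-1prw} from Theorem~\ref{thm:main_w-1dpr} by specializing to $d=1$, so that $H^{2}(p)=Q(p)$ is the group of quadratic residues and $\mathcal{H}^{2}(p)=\{Q(p),N(p)\}$ with $N(p)$ the set of quadratic non-residues. Under this specialization the SDR conditions \eqref{eq:SDR1} and \eqref{eq:SDR2} become exactly the Legendre-symbol conditions \eqref{eq:QR1} and \eqref{eq:QR2} from Theorem~\ref{thm:known_w-1p}, namely $\left(\frac{-1}{p}\right)=-1$ together with $\left(\frac{i}{p}\right)\left(\frac{i-w+1}{p}\right)=-1$ for $i=1,\ldots,w-2$; the first forces $p\equiv 3\pmod 4$ by Lemma~\ref{lem:QRLaw}(i), and the second says that among $\{1,\ldots,w-2\}$ the pair $i$ and $i-(w-1)$ always lands one in $Q(p)$ and one in $N(p)$, i.e.\ forms an SDR of $\{Q(p),N(p)\}$. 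So for each target weight $w\in\{4,\ldots,11\}$ I must translate the finitely many Legendre-symbol constraints into a congruence condition on $p$.

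The key steps, carried out weight by weight: first, since $\left(\frac{-1}{p}\right)=-1$, the condition $\left(\frac{i}{p}\right)\left(\frac{i-w+1}{p}\right)=-1$ is equivalent to $\left(\frac{i}{p}\right)\left(\frac{w-1-i}{p}\right)=+1$, i.e.\ $i$ and $w-1-i$ are both QRs or both non-QRs. Second, multiplying these requirements together over $i=1,\ldots,w-2$ and using multiplicativity \eqref{eq:Legendre}, together with the prime factorizations of the relevant small integers, I reduce everything to a conjunction of conditions of the form $\left(\frac{q}{p}\right)=\pm 1$ for the primes $q$ dividing numbers up to $w-2$; concretely $q\in\{2,3,5,7\}$ for $w\le 11$. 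Third, I invoke Lemma~\ref{lem:QRLaw}(i)--(vi) to convert each $\left(\frac{q}{p}\right)$ condition into a congruence on $p$ modulo $8,12,10,28$ respectively, then combine via the Chinese Remainder Theorem to get a single congruence modulo the least common multiple of the moduli involved (e.g.\ $\mathrm{lcm}(8,3)=24$ for $w=6$, $\mathrm{lcm}(8,3,5)=120$ or a divisor for $w=8$, $\mathrm{lcm}(8,5,7)=280$ for $w=10$, $\mathrm{lcm}(8,3,7)=168$ for $w=11$, etc.), keeping only those residue classes that also satisfy $p\equiv 3\pmod 4$. Fourth, having produced the congruence class for $p$, Theorem~\ref{thm:main_w-1dpr} with $d=1$ gives $K\!\left((w-1)p^r,w\right)=(p^r-1)/2$ for every $r\ge 1$, which is the claim; the side condition $p\ge w$ is automatic because the smallest prime in each listed class already exceeds $w$.

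The main obstacle is purely bookkeeping but genuinely delicate: for larger $w$ the list of pairs $(i,w-1-i)$ with $i=1,\ldots,w-2$ produces several independent sign constraints, and one must carefully track which products of small integers are forced to be QRs. For instance for $w=9$ (so $w-1=8$) the pairs are $(1,7),(2,6),(3,5),(4,4)$; $(4,4)$ is vacuous, $(2,6)$ and $(3,5)$ entangle $\left(\frac{3}{p}\right),\left(\frac{5}{p}\right),\left(\frac{7}{p}\right)$, and $(1,7)$ pins down $\left(\frac{7}{p}\right)$, so the surviving classes modulo $420=\mathrm{lcm}(4,3,5,7)$ must be enumerated by hand — this is where the somewhat opaque residue lists like $\{-1,59,-109,-121,131,-169\}\pmod{420}$ come from. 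I would organize this as a short table: for each $w$, list the forced values of $\left(\frac{2}{p}\right),\left(\frac{3}{p}\right),\left(\frac{5}{p}\right),\left(\frac{7}{p}\right)$ (those that appear), cite the corresponding parts of Lemma~\ref{lem:QRLaw}, and record the resulting congruence. Once the table is in place the proof of each item is a one-line appeal to Theorem~\ref{thm:main_w-1dpr}; the risk of error lies entirely in the CRT merging and in not dropping a valid class or admitting an invalid one, so I would double-check each listed residue by directly evaluating the original Legendre symbols.
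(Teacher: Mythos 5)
Your proposal matches the paper's own proof: both specialize Theorem~\ref{thm:main_w-1dpr} to $d=1$ so that \eqref{eq:SDR1}--\eqref{eq:SDR2} become the Legendre-symbol conditions \eqref{eq:QR1}--\eqref{eq:QR2}, simplify these (using $\left(\frac{-1}{p}\right)=-1$ and multiplicativity) to a short list of conditions $\left(\frac{q}{p}\right)=\pm1$ for $q\in\{2,3,5,7\}$, and then convert to the stated congruence classes via Lemma~\ref{lem:QRLaw} and the Chinese Remainder Theorem. The only difference is presentational: the paper records the simplified systems in a table and leaves the CRT merging as routine, exactly the bookkeeping you describe.
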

\begin{proof}
It is routine to simplify the two conditions in \eqref{eq:QR1} and \eqref{eq:QR2} in a system of quadratic-residue equations, as shown in the following table.
For example, when $w=7$, \eqref{eq:QR2} implies $\big(\frac{1}{p}\big)\big(\frac{-5}{p}\big)=\big(\frac{2}{p}\big)\big(\frac{-4}{p}\big)=\big(\frac{3}{p}\big)\big(\frac{-3}{p}\big)=-1$.
Since $\big(\frac{-1}{p}\big)=-1$ by \eqref{eq:QR1}, the identity $\big(\frac{3}{p}\big)\big(\frac{-3}{p}\big)=-1$ automatically hold.
Meanwhile, $\big(\frac{1}{p}\big)\big(\frac{-5}{p}\big)=-1$ implies $\big(\frac{5}{p}\big)=1$ and $\big(\frac{2}{p}\big)\big(\frac{-4}{p}\big)=-1$ implies $\big(\frac{2}{p}\big)=1$.

\begin{center}
\begin{tabular}{c|l}
$w$ & simplified equations of \eqref{eq:QR1} and \eqref{eq:QR2} \\ \hline
$4$ & $\big(\frac{-1}{p}\big)=-1$ and $\big(\frac{2}{p}\big)=1$ \\
$5$ & $\big(\frac{-1}{p}\big)=-1$ and $\big(\frac{3}{p}\big)=1$ \\
$6$ & $\big(\frac{-1}{p}\big)=-1$ and $\big(\frac{6}{p}\big)=1$ \\
$7$ & $\big(\frac{-1}{p}\big)=-1$ and $\big(\frac{2}{p}\big)=\big(\frac{5}{p}\big)=1$ \\
$8$ & $\big(\frac{-1}{p}\big)=-1$ and $\big(\frac{2}{p}\big)=\big(\frac{3}{p}\big)=\big(\frac{5}{p}\big)=1$ \\
$9$ & $\big(\frac{-1}{p}\big)=-1$ and $\big(\frac{3}{p}\big)=\big(\frac{5}{p}\big)=\big(\frac{7}{p}\big)=1$ \\
$10$ & $\big(\frac{-1}{p}\big)=-1$ and $\big(\frac{2}{p}\big)=\big(\frac{5}{p}\big)=\big(\frac{7}{p}\big)=1$ \\
$11$ & $\big(\frac{-1}{p}\big)=-1$ and $\big(\frac{6}{p}\big)=\big(\frac{3}{p}\big)\big(\frac{7}{p}\big)=1$ \\
\end{tabular}
\end{center}
Then, each of above systems of equations can be solved by Lemma~\ref{lem:QRLaw}.
\end{proof}

\begin{remark}\rm
For any arbitrary $w$, we can derive a sufficient condition of primes $p$ so that $K((w-1)p^r,w)=(p^r-1)/2$ as long as the corresponding Quadratic Reciprocity Laws are obtained.
This is workable because the latter can be done by Gauss's Lemma.
\end{remark}

Now, let us turn to $d=2$.
It was shown in \cite[Corollary 3.10]{MMSJ07} that any prime $p\equiv 5\ (\bmod\ 24)$ satisfies the two conditions in \eqref{eq:SDR1} and \eqref{eq:SDR2} for the case when $w=5$ and $d=2$.
By Theorem~\ref{thm:main_w-1dpr}, we immediately have the following result.

\begin{corollary}\label{coro:2pr5}
Let $p\equiv 5\ (\bmod\ 24)$ be a prime.
Then, for any integer $r\geq 1$, one has $K(2p^r,5)=(p^r-1)/4$.
\end{corollary}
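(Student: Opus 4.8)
The plan is to invoke Theorem~\ref{thm:main_w-1dpr} directly with the parameters $w=5$, $d=2$, and then show that every prime $p\equiv 5\ (\bmod\ 24)$ satisfies all of its hypotheses. First I would check the arithmetic preconditions: with $w=5$ and $d=2$ we have $d\mid(w-1)$ since $2\mid 4$; next, $p\equiv 5\ (\bmod\ 24)$ forces $p\equiv 1\ (\bmod\ 4)$, so $4\mid(p-1)$, i.e. $2d\mid(p-1)$; and since $p\equiv 5\ (\bmod\ 24)$ the smallest such prime is $5$, but in any case $p\geq 5=w$ holds for all such primes (the case $p=5$ giving the trivial length $2\cdot 5^r$, which still falls under the theorem). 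With these three conditions in hand, the content of the corollary is precisely that conditions~\eqref{eq:SDR1} and~\eqref{eq:SDR2} hold for $w=5$, $d=2$ whenever $p\equiv 5\ (\bmod\ 24)$.

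For that SDR verification I would appeal to the already-cited fact \cite[Corollary 3.10]{MMSJ07}, which the text says establishes exactly this: any prime $p\equiv 5\ (\bmod\ 24)$ satisfies \eqref{eq:SDR1} and \eqref{eq:SDR2} in the case $w=5$, $d=2$. Concretely, \eqref{eq:SDR1} asks that $\{\pm 1,\pm 2\}$ be an SDR of the four cosets $\mathcal{H}^{4}(p)$ of the fourth-power subgroup, and \eqref{eq:SDR2} (which for $w=5$, $d=2$ has only the single index $i=1$, since $\frac{w-1}{d}-1 = 1$) asks that $\{1,\,1-4,\,1+4,\,1-8\} = \{1,-3,5,-7\}$ be an SDR of $\mathcal{H}^{4}(p)$. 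Both are equivalent to certain biquadratic-residue conditions on $-1,2,3,\dots$ modulo $p$, and the reduction $p\equiv 5\ (\bmod\ 24)$ is shown in \cite{MMSJ07} to be sufficient; I would simply cite this rather than reprove it. (If a self-contained argument were wanted, one would expand each coset condition into statements about the biquadratic character, use $p\equiv 5\ (\bmod\ 8)$ to pin down the quartic character of $2$, and $p\equiv 5\ (\bmod\ 3)$ to pin down that of $3$, but this is exactly the computation already done in the reference.)

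Having verified all hypotheses of Theorem~\ref{thm:main_w-1dpr}, the conclusion $K\!\left(\frac{w-1}{d}p^r,w\right)=\frac{p^r-1}{2d}$ with $w=5$, $d=2$ reads $K(2p^r,5)=\frac{p^r-1}{4}$ for every integer $r\geq 1$, which is the assertion. So the proof is a three-line deduction: state the parameter choice, note $d\mid(w-1)$, $2d\mid(p-1)$ and $p\geq w$ follow from $p\equiv 5\ (\bmod\ 24)$, cite \cite[Corollary 3.10]{MMSJ07} for \eqref{eq:SDR1}--\eqref{eq:SDR2}, and apply Theorem~\ref{thm:main_w-1dpr}.

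There is no real obstacle here — the only thing to be careful about is bookkeeping: confirming that \eqref{eq:SDR2} is non-vacuous but has exactly one instance ($i=1$) for these parameters and writing that instance out correctly, and making sure the edge case $p=5$ is not accidentally excluded (it is fine, as $p\geq w$ is satisfied with equality and the theorem imposes no strict inequality). The heavy lifting — both the direct construction achieving the bound and the Kneser-theorem upper bound, via Lemma~\ref{lem:nonexceptional} and Corollary~\ref{cor:excpetional-bound} — has already been carried out in Theorems~\ref{thm:construction_direct} and~\ref{thm:main_w-1dpr}, so this corollary is genuinely just an application.
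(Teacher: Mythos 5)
Your proposal follows exactly the paper's route: verify $d\mid(w-1)$, $2d\mid(p-1)$ and $p\geq w$ for $w=5$, $d=2$, cite \cite[Corollary 3.10]{MMSJ07} for conditions \eqref{eq:SDR1}--\eqref{eq:SDR2}, and apply Theorem~\ref{thm:main_w-1dpr}; this is precisely how the paper obtains the corollary. One bookkeeping slip: for $w=5$, $d=2$ the single instance $i=1$ of \eqref{eq:SDR2} uses the step $\frac{w-1}{d}=2$, not $w-1=4$, so the required set is $\{1+2j,\ 1-2(j+1):\,j=0,1\}=\{\pm 1,\pm 3\}$ rather than $\{1,-3,5,-7\}$ (the latter even contains entries of absolute value larger than $w-1=4$, so it cannot occur as a set of difference multipliers). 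Since you delegate the actual SDR verification to the cited result, this slip does not invalidate the deduction, but the explicit instance should be corrected.
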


We further figure out one class of primes that satisfies conditions in \eqref{eq:SDR1} and \eqref{eq:SDR2} for the case  when $w=7$ and $d=2$.

\begin{corollary}\label{coro:3pr7}
Let $p=5\ (\bmod\ 8)$ be a prime with $10^{(p-1)/4}\equiv 1\ (\bmod\ p)$.
Then, for any integer $r\geq 1$, one has $K(3p^r,7)=(p^r-1)/4$.
\end{corollary}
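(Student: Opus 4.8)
The plan is to deduce this from Theorem~\ref{thm:main_w-1dpr} applied with $w=7$ and $d=2$, so that $\frac{w-1}{d}=3$ and $2d=4$. The divisibility hypotheses there are immediate: $d\mid(w-1)$ as $2\mid 6$, and $2d\mid(p-1)$ since $p\equiv 5\ (\bmod\ 8)$ forces $p\equiv 1\ (\bmod\ 4)$; also $p\neq 5$ (otherwise $10^{(p-1)/4}=10\equiv 0\ (\bmod\ 5)$), so $p\geq 13>7=w$. Thus everything reduces to checking the two SDR conditions \eqref{eq:SDR1} and \eqref{eq:SDR2}, which for $w=7$, $d=2$ amount to showing that $\{\pm 1,\pm 2\}$, $\{1,-2,4,-5\}$, and $\{2,-1,5,-4\}$ each form an SDR of $\mathcal{H}^4(p)$.

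The key reformulation of the hypothesis is that $10^{(p-1)/4}\equiv 1\ (\bmod\ p)$ says exactly that $10$ is a fourth power modulo $p$, i.e. $10\in H^4(p)=H^4_0(p)$. Fix a primitive root $\alpha$, so that $\alpha^m\in H^4_j(p)$ iff $m\equiv j\ (\bmod\ 4)$, and record the cosets of the small integers involved. Since $p\equiv 5\ (\bmod\ 8)$, the number $(p-1)/4$ is odd, hence $(p-1)/2\equiv 2\ (\bmod\ 4)$ and $-1=\alpha^{(p-1)/2}\in H^4_2(p)$; consequently multiplication by $-1$ permutes the four cosets of $H^4(p)$ by the shift $j\mapsto j+2$. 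Next, $2$ is a quadratic non-residue by Lemma~\ref{lem:QRLaw}(ii), so $2\in H^4_1(p)\cup H^4_3(p)$, whence $4=2^2\in H^4_2(p)$, $-4\in H^4_0(p)$, and $-2$ lies in the coset of $2$ shifted by $2$, i.e. in whichever of $H^4_1(p),H^4_3(p)$ does not contain $2$. Finally, from $2\cdot 5=10\in H^4_0(p)$ the coset index of $5$ is the negative of that of $2$, so $5$ — and hence $-5$ — lies in the one of $H^4_1(p),H^4_3(p)$ not containing $2$ (resp. $-2$).

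Given this bookkeeping the three SDR claims drop out. For \eqref{eq:SDR1}: $1\in H^4_0(p)$, $-1\in H^4_2(p)$, and $\{2,-2\}$ meets both $H^4_1(p)$ and $H^4_3(p)$, so $\{\pm 1,\pm 2\}$ is a transversal. For the $i=1$ instance of \eqref{eq:SDR2}, i.e. $\{1,-2,4,-5\}$: we have $1\in H^4_0(p)$, $4\in H^4_2(p)$, and $\{-2,-5\}$ meets both $H^4_1(p)$ and $H^4_3(p)$ because $2,5$ — hence $-2,-5$ — lie in distinct cosets there; so it is an SDR. For the $i=2$ instance, observe $\{2,-1,5,-4\}=(-1)\cdot\{1,-2,4,-5\}$ as sets; since multiplication by $-1$ is a bijection of $\Z^{\times}_p$ permuting the cosets of $H^4(p)$, it sends an SDR to an SDR, so $\{2,-1,5,-4\}$ is one too. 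Theorem~\ref{thm:main_w-1dpr} then yields $K(3p^r,7)=(p^r-1)/4$ for all $r\geq 1$.

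There is no deep obstacle here; the only thing to be careful about is the coset arithmetic, above all the fact that under $p\equiv 5\ (\bmod\ 8)$ negation acts as the shift-by-$2$ permutation on the four cosets of $H^4(p)$ — this is what makes the $i=2$ condition automatic and what lets the cosets of $4,-4,-2,-5$ be pinned down from those of $2$ and $5$.
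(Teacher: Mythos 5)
Your proof is correct and follows essentially the same route as the paper: apply Theorem~\ref{thm:main_w-1dpr} with $w=7$, $d=2$, and verify the SDR conditions \eqref{eq:SDR1}--\eqref{eq:SDR2} by placing $-1\in H^4_2(p)$ (since $p\equiv 5\ (\bmod\ 8)$), using that $2$ is a quadratic non-residue and $10\in H^4_0(p)$, and reducing the $i=2$ set to the $i=1$ set via multiplication by $-1$. Your explicit check that $p\neq 5$, hence $p\geq 13>w$, is a small point the paper leaves implicit, but it does not change the argument.
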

\begin{proof}
The two conditions in \eqref{eq:SDR1} and \eqref{eq:SDR2} claim that each of $\{1,-1,2,-2\}$, $\{1,-2,4,-5\}$, $\{-1,2,-4,5\}$ forms an SDR of $\mathcal{H}^4(p)=\{H^4_0(p),H^4_1(p),H^4_2(p),H^4_3(p)\}$.
As, in the first set, $1$ and $-1$ are in distinct cosets, $\{1,-2,4,-5\}$ is an SDR if and only if $\{-1,2,-4,5\}$ is an SDR.
So, it suffices to consider the first two sets $\{1,-1,2,-2\}$ and $\{1,-2,4,-5\}$.
Note that $H^4_0(p)\cup H^4_2(p)=Q(p)$, the collection of quadratic residues modulo $p$.

Suppose $\alpha$ is a primitive element of $\mathbb{Z}_p^{\times}$.
Observe that an element $\alpha^e\in H^4_i(p)$ if and only if $e\equiv i\ (\bmod\ 4)$.
Since $-1=\alpha^{(p-1)/2}$ and $(p-1)/2\equiv 2\ (\bmod\ 4)$, we have $-1\in H^4_2(p)$.
By Lemma~\ref{lem:QRLaw}(ii), $2\notin Q(p)$.
So, either $2\in H^4_1(p)$ or $2\in H^4_3(p)$.
As $-1\in H^4_2(p)$, we further have either $2\in H^4_1(p)$ and $-2\in H^4_3(p)$ or $2\in H^4_3(p)$ and $-2\in H^4_1(p)$.
Hence $\{1,-1,2,-2\}$ is an SDR.

Now, consider the set $\{1,-2,4,-5\}$.
The assumption $10^{(p-1)/4}\equiv 1\ (\bmod\ p)$ makes sure that $10\in H^4_0(p)$.
Since $-1\in H^4_2(p)$ and $2$ is either in the coset $H^4_1(p)$ or $H^4_3(p)$, we have either $-2\in H^4_1(p)$ and $-5\in H^4_3(p)$ or $-2\in H^4_3(p)$ and $-5\in H^4_1(p)$.
This completes the proof.
\end{proof}

The primes that satisfy the conditions given in Corollary~\ref{coro:3pr7} are 37, 53, 173, 277, 317, 397, 613, 733, 757, 773, 797, and so on.

\section{New Optimal CACs based on Extending Smaller-length CACs}\label{sec:CAC-recursive}

Let $p$ be a prime.
This section includes three classes of CACs of length $L=p^r, wp^r$, and $(2w-1)p^r$, by extending a code in $\CAC(p,w)$.

\subsection{Optimal CACs of length $p^r$}

This subsection is devoted to generalize the following result, given in \cite[Theorem 6]{SW10}.

\begin{theorem}[\cite{SW10}] \label{thm:pr-recall-optimal}
For any odd prime $p$ and positive integer $r$, 
\begin{align*}
K\left(p^r,(p+1)/2\right)=\frac{p^r-1}{p-1}.
\end{align*}
\end{theorem}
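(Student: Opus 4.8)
The plan is to establish the claimed value $K(p^r,(p+1)/2)=(p^r-1)/(p-1)$ by proving a matching construction and upper bound, following the template of Theorem~\ref{thm:main_w-1dpr} but now with the trivial ``multiplier'' component removed. Write $w=(p+1)/2$, so $2w-2=p-1$. First I would give the construction: take $\Gamma=Q(p)$, the quadratic residues modulo $p$, which has size $(p-1)/2$, and for each $g\in\mathcal{S}_r(Q(p))$ define the equi-difference $w$-subset $S_g=\{kg\in\mathbb{Z}_{p^r}:\,k=0,1,\ldots,w-1\}$. Then $|\mathcal{S}_r(Q(p))|=\frac{p-1}{2}\cdot\frac{p^r-1}{p-1}=\frac{p^r-1}{2}\cdot\frac{2}{p-1}$; more precisely by \eqref{eq:S_r(A)-size} we get $|\mathcal{S}_r(Q(p))|=(p^r-1)/2\cdot\frac{2}{p-1}=\frac{p^r-1}{p-1}$. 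The difference set is $d^*(S_g)=\{kg:\,k=\pm1,\ldots,\pm(w-1)\}$, and since $\{\pm1,\ldots,\pm(w-1)\}=\{\pm1,\ldots,\pm\frac{p-1}{2}\}=\mathbb{Z}_p^*$ covers all nonzero residues mod $p$, the key is that each $k\in\mathbb{Z}_p^*\subseteq\mathsf{L}_0$, so by Proposition~\ref{prop:p-ary} if $ig=jh$ with $ig,jh$ on the same layer $\mathsf{L}_t$, then $i\cdot g_t=j\cdot h_t\,(\bmod\,p)$ with $g_t,h_t\in Q(p)$; since exactly one of $k,-k$ lies in $Q(p)$ for each $k\neq0$ (because $(\frac{-1}{p})=-1$ when $p\equiv3\pmod4$, but in general $\{1,\ldots,\frac{p-1}{2}\}$ need not be an SDR of $\{Q(p),\overline{Q(p)}\}$) one must argue more carefully: the correct statement is that the $d^*(S_g)$ are disjoint because $g\in Q(p)$-scaled sets partition $\mathbb{Z}_p^*$ appropriately. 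I would verify disjointness directly: if $d^*(S_g)\cap d^*(S_h)\neq\emptyset$ then $ig_t=jh_t$ for some $i,j\in\{\pm1,\ldots,\pm(w-1)\}$ and then $(i/j)=(h_t/g_t)\in Q(p)$; combined with the layer argument one forces $g=h$. (This is exactly the $d=1$, $a=1$ specialization of the mechanism in the proof of Theorem~\ref{thm:construction_direct}, so I would state it as such rather than redo it.)

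Second, for the upper bound I would apply the same exceptional-codeword analysis as in the proof of Theorem~\ref{thm:main_w-1dpr}, now with $L=p^r$. Let $\mathcal{E}\subseteq\mathcal{C}$ be the set of exceptional codewords and for $S\in\mathcal{E}$ put $H_S=\HH(d(S))$. By Corollary~\ref{cor:excpetional-bound}, $2\leq|H_S|\leq 2w-2=p-1$; by Proposition~\ref{prop:stabilizer}(i), $|H_S|$ divides $p^r$, so $|H_S|$ is a power of $p$; since $1<|H_S|\leq p-1<p$, this is impossible. Hence there are no exceptional codewords at all, and every $w$-subset of $\mathbb{Z}_{p^r}$ is non-exceptional, so $|d^*(S)|=2w-2=p-1$ for every codeword. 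By the disjoint-difference-set property, $(p-1)|\mathcal{C}|\leq|\mathbb{Z}_{p^r}^*|=p^r-1$, giving $|\mathcal{C}|\leq(p^r-1)/(p-1)$, which matches the construction.

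The main obstacle is not the upper bound (which is essentially immediate once the $|H_S|$-is-a-$p$-power observation is made, and is strictly easier here than in Theorem~\ref{thm:main_w-1dpr} because there is no ``$a$'' part to worry about) but rather making the disjointness of the difference sets in the construction fully rigorous when $p\not\equiv3\pmod4$, i.e., when $-1$ is itself a quadratic residue. In that case $\{1,\ldots,\frac{p-1}{2}\}$ is \emph{not} a system of representatives for $\{Q(p),\mathbb{Z}_p^*\setminus Q(p)\}$, yet the claim still holds; the resolution is that what one actually needs is only that the cosets $g\cdot Q(p)$ for $g$ ranging over $Q(p)$ all equal $Q(p)$ itself, so the images $\{ig_t:i\in\mathbb{Z}_p^*\}$ sweep out all of $\mathbb{Z}_p^*$ regardless, and distinctness is forced by tracking the layer $t$ and the residue class simultaneously exactly as in the proof of Theorem~\ref{thm:construction_direct}. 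I would therefore phrase the construction step as a direct corollary of (the argument inside) Theorem~\ref{thm:construction_direct} with $d=1$ and the trivial first coordinate suppressed, rather than as a standalone computation, to avoid re-deriving Proposition~\ref{prop:p-ary}'s consequences.
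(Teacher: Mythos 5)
Your upper-bound half is correct and is exactly the paper's route: for $L=p^r$ an exceptional codeword $S$ would have $2\leq|\HH(d(S))|\leq 2w-2=p-1$ while $|\HH(d(S))|$ divides $p^r$, which is impossible, so no exceptional codewords exist and $(p-1)|\C|\leq p^r-1$ (this is precisely Theorem~\ref{thm:pr-recall-upper}, which the paper invokes for this bound). The construction half, however, has a genuine error. With $\Gamma=Q(p)$ you have, by \eqref{eq:S_r(A)-size}, $|\mathcal{S}_r(Q(p))|=\frac{p-1}{2}\cdot\frac{p^r-1}{p-1}=\frac{p^r-1}{2}$, not $\frac{p^r-1}{p-1}$; your arithmetic slip hides a real failure, since each codeword of weight $w=(p+1)/2$ contributes $p-1$ distinct differences, and $(p-1)\cdot\frac{p^r-1}{2}>p^r-1$ for $p\geq 5$, so the family $\{S_g:\,g\in\mathcal{S}_r(Q(p))\}$ cannot have pairwise disjoint difference sets. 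Concretely, for $r=1$ every $S_g$ with $g$ invertible has $d^*(S_g)=\{\pm g,\ldots,\pm\frac{p-1}{2}g\}=\Z_p^*$, so any two of your $(p-1)/2$ codewords collide; no coset/SDR argument can rescue $\Gamma=Q(p)$, and the issue has nothing to do with whether $-1$ is a quadratic residue. Also, this is not the ``$d=1$'' specialization of Theorem~\ref{thm:construction_direct}: $d=1$ produces length $(w-1)p^r=\frac{p-1}{2}p^r$, not $p^r$.

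The correct construction takes $\Gamma=\{1\}$, i.e., the generators are $\mathcal{S}_r(\{1\})$, the elements of $\Z_{p^r}$ whose least significant nonzero $p$-ary digit is $1$; this set has size $\frac{p^r-1}{p-1}$, and it is exactly the code of Levenshtein cited in the paper immediately after the theorem. Disjointness is then the $m=1$ case of Theorem~\ref{thm:construction_pr}, applied to the one-codeword base code in $\CACe(p,(p+1)/2)$ generated by $1$ (note $p=2w-1$, so the hypothesis $p\geq 2w-1$ holds and the digit argument via Proposition~\ref{prop:p-ary} forces $i=j$ and then $g=h$). Equivalently, it is the $d=w-1$ (not $d=1$) specialization of Theorem~\ref{thm:construction_direct}: there $H^{2d}(p)=\langle\alpha^{p-1}\rangle=\{1\}$, condition \eqref{eq:SDR1} asks that $\{\pm1,\ldots,\pm(w-1)\}=\Z_p^*$ be an SDR of the $p-1$ singleton cosets, which is automatic, and \eqref{eq:SDR2} is vacuous. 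With that replacement, your matching bound argument completes the proof.
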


A code in $\CAC(p^r,(p+1)/2)$ whose size attains the maximum code size is proposed in~\cite[Theorem 10]{Levenshtein07}.
Such a code is equi-difference with generators consist of all integers in $\Z_{p^r}$ whose first nonzero symbol in the $p$-ary representation is $1$.
Inspired by this construction, we have the following result.

\begin{theorem}\label{thm:construction_pr}
Let $p$ be a prime such that $p\geq 2w-1$.
If there is a code $\mathcal{C}\in\CACe(p,w)$ with $m$ codewords, then for any integer $r\geq 1$, there exists a code in $\CACe(p^r,w)$ with $m(p^r-1)/(p-1)$ codewords.
\end{theorem}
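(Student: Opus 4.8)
The plan is to mimic the lifting idea already used in Theorem~\ref{thm:construction_direct}: express each element of $\Z_{p^r}$ by its $p$-ary representation and spread a given code $\mathcal{C}\in\CACe(p,w)$ across the $r$ layers $\mathsf{L}_0,\mathsf{L}_1,\ldots,\mathsf{L}_{r-1}$ of $\Z^*_{p^r}$. Concretely, suppose $\mathcal{C}=\{S_1,\ldots,S_m\}$ with $S_\ell=\{0,g_\ell,2g_\ell,\ldots,(w-1)g_\ell\}$ an equi-difference codeword of $\Z_p$ with generator $g_\ell\in\Z^*_p$. For each generator $g_\ell$ and each element $c\in\mathcal{S}_r(\{g_\ell\})\subseteq\Z^*_{p^r}$ whose nonzero least significant digit equals $g_\ell$, form the equi-difference $w$-subset $S_{c}\triangleq\{0,c,2c,\ldots,(w-1)c\}\subseteq\Z_{p^r}$. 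Since $|\mathcal{S}_r(\{g_\ell\})|=(p^r-1)/(p-1)$ by \eqref{eq:S_r(A)-size} and there are $m$ generators, this yields $m(p^r-1)/(p-1)$ codewords, the right count. (One should note the $g_\ell$ are distinct modulo $p$ because $d^*$ of distinct codewords of $\mathcal{C}$ are disjoint, so the sets $\mathcal{S}_r(\{g_\ell\})$ are pairwise disjoint and no codeword is produced twice.)

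The work is to verify the disjoint-difference-set property for this lifted family. Take two lifted codewords $S_{c}$ and $S_{c'}$ with $c\neq c'$, $c\in\mathcal{S}_r(\{g_\ell\})$, $c'\in\mathcal{S}_r(\{g_{\ell'}\})$, and suppose $d^*(S_c)\cap d^*(S_{c'})\neq\emptyset$, say $ic=jc'$ in $\Z_{p^r}$ for some $i,j\in\{\pm1,\ldots,\pm(w-1)\}$. Because $p\geq 2w-1$, every such $i,j$ lies in $\mathsf{L}_0=\Z^\times_{p^r}$, so Proposition~\ref{prop:p-ary} applies: $ic$ and $jc'$ must lie on the same layer $\mathsf{L}_t$, hence $c$ and $c'$ lie on the same layer $\mathsf{L}_t$, and the least significant nonzero digits satisfy $i\cdot c_t=j\cdot c'_t\pmod p$. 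But $c_t=g_\ell\pmod p$ and $c'_t=g_{\ell'}\pmod p$ by construction, so $i\,g_\ell=j\,g_{\ell'}\pmod p$, i.e. $i g_\ell-j g_{\ell'}\equiv0$, which says that either $d^*(S_\ell)$ and $d^*(S_{\ell'})$ share an element in $\Z_p$ (forcing $\ell=\ell'$ by the CAC property of $\mathcal{C}$, then $i=\pm j$), or one of $i g_\ell$, $j g_{\ell'}$ is $0$, impossible. Once $\ell=\ell'$ and $i=\pm j$, I would reduce $ic=jc'$ to $c=\pm c'$; the case $c=-c'$ needs a separate tiny argument (both $c,c'$ have the same nonzero least significant digit $g_\ell$, while $-c'$ has digit $-g_\ell\neq g_\ell$ since $p$ is odd and $g_\ell\ne 0$, contradiction), leaving $c=c'$, contrary to assumption.

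The main obstacle I expect is bookkeeping around the signs and the "same layer" reduction: one must be careful that Proposition~\ref{prop:p-ary} is being invoked with the multiplier in $\mathsf{L}_0$ (which is exactly why $p\geq 2w-1$ is assumed, so that $\{\pm1,\ldots,\pm(w-1)\}\subseteq\Z^\times_{p^r}$), and that the passage from $ic=jc'$ on layer $\mathsf{L}_t$ to an equation among least-significant digits modulo $p$ is legitimate. Everything else — the codeword count, the equi-difference form, the disjointness of the generator pools $\mathcal{S}_r(\{g_\ell\})$ — is routine given \eqref{eq:S_r(A)-size} and the properties of $\mathcal{C}$. I would present the proof in the same style as Theorem~\ref{thm:construction_direct}, i.e. set up the lifted family explicitly, compute $|d^*(S_c)|$, then argue by contradiction using Proposition~\ref{prop:p-ary}.
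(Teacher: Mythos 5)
Your proposal is correct and follows essentially the same route as the paper's proof: lift the base generators via $\mathcal{S}_r(\Gamma)$, use Proposition~\ref{prop:p-ary} (legitimate precisely because $p\geq 2w-1$ puts $\{\pm1,\ldots,\pm(w-1)\}$ in $\mathsf{L}_0$) to reduce a coincidence $ic=jc'$ to an equation among least-significant digits mod $p$, and then invoke the disjoint-difference-set property of the base code. The only cosmetic difference is bookkeeping at the end: you split into $i=\pm j$ and exclude $c=-c'$ by comparing digits, whereas the paper splits on $g_t\neq h_t$ versus $g_t=h_t$; both close the argument correctly.
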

\begin{proof}
Let $\Gamma$ denote the set of $m$ generators of $\mathcal{C}$.
By definition, one has $ig\neq jh\ (\bmod\, p)$ for $i,j\in\{\pm1,\pm2,\ldots,\pm (w-1)\}$, $g,h\in\Gamma$ provided that $g\neq h$.


Consider the set $\mathcal{S}_r(\Gamma)$.
For $g\in\mathcal{S}_r(\Gamma)$, define a $w$-subset $S_g=\{jg\in\mathbb{Z}_{p^r}:\,j=0,1,2,\ldots,w-1\}$, whose difference set is of the form $d^*(S_g)=\{jg\in\mathbb{Z}_{p^r}:\,j=\pm1,\pm2,\ldots,\pm(w-1)\}$.
In what follows, we shall show that these $w$-subsets form a code in $\CACe(p^r,w)$, that is, $d^*(S_g)\cap d^*(S_h)=\emptyset$ for any distinct $g,h\in\mathcal{S}_r(\Gamma)$.
Hence, by \eqref{eq:S_r(A)-size}, there are $m(p^r-1)/(p-1)$ codewords, as desired.

Since $p\geq 2w-1$, by Proposition~\ref{prop:p-ary}, one has $d^*(S_g)\subset\mathsf{L}_t$ if $g\in\mathsf{L}_t$.
It follows that $d^*(S_g)\cap d^*(S_h)=\emptyset$ whenever $g$ and $h$ are in distinct layers in the $p$-ary representation.
Now, it suffices to consider the case when $g$ and $h$ are in the same layer, say $\mathsf{L}_t$ for some $t$.
Suppose to the contrary that $ig=jh$ (mod $p^r$) for some $i,j\in\{\pm1,\pm2,\cdots,\pm(w-1)\}$.
One has $i,j\in\mathsf{L}_0$ due to $p\geq 2w-1$.
By Proposition~\ref{prop:p-ary} again, $i\cdot g_t=j\cdot h_t$ (mod $p$).
If $g_t\neq h_t$, then a contradiction occurs due to the assumption that $g_t,h_t\in\Gamma$ are two distinct generators in the given code $\mathcal{C}\in\CACe(p,w)$.
If $g_t=h_t$, it further implies that $(i-j)g_t=0$ (mod $p$), which is impossible because of $i,j\in\{\pm1,\pm2,\cdots,\pm(w-1)\}$ and $p\geq 2w-1$.
This completes the proof.
\end{proof}


To prove $K(p^r,(p+1)/2)\leq (p^r-1)/(p-1)$ in Theorem~\ref{thm:pr-recall-optimal}, \cite{SW10} provided a more general result as follows.

\begin{theorem}[\cite{SW10}, Theorem 5]\label{thm:pr-recall-upper}
Suppose the prime factors of $L$ are all larger than or equal to $2w-1$, then 
\begin{align*}
K(L,w)\leq\left\lfloor\frac{L-1}{2w-1}\right\rfloor.
\end{align*}
\end{theorem}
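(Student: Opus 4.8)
The plan is to bound the number of codewords in any $\mathcal{C}\in\CAC(L,w)$ by counting how much of $\mathbb{Z}_L^*$ each codeword's difference set occupies, separating the exceptional codewords (which may have small difference sets) from the rest. First I would recall that for any $w$-subset $S\subseteq\mathbb{Z}_L$ that is \emph{not} exceptional, $|d^*(S)|\geq 2w-1$: indeed $|d(S)|=|d^*(S)|+1$, and by Kneser's theorem (Theorem~\ref{thm:Kneser}) applied to $A=S$, $B=-S$ we have $|d(S)|\geq 2|S+H|-|H|$ where $H=\HH(d(S))$; if $H=\{0\}$ then $|d(S)|\geq 2w-1$, so $|d^*(S)|\geq 2w-2$, and since $d^*(S)$ is symmetric ($d^*(S)=-d^*(S)$) its cardinality must be even... here I would need to be careful: the clean statement is that a non-exceptional equi-difference-type bound gives $|d^*(S)|\geq 2w-2$, but the theorem's denominator is $2w-1$, so the real engine must be a divisibility argument using the hypothesis on the prime factors of $L$.

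The key step is to show that every periodic $w$-subset $S$ (one with $\HH(d(S))\neq\{0\}$) is ruled out by the hypothesis. By Corollary~\ref{cor:excpetional-bound}, if $S$ is exceptional then $2\leq|\HH(d(S))|\leq 2w-2$. But $|\HH(d(S))|$ divides $L$ by Proposition~\ref{prop:stabilizer}(i), and every prime factor of $L$ is $\geq 2w-1$, so any nontrivial divisor of $L$ is $\geq 2w-1>2w-2$ — contradiction. Hence \emph{no} $w$-subset of $\mathbb{Z}_L$ is exceptional. Now I claim moreover that every $w$-subset satisfies $|d^*(S)|\geq 2w-1$, not merely $\geq 2w-2$: if $|d^*(S)|=2w-2$ exactly then (since $S$ is not exceptional in the sense $|d^*(S)|<2(|S|-1)=2w-2$ fails) we are in the extremal equi-difference-like case, and I would argue via Kneser that equality $|d(S)|=|S|+|-S|-|H|$ with $H$ trivial forces $|d(S)|=2w-1$ — reconciling this edge case is where I must be most careful, possibly by invoking that $|\HH(d(S))|$ divides both $L$ and, were $|d(S)|=2w-2$, one would get a forbidden small stabilizer.

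Granting that every codeword $S\in\mathcal{C}$ has $|d^*(S)|\geq 2w-1$, the disjoint-difference-set property gives
\begin{align*}
L-1=|\mathbb{Z}_L^*|\geq\sum_{S\in\mathcal{C}}|d^*(S)|\geq (2w-1)|\mathcal{C}|,
\end{align*}
so $|\mathcal{C}|\leq (L-1)/(2w-1)$, and since $|\mathcal{C}|$ is an integer, $|\mathcal{C}|\leq\lfloor(L-1)/(2w-1)\rfloor$. Taking the maximum over $\mathcal{C}$ yields $K(L,w)\leq\lfloor(L-1)/(2w-1)\rfloor$.

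I expect the main obstacle to be the boundary case $|d^*(S)|=2w-2$: the crude Kneser bound only gives $|d^*(S)|\geq 2w-2$ in the aperiodic case, and to push to $2w-1$ I need to show $2w-2$ is actually impossible under the prime-factor hypothesis. The resolution should come from the refined inequality~\eqref{eq:Kneser1}: if $|d(S)|\leq 2w-2$ with $H=\HH(d(S))$, then $2|S+H|-|H|\leq 2w-2$; combined with $|S+H|\geq|S|=w$ this forces $|H|\geq 2$, i.e. $S$ is periodic — which is exactly what the divisibility hypothesis forbids. So the whole argument funnels through "periodic $w$-subsets do not exist when all prime factors of $L$ are $\geq 2w-1$," and everything else is bookkeeping.
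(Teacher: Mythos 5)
Your key step---ruling out exceptional codewords via stabilizers (if $|d^*(S)|<2w-2$ then $2\le|\HH(d(S))|\le 2w-2$ by Corollary~\ref{cor:excpetional-bound}, while $|\HH(d(S))|$ divides $L$ and every nontrivial divisor of $L$ is at least $2w-1$)---is exactly the intended engine; the paper says as much in the remark following the theorem. The genuine gap is your attempted upgrade from $|d^*(S)|\ge 2w-2$ to $|d^*(S)|\ge 2w-1$ for every codeword. That upgrade is impossible: for $L=p\ge 2w-1$ prime, the set $S=\{0,1,\ldots,w-1\}$ has $d^*(S)=\{\pm 1,\ldots,\pm(w-1)\}$ of size exactly $2w-2$ and trivial stabilizer, so no divisibility hypothesis on $L$ can exclude it. Your proposed resolution in the last paragraph conflates $d(S)$ with $d^*(S)$: the case you need to kill is $|d^*(S)|=2w-2$, i.e.\ $|d(S)|=2w-1$, whereas the Kneser estimate $|d(S)|\ge 2|S+H|-|H|$ only forces a nontrivial stabilizer when $|d(S)|\le 2w-2$, i.e.\ when $S$ is exceptional---so it merely re-proves what you already established and leaves the boundary case untouched.

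In fact the statement cannot be proved as printed because it is false as written; the denominator should be $2w-2$ (a slip in quoting \cite{SW10}). With $2w-1$ it contradicts Theorem~\ref{thm:pr-recall-optimal}: for instance $K(13,3)\ge 3$ (generators $1,3,4$ give disjoint difference sets covering $\Z_{13}^*$) while $\lfloor 12/5\rfloor=2$, and more generally $K(p^r,(p+1)/2)=(p^r-1)/(p-1)>\lfloor(p^r-1)/p\rfloor$. Moreover, the only place the theorem is invoked, the proof of Theorem~\ref{thm:main_pr}, requires precisely the bound $\lfloor(L-1)/(2w-2)\rfloor$. Your argument, stopped at ``no $w$-subset is exceptional, hence $|d^*(S)|\ge 2w-2$ for every codeword,'' combined with the disjoint-difference-set count $L-1\ge(2w-2)|\C|$, is a complete and correct proof of that corrected statement, and it coincides with the proof the paper has in mind.
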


Note that the proof of Theorem~\ref{thm:pr-recall-upper} is to exclude the existence of exceptional codewords. 
The key idea is that $\HH(d(S))$ is a non-trivial subgroup of $\Z_L$ whenever $S$ is exceptional, which can also be deduced by Proposition~\ref{prop:stabilizer} and Corollary~\ref{cor:excpetional-bound}.

We now show the construction described in Theorem~\ref{thm:construction_pr} is optimal in some cases.

\begin{theorem}\label{thm:main_pr}
Let $p$ be a prime such that $p-1$ is divided by $2w-2$.
If there is a code in $\CACe(p,w)$ with $(p-1)/(2w-2)$ codewords, then for any integer $r\geq 1$,
\begin{align*}
K\left(p^r,w\right)=\frac{p^r-1}{2w-2}.
\end{align*}
\end{theorem}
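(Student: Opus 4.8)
The plan is to combine the construction in Theorem~\ref{thm:construction_pr} with a matching upper bound. By hypothesis there is a code in $\CACe(p,w)$ with $m=(p-1)/(2w-2)$ codewords, so Theorem~\ref{thm:construction_pr} (applicable since $p-1=(2w-2)m\geq 2w-2$, hence $p\geq 2w-1$) produces a code in $\CACe(p^r,w)\subseteq\CAC(p^r,w)$ with $m(p^r-1)/(p-1)=(p^r-1)/(2w-2)$ codewords. This already gives $K(p^r,w)\geq (p^r-1)/(2w-2)$, so it remains only to prove $K(p^r,w)\leq(p^r-1)/(2w-2)$.

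For the upper bound I would argue that no $w$-subset $S\subseteq\Z_{p^r}$ can be exceptional. Indeed, if $S$ were exceptional, then by Corollary~\ref{cor:excpetional-bound} the subgroup $H_S=\HH(d(S))$ satisfies $2\leq|H_S|\leq 2w-2$. But $|H_S|$ divides $p^r$ by Proposition~\ref{prop:stabilizer}(i), so $|H_S|$ is a power of $p$ and in particular a multiple of $p$; since $p=1+(2w-2)m\geq 2w-1>2w-2$, this contradicts $|H_S|\leq 2w-2$. Hence every $w$-subset of $\Z_{p^r}$ is non-exceptional, and by the remark preceding Proposition~\ref{prop:stabilizer} (or directly from the disjoint-difference-set property, since each codeword then contributes exactly $2w-2$ distinct nonzero differences into $\Z^*_{p^r}$) we get
\begin{align*}
(2w-2)\,K(p^r,w)\leq |\Z^*_{p^r}|=p^r-1,
\end{align*}
so $K(p^r,w)\leq\lfloor(p^r-1)/(2w-2)\rfloor$. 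Since $2w-2\mid p-1\mid p^r-1$, this floor equals $(p^r-1)/(2w-2)$, and combined with the lower bound the theorem follows.

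There is no serious obstacle here: the construction half is Theorem~\ref{thm:construction_pr} verbatim, and the upper-bound half is the short divisibility argument above, which is exactly the mechanism already used for Theorem~\ref{thm:pr-recall-upper} (and noted there as excluding exceptional codewords). The only point requiring a little care is checking that the hypothesis $(p-1)/(2w-2)\geq 1$ forces $p\geq 2w-1$, so that both Theorem~\ref{thm:construction_pr} and the non-existence argument for exceptional codewords apply; this is immediate since $p-1\geq 2w-2$ gives $p\geq 2w-1$.
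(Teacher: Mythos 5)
Your proposal is correct and follows essentially the same route as the paper: the lower bound is Theorem~\ref{thm:construction_pr} applied with $m=(p-1)/(2w-2)$ (after noting $2w-2\mid p-1$ forces $p\geq 2w-1$), and the upper bound is obtained by ruling out exceptional codewords, which the paper handles by citing Theorem~\ref{thm:pr-recall-upper} and which you re-derive inline via Corollary~\ref{cor:excpetional-bound} and Proposition~\ref{prop:stabilizer} --- exactly the mechanism the paper itself attributes to that cited result. The only cosmetic slip is the word ``exactly'' in your counting step: a non-exceptional $w$-subset satisfies $|d^*(S)|\geq 2w-2$ rather than equality, but the inequality $(2w-2)|\C|\leq p^r-1$ and hence the conclusion are unaffected.
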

\begin{proof}
The assumption that $p-1$ is divisible by $2w-2$ guarantees $p\geq 2w-1$.
By Theorem~\ref{thm:construction_pr}, there exists a code in $\CACe(p^r,w)$ with $(p^r-1)/(2w-2)$ codewords.
As $p^r$ has only one prime factor $p$, the result follows from Theorem~\ref{thm:pr-recall-upper}.

\end{proof}

\begin{example}\label{ex:37r4} \rm
Let $p=37$ and $w=4$.
One can check that $\Gamma=\{1,6,8,10,11,14\}$ forms a set of generators of a code in $\CACe(37,4)$.
By Theorem~\ref{thm:main_pr}, we have $K(37^r,4)=(37^r-1)/6$ for any integer $r\geq 1$.
Take $r=2$ as an example.
The code in $\CAC^e(37^2,4)$ obtained by the construction of Theorem~\ref{thm:construction_pr} is of size $228$, in which the set of generators is
\begin{align*}
\{a+37b:\,a\in\Gamma\text{ and }0\leq b\leq 36\text{ or }a=0\text{ and }b\in\Gamma\}.
\end{align*} 
\end{example}


\subsection{Optimal CACs of length $wp^r$}

\begin{theorem}\label{thm:construction_wpr}
Let $p$ be a prime such that $p\geq 2w-1$.
If there is a code in $\CACe(p,w)$ with $m$ codewords and 
\begin{equation}\label{eq:quadratic-non-residue}
\left(\frac{i}{p}\right)\left(\frac{i-w}{p}\right)=-1,\ \forall i=1,2,\ldots,w-1, 
\end{equation}
then for any integer $r\geq 1$, there exists a code $\C\in\CACe(wp^r,w)$ with
\begin{align*}
|\C|=\frac{m(p^r-1)}{p-1}+\frac{p^r-1}{2}+1
\end{align*}  
codewords.
\end{theorem}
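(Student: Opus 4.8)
The plan is to mimic the direct construction of Theorem~\ref{thm:construction_direct} and the earlier constructions of Theorems~\ref{thm:known_wp} and \ref{thm:construction_pr}, but now working in $\mathbb{Z}_{wp^r}\cong\mathbb{Z}_w\times\mathbb{Z}_{p^r}$ via the CRT correspondence $\theta$ (which is legitimate since $\gcd(w,p)=1$ because $p\geq 2w-1>w$). The target code $\mathcal{C}$ will be assembled from two families of codewords: a ``lifted'' family coming from the given code $\mathcal{C}_0\in\CACe(p,w)$ with $m$ generators, and a ``quadratic-residue'' family contributing the $\frac{p^r-1}{2}+1$ extra codewords. For the first family, let $\Gamma_0$ be the set of $m$ generators of $\mathcal{C}_0$; for each $g\in\mathcal{S}_r(\Gamma_0)\subseteq\mathbb{Z}^*_{p^r}$ (size $m(p^r-1)/(p-1)$ by \eqref{eq:S_r(A)-size}) define the equi-difference $w$-subset $S_g=\{k(1,g):k=0,1,\dots,w-1\}\subseteq\mathbb{Z}_w\times\mathbb{Z}_{p^r}$, whose difference set is $\{k(1,g):k=\pm1,\dots,\pm(w-1)\}$; since $\gcd(w,p)=1$ the first coordinate $k\bmod w$ ranges over $\mathbb{Z}_w$ and never vanishes for these $k$. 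For the second family I would imitate \cite[Theorem 5.1]{MMSJ07}: take, for each quadratic residue class, a codeword supported on the ``$\{0\}\times\mathbb{Z}_{p^r}$-fiber'' type structure plus one exceptional codeword of the form $\{0,w,2w,\dots\}$-coset (i.e.\ the subgroup $\{0\}\times(\text{something})$ or the coset corresponding to the unique order-$p^r$ or order-$w$ subgroup), so as to exactly cover the differences that are $\equiv0\pmod{w}$ in the first coordinate, i.e.\ the $(p^r-1)$ elements of $\{0\}\times\mathbb{Z}^*_{p^r}$, split into halves by quadratic residuosity using \eqref{eq:quadratic-non-residue}, together with the ``diagonal'' codeword.

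The key steps, in order: (1) set up the CRT identification and the two families explicitly, and count codewords to confirm the claimed size $\frac{m(p^r-1)}{p-1}+\frac{p^r-1}{2}+1$; (2) verify the disjoint-difference-set property \emph{within} the lifted family — this is exactly the argument of Theorem~\ref{thm:construction_pr}/Theorem~\ref{thm:construction_direct}: if $i(1,g)=j(1,h)$ then $i\equiv j\pmod w$ forces $i,j$ in the same residue class, the second coordinates force $g,h$ on the same layer $\mathsf{L}_t$, and Proposition~\ref{prop:p-ary} reduces everything mod $p$, where disjointness of $\mathcal{C}_0$ closes the argument; (3) verify the disjoint-difference-set property within the quadratic-residue family — here the second-coordinate differences are governed by the multiplicative structure of $\mathbb{Z}^\times_{p^r}$ and the condition \eqref{eq:quadratic-non-residue} guarantees, via multiplicativity of the Legendre symbol \eqref{eq:Legendre} lifted to $p^r$ (using that $\mathbb{Z}^\times_{p^r}$ is cyclic and squares in $\mathbb{Z}^\times_{p^r}$ reduce to squares in $\mathbb{Z}^\times_p$), that the relevant differences $i$ and $i-w$ land in opposite cosets so no collision occurs; and (4) verify \emph{cross} disjointness between the two families, which should be automatic because the lifted family's differences all have nonzero first coordinate (they are $k(1,g)$ with $k\not\equiv0\bmod w$) while the quadratic-residue family's differences all lie in $\{0\}\times\mathbb{Z}^*_{p^r}$.

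I expect the main obstacle to be step~(3) together with the precise bookkeeping in step~(4): one must choose the quadratic-residue-family codewords so that their differences (a) stay inside the $\{0\}\times\mathbb{Z}^*_{p^r}$ part — this is why the generators should be ``$\equiv0\bmod w$'' in the first coordinate, i.e.\ multiples of $w$ — and (b) partition $\{0\}\times\mathbb{Z}^*_{p^r}$ into the $\frac{p^r-1}{2}$ pieces coming from half of $\mathbb{Z}^\times_{p^r}$ plus the difference set of the one remaining exceptional codeword, which soaks up the elements of $\{0\}\times\mathbb{Z}^*_{p^r}$ lying outside $\mathbb{Z}^\times_{p^r}$ (the ``higher-layer'' elements of $\mathbb{Z}_{p^r}$). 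Handling the lift from $\mathbb{Z}_{wp}$ to $\mathbb{Z}_{wp^r}$ here again goes through the $p$-ary-representation/layer machinery and Proposition~\ref{prop:p-ary}, exactly as in Theorem~\ref{thm:construction_direct}; the condition \eqref{eq:quadratic-non-residue} depends only on $p$, not on $r$, which is the clean point that makes the generalization work. Once all four disjointness checks are in place, the union of the two families is the desired $\mathcal{C}\in\CACe(wp^r,w)$ and the proof concludes.
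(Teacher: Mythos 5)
There is a genuine gap: you have swapped the roles of the two families, and with your allocation the quadratic-residue family cannot exist. You place the lifted base-code family at generators $(1,g)$, $g\in\mathcal{S}_r(\Gamma)$, and ask the quadratic-residue family (of $\frac{p^r-1}{2}$ weight-$w$ codewords, plus one extra) to have all of its differences inside $\{0\}\times\Z^*_{p^r}$, indeed to partition that set. But $\{0\}\times\Z^*_{p^r}$ has only $p^r-1$ elements, while each weight-$w$ codeword contributes at least $w-1$ (and, if non-exceptional, $2w-2$) distinct differences, so the family would need on the order of $(w-1)(p^r-1)$ slots; this is impossible for every $w\geq 3$. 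Moreover, inside a single fiber $\{0\}\times\Z_{p^r}$ there is no mod-$w$ wraparound, so the pairs $(i,i-w)$ that hypothesis \eqref{eq:quadratic-non-residue} is designed to control never arise there, i.e., your intended use of the hypothesis in step (3) has no purchase in your geometry.

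The paper's construction is exactly the mirror image of yours. The base code is lifted into the fiber: $\widehat{\Gamma}=\{(0,g):g\in\mathcal{S}_r(\Gamma)\}$, whose codewords have all differences in $\{0\}\times\Z^*_{p^r}$ and are handled by the layer argument of Theorem~\ref{thm:construction_pr}; the quadratic-residue family consists of the codewords generated by $(1,g)$ with $g\in\mathcal{S}_r(Q)$, together with the single extra codeword generated by $(1,0)$ (whose differences are $(\pm j,0)$). A collision between two QR-generated codewords forces, via the first coordinate, either $i=j$ (whence $g=h$) or the wraparound $i+j=w$, and in the latter case Proposition~\ref{prop:p-ary} reduces the second coordinate to $i\cdot h_t\equiv(i-w)\cdot g_t\pmod p$ with $g_t,h_t\in Q(p)$, so multiplicativity of the Legendre symbol gives $\bigl(\frac{i}{p}\bigr)=\bigl(\frac{i-w}{p}\bigr)$, contradicting \eqref{eq:quadratic-non-residue}; cross-disjointness is then immediate from comparing which coordinates vanish. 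Your step (2) would in fact survive (the wraparound case $i=j-w$ there is still covered by the base CAC property since both indices lie in $\{\pm1,\ldots,\pm(w-1)\}$), but the overall plan only works after interchanging which family occupies the $\{0\}$-fiber, and the codeword count $\frac{p^r-1}{2}$ versus $\frac{m(p^r-1)}{p-1}$ makes that interchange forced, not optional.
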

\begin{proof}
Let $\Gamma$ be a set of $m$ generators of the given code in $\CACe(p,w)$, and $Q=Q(p)$ be the set of quadratic residues modulo $p$.
Define the two sets 
\begin{align*}
\widehat{\Gamma}\triangleq\{(0,g)\in\mathbb{Z}_{w}\times\mathbb{Z}_{p^r}:g\in\mathcal{S}_r(\Gamma)\}
\end{align*}
and
\begin{align*}
\widehat{Q}\triangleq\{(1,g)\in\mathbb{Z}_{w}\times\mathbb{Z}_{p^r}:\,g\in\mathcal{S}_r(Q)\}.
\end{align*}
It is obvious that $\widehat{\Gamma}$ and $\widehat{Q}$ are disjoint.
We shall prove that $\widehat{\Gamma}\uplus\widehat{Q}\uplus\{(1,0)\}$ is the set of generators of the desired code $\C$.
As such, by~\eqref{eq:S_r(A)-size}, the number of obtained codewords will be $|\widehat{\Gamma}|+|\widehat{Q}|+1=\frac{m(p^r-1)}{p-1}+\frac{p^r-1}{2}+1$.

Since $p$ is a prime with $p\geq 2w-1$, we have $\gcd(w,p)=1$ and thus $\Z_{wp^r}\cong\Z_w\times\Z_{p^r}$.
For $a\in\widehat{\Gamma}\uplus\widehat{Q}\uplus\{(1,0)\}$ let $S_a=\{ja:\,j=0,1,\ldots,w-1\}$ be the $w$-subset generated by $a$.
We will show $d^*(S_a)\cap d^*(S_b)=\emptyset$ whenever $a\neq b$.
Obviously $\widehat{\Gamma}\cap\widehat{Q}=\emptyset$ and $(1,0)\notin\widehat{\Gamma}$, and $(1,0)\notin\widehat{Q}$ as $0\notin Q$, which guarantees $d^*(S_a)\cap d^*(S_b)=\emptyset$ when $a$ and $b$ are in different sets of $\widehat{\Gamma}$, $\widehat{Q}$, or $\{(1,0)\}$.
It also holds in the case when both $a,b\in\widehat{\Gamma}$ by the proof of Theorem~\ref{thm:construction_pr} since $p\geq 2w-1$.
So, it suffices to consider the case when $a,b\in\widehat{Q}$.

Notice that $d^*(S_{(1,g)})=\{\pm j(1,g)\in\mathbb{Z}_{w}\times\mathbb{Z}_{p^r}:\,j=1,2,\ldots,w-1\}$.
Assume $j(1,g)=\pm i(1,h)\in\mathbb{Z}_w\times\mathbb{Z}_{p^r}$ for some $g\neq h\in\mathcal{S}_r(Q)$ and $1\leq i,j\leq w-1$.
There are two cases $i=j$ and $j=-i$ according to the first component.
The former case yields a contradiction that $g=h$.
So, it suffices to consider the case that $j(1,g)=-i(1,h)$ in $\mathbb{Z}_w\times\mathbb{Z}_{p^r}$.
The two components indicate $i+j=w$ and $jg+ih= 0\ (\bmod\ p^r)$, which imply that $ih=(i-w)g\ (\bmod\, p^r)$.
That is, by considering the $p$-ary representations, both $ih$ and $(i-w)g$ are in the same layer, say $\mathsf{L}_t$ for some $t$.
Since $i\leq w-1<p-1$, we have $i=i_0$ and $i-w=(i-w)_0$, namely, both $i$ and $i-w$ are in $\mathsf{L}_0$.
It follows from Proposition~\ref{prop:p-ary} that $g,h\in\mathsf{L}_t$.
More precisely, 
\begin{align*}
h=i^{-1}\cdot ih\in\mathsf{L}_t \quad \text{and} \quad g=(i-w)^{-1}\cdot (i-w)g\in\mathsf{L}_t.
\end{align*}
Therefore, $g_t,h_t\in Q$ by assumption.
Then, by \eqref{eq:Legendre} and Proposition~\ref{prop:p-ary}, we have
\begin{align*}
& (ih)_t = ((i-w)g)_t \ (\bmod\, p) \\
\Rightarrow \ & i\cdot h_t = (i-w)\cdot g_t \ (\bmod\, p) \\
\Rightarrow \ & \left(\frac{i}{p}\right) \left(\frac{h_t}{p}\right) = \left(\frac{i-w}{p}\right) \left(\frac{g_t}{p}\right) \\
\Rightarrow \ & \left(\frac{i}{p}\right)  = \left(\frac{i-w}{p}\right),
\end{align*}
where the last implication is due to $g_t,h_t\in Q$. 
This contradicts the condition given in~\eqref{eq:quadratic-non-residue}, and the proof is completed.
\end{proof}

\begin{example}\label{ex:4x47r4} \rm
Let $p=47,w=4$. 
One can check that $\Gamma=\{1,4,11,19,20,21\}$ forms a set of generators of a code in $\CACe(47,4)$.
Notice that the set of quadratic residues modulo $47$ is $Q=\{1,2,3,4,6,7,8,9,12,14,16,17,18,21,24,25,27,28,32,34,36,37,42\}$.
Obviously, $\left(\frac{1}{47}\right)\left(\frac{-3}{47}\right)=\left(\frac{2}{47}\right)\left(\frac{-2}{47}\right)=-1$.
By Theorem~\ref{thm:construction_wpr}, we have an equi-difference CAC of length $4\cdot 47^r$ and weight $4$ with $(47^r-1)/6 + (47^r-1)/2 + 1$ codewords, for each integer $r\geq 1$.
When $r=1$, we have $\widehat{\Gamma}=\{(0,g)\in\mathbb{Z}_4\times\mathbb{Z}_{47}:\,g\in\Gamma\}$ and $\widehat{Q}=\{(1,g)\in\mathbb{Z}_4\times\mathbb{Z}_{47}:\,g\in Q\}$.
So, the obtained code in $\CACe(188,4)$ has generators in $\{\theta^{-1}(a):a\in\widehat{\Gamma}\uplus\widehat{Q}\uplus\{(1,0)\}\}$, where the bijection $\theta:\mathbb{Z}_{188}\to\mathbb{Z}_4\times\mathbb{Z}_{47}$ is given in \eqref{eq:CRT-correspondence}.
The corresponding generators are listed as follows.
\begin{align*}
\theta^{-1}(\widehat{\Gamma}) &= \{4,20,48,68,152,160\}, \\
\theta^{-1}(\widehat{Q}) &= \{1,9,17,21,25,37,49,53,61,65,81,89,97,\\ 
			& \ \ \ \ \ 101,121,145,149,153,157,165,169,173,177\}, \text{ and}\\
\theta^{-1}(\{(1,0)\}) &= \{141\}.
\end{align*}
\end{example}

\begin{theorem}\label{thm:main_wpr}
Let $p$ be a prime such that $p-1$ is divisible by $2w-2$.
If there is a code in $\CACe(p,w)$ with $(p-1)/(2w-2)$ codewords and the condition \eqref{eq:quadratic-non-residue} in Theorem~\ref{thm:construction_wpr} holds, then for any integer $r\geq 1$, 
\begin{align*}
K\left(wp^r,w\right)=\frac{p^r-1}{2w-2}+\frac{p^r-1}{2}+1.
\end{align*}
\end{theorem}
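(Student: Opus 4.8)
The plan is to combine the explicit construction of Theorem~\ref{thm:construction_wpr} with a matching upper bound, which is the harder direction. The lower bound is immediate: the hypothesis that $2w-2$ divides $p-1$ forces $p\geq 2w-1$, so Theorem~\ref{thm:construction_wpr} applies with $m=(p-1)/(2w-2)$ and produces a code in $\CACe(wp^r,w)$ of size $\frac{m(p^r-1)}{p-1}+\frac{p^r-1}{2}+1 = \frac{p^r-1}{2w-2}+\frac{p^r-1}{2}+1$. So it remains to show that every $\mathcal{C}\in\CAC(wp^r,w)$ satisfies $|\mathcal{C}|\leq\frac{p^r-1}{2w-2}+\frac{p^r-1}{2}+1$.

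The upper-bound argument should follow the template of the proof of Theorem~\ref{thm:main_w-1dpr}, but now the length is $L=wp^r$ rather than $\frac{w-1}{d}p^r$, so the analysis of exceptional codewords changes. Let $\mathcal{E}\subseteq\mathcal{C}$ be the set of exceptional codewords and write $H_S=\HH(d(S))$ for $S\in\mathcal{E}$. By Corollary~\ref{cor:excpetional-bound}, $2\leq|H_S|\leq 2w-2$, and by Proposition~\ref{prop:stabilizer}(i), $|H_S|$ divides $wp^r$. Since $p\geq 2w-1>2w-2\geq|H_S|$, the stabilizer order $|H_S|$ cannot be a multiple of $p$, hence $|H_S|$ divides $w$. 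But Lemma~\ref{lem:nonexceptional} says that if $|H_S|$ divides $w-1$ then $S$ is not exceptional, and one needs a companion observation to rule out the remaining divisors of $w$. The natural move is: any divisor $c$ of $w$ with $c\geq 2$ and $c\leq 2w-2$ — note $c=w$ itself is the only divisor of $w$ exceeding $w/2$, and $c=w\leq 2w-2$ once $w\geq 2$ — so the relevant divisors are $c=w$ and proper divisors $c\leq w/2$. For $c=w$: then $|H_S|=w$ divides $|S+H_S|\geq|S|=w$, forcing $|S+H_S|\geq w$, and feeding $A=S,B=-S$ into \eqref{eq:Kneser1} gives $|d(S)|\geq 2|S+H_S|-|H_S|\geq 2w-w=w$; combined with exceptionality $|d(S)|\leq 2w-2$, this is consistent, so $c=w$ is \emph{not} excluded by this crude bound and must be handled by Proposition~\ref{prop:unique_subgroup}: there is a unique subgroup of order $w$, hence at most one exceptional codeword $\widehat S$ of this type, and for it $|d^*(\widehat S)|\geq|H_{\widehat S}|-1=w-1$. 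For proper divisors $c\mid w$ with $2\leq c\leq w/2$, write $w=kc$ with $k\geq 2$; then $|S+H_S|\geq w=kc$ is already a multiple of $c$, but one needs $|S+H_S|\geq(k+1)c$ to reach a contradiction via \eqref{eq:nonexcpetional_w-1_2}, which does not follow automatically since $w$ itself is a multiple of $c$ — so such codewords are \emph{not} ruled out either. This is the main obstacle: I expect one actually has to argue that for every divisor $c$ of $w$ with $c\neq w$, Lemma~\ref{lem:nonexceptional} or a direct Kneser estimate forbids exceptional codewords, and I would need to check whether $c\mid w$, $c<w$ implies $c\mid w-1$ is impossible (indeed $\gcd(w,w-1)=1$ so the only common divisor is $1$), meaning Lemma~\ref{lem:nonexceptional} alone does \emph{not} cover proper divisors of $w$ exceeding $1$. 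The resolution should be that a proper divisor $c$ of $w$ with $c\geq 2$ still allows exceptional codewords in principle, so one collects \emph{all} of them: they live in the unique subgroups of $\Z_{wp^r}$ of order dividing $w$, and there are few of them, each carrying difference set of size $\geq c-1$; a careful accounting will show the loss is small enough.

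Concretely, I would organize the upper bound as follows: partition $\mathcal{C}=\mathcal{C}_0\uplus\mathcal{E}$ where $\mathcal{C}_0$ are the non-exceptional codewords (each with $|d^*(S)|=2w-2$) and $\mathcal{E}$ the exceptional ones. For $S\in\mathcal{E}$, $|H_S|$ is a divisor $c$ of $w$ with $2\leq c\leq 2w-2$, so $H_S$ is one of the finitely many subgroups of order $c\mid w$, $c\geq 2$; by Proposition~\ref{prop:unique_subgroup} each such subgroup is unique, and an exceptional codeword with that stabilizer has $|d^*(S)|\geq c-1$. The key counting point is that exceptional codewords with the \emph{same} stabilizer $H$ (order $c$) have difference sets that are unions of cosets of $H$ minus $\{0\}$, hence pairwise disjointness among them plus the bound $|d^*(S)|\geq c-1$ limits their number; summing $\sum_{S\in\mathcal{E}}|d^*(S)| + (2w-2)|\mathcal{C}_0|\leq|\Z^*_{wp^r}|=wp^r-1$ and optimizing, one gets $|\mathcal{C}|=|\mathcal{C}_0|+|\mathcal{E}|\leq\frac{p^r-1}{2w-2}+\frac{p^r-1}{2}+1$. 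The arithmetic should mirror the two-case floor computation at the end of the proof of Theorem~\ref{thm:main_w-1dpr}, with the extra term $\frac{p^r-1}{2}+1$ arising precisely because the exceptional/periodic codewords supported on the subgroup of order $w$ (and its sub-structure) contribute a block of roughly $\frac{p^r-1}{2}$ short-difference codewords plus one zero-generator codeword — indeed the construction's $\widehat Q\uplus\{(1,0)\}$ part is exactly this block, so the bound is tight. The delicate bookkeeping of how many exceptional codewords each small subgroup can host, and verifying that their total contribution does not exceed what the construction achieves, is where the real work lies; everything else is the now-standard Kneser-plus-divisibility routine established earlier in the paper.
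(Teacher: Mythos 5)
Your lower bound and the first reduction are fine: $2w-2\mid p-1$ forces $p\geq 2w-1$, Theorem~\ref{thm:construction_wpr} gives the construction, and for any exceptional $S$ the order $|\HH(d(S))|$ lies in $[2,2w-2]$, is coprime to $p$, and hence divides $w$, which (since $\gcd(w,w-1)=1$) is indeed outside the reach of Lemma~\ref{lem:nonexceptional}. The genuine gap is in the final accounting, which you explicitly leave as ``delicate bookkeeping.'' The per-codeword estimate you propose to feed into the count, $|d^*(S)|\geq c-1$ where $c=|\HH(d(S))|$, is too weak: the deficiency $2w-2-|d^*(S)|$ of a single exceptional codeword could then be as large as $2w-1-c$, and since the nonzero parts $H_S^*$ of the stabilizers are pairwise disjoint subsets of a set of size $w-1$ there can be up to $w-1$ exceptional codewords, so your inequality $\sum_{S\in\mathcal{E}}|d^*(S)|+(2w-2)|\mathcal{C}_0|\leq wp^r-1$ only yields $|\C|\leq\lfloor(wp^r-1+(2w-3)(w-1))/(2w-2)\rfloor$, which overshoots the target by roughly $w-2$. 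No amount of ``optimizing'' recovers the theorem from that input.

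The paper closes exactly this gap with two facts you have in hand but never combine. First, Kneser's inequality \eqref{eq:Kneser1} with $A=S$, $B=-S$ gives, for \emph{every} exceptional $S$ (not just $c=w$), $|d(S)|\geq 2|S+H_S|-|H_S|\geq 2w-|H_S|$, i.e. the deficiency of $S$ is at most $|H_S^*|$. Second, since $0\in d(S)$ one has $H_S\subseteq d(S)$, so the sets $H_S^*$ are pairwise disjoint (difference sets are disjoint), and because each $|H_S|$ divides $w$, Proposition~\ref{prop:unique_subgroup} places every $H_S$ inside the unique order-$w$ subgroup $\{ip^r:0\leq i\leq w-1\}$; hence $\sum_{S\in\mathcal{E}}|H_S^*|\leq w-1$ and the \emph{total} deficiency over all exceptional codewords is at most $w-1$. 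This yields $wp^r-1\geq(2w-2)|\C|-(w-1)$ and so $|\C|\leq\lfloor(wp^r+w-2)/(2w-2)\rfloor=\frac{p^r-1}{2}+\frac{p^r-1}{2w-2}+1$. (Incidentally, your intuition that the $\widehat{Q}\uplus\{(1,0)\}$ block of the construction consists of exceptional codewords is off: the codewords generated by $(1,g)$, $g\in\mathcal{S}_r(Q)$, have full difference sets of size $2w-2$; only the generator $(1,0)$ gives an exceptional codeword, and the large term $\frac{p^r-1}{2}$ in the bound comes from the length being $wp^r$ rather than from exceptional structure.)
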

\begin{proof}
The assumption that $p-1$ is divisible by $2w-2$ guarantees $p\geq 2w-1$.
By Theorem~\ref{thm:construction_wpr}, there exists a code in $\CACe(wp^r,w)$ with $\frac{p^r-1}{2w-2}+\frac{p^r-1}{2}+1$ codewords.
It suffices to show $K\left(wp^r,w\right)\leq \frac{p^r-1}{2w-2}+\frac{p^r-1}{2}+1$.

Let $\C$ be any code in $\CAC(wp^r,w)$.
Let $\mathcal{E}\subseteq\C$ be the collection of all exceptional codewords. 
Following the notation in the proof of Theorem~\ref{thm:main_w-1dpr}, denote by $H_S=\mathsf{H}(d(S))$ and $H_S^*=H_S\setminus\{0\}$ for $S\in\mathcal{E}$.

It follows from Corollary~\ref{cor:excpetional-bound} that $|H_S|\leq 2w-2< p$ for any $S\in\mathcal{E}$, which implies that $\gcd(|H_S|,p)=1$.
Moreover, $|H_S|$ divides $wp^r$ since $H_S$ is a subgroup of $\mathbb{Z}_{wp^r}$.
Hence $|H_S|$ must divide $w$.
On the other hand, as $H_S$ is a subgroup of $\mathbb{Z}_{wp^r}$, we have $H_S=-H_S$, which implies that $|-S+H_S|=|-(S+H_S)|=|S+H_S|$.
By plugging $A=S,B=-S$ into \eqref{eq:Kneser1},
\begin{align*}
|d(S)|=|S+(-S)| &\geq |S+H_S|+|-S+H_S|-|H_S| \\
&= 2|S+H_S|-|H_S| \geq 2|S|-|H_S^*|-1, 
\end{align*}
which yields 
\begin{equation}\label{eq:main_wpr_case2-1}
|d^*(S)|\geq 2|S|-2-|H_S^*| = 2w-2-|H_S^*|.
\end{equation}

We now claim that $\sum_{S\in\mathcal{E}}|H^{*}_S|\leq w-1$.
Since $0\in d(S)$ for $S\in\mathcal{E}$, it follows from Proposition~\ref{prop:stabilizer}(ii) that $H_S\subseteq d(S)$.
Then, $H_S^*\cap H_{S'}^*=\emptyset$ for any two distinct $S,S'\in\mathcal{E}$ because of $d^*(S)\cap d^*(S')=\emptyset$.
Moreover, since $H_S$ is a subgroup of $\mathbb{Z}_{wp^r}$ and $|H_S|$ divides $w$, by Proposition~\ref{prop:unique_subgroup}, $H_S$ is a subgroup of $G=\{ip^r:\,i=0,1,\ldots,w-1\}$.
This concludes that
\begin{equation}\label{eq:main_wpr_case2-2}
\sum_{S\in\mathcal{E}}|H_S^*| = \left|\biguplus_{S\in\mathcal{E}}H_S^*\right| \leq |G\setminus\{0\}| = w-1.
\end{equation}
Combining \eqref{eq:main_wpr_case2-1}--\eqref{eq:main_wpr_case2-2} yields
\begin{equation}\label{eq:main_wpr_case2-3}
\sum_{S\in\mathcal{E}}|d^*(S)| \geq (2w-2)|\mathcal{E}| - (w-1).
\end{equation}

By the disjoint-difference-set property and \eqref{eq:main_wpr_case2-3}, we have
\begin{align*}
wp^r-1 = |\mathbb{Z}^*_{wp^r}| &\geq \sum_{S\in\C\setminus\mathcal{E}}|d^*(S)| + \sum_{S\in\mathcal{E}}|d^*(S)| \\
&\geq (2w-2)(|\C|-|\mathcal{E}|) + (2w-2)|\mathcal{E}| - (w-1) \\
&= (2w-2)|\C| - (w-1),
\end{align*}
and thus
\begin{align*}
|\C| \leq \left\lfloor\frac{wp^r+w-2}{2w-2}\right\rfloor = \left\lfloor \frac{p^r-1}{2} + \frac{p^r-1}{2w-2} + \frac{2w-2}{2w-2} \right\rfloor = \frac{p^r-1}{2} + \frac{p^r-1}{2w-2} + 1.
\end{align*}
\end{proof}

Analogous to Corollary~\ref{coro:w-1prw}, the primes that satisfy the condition in~\eqref{eq:quadratic-non-residue} for some small $w$ are listed in the following table.

\begin{center}
\begin{tabular}{c|l|l}
$w$ & simplified equations of \eqref{eq:quadratic-non-residue} & $p$ satisfies the condition in~\eqref{eq:quadratic-non-residue} \\ \hline
$3$ & $\big(\frac{-2}{p}\big)=-1$ & $p\equiv -1,-3\ (\bmod\ 8)$\\
$4$ & $\big(\frac{-1}{p}\big)=-1$ and  $\big(\frac{3}{p}\big)=1$ & $p\equiv -1\ (\bmod\ 12)$ \\
$5$ & $\big(\frac{-1}{p}\big)=-1$ and $\big(\frac{6}{p}\big)=1$ & $p\equiv -1,-5\ (\bmod\ 24)$ \\
$6$ & $\big(\frac{-1}{p}\big)=-1$ and $\big(\frac{2}{p}\big)=\big(\frac{5}{p}\big)=1$ & $p\equiv -1,-9\ (\bmod\ 40)$ \\
$7$ & $\big(\frac{2}{p}\big)=1$ and $\big(\frac{-3}{p}\big)=\big(\frac{-5}{p}\big)=-1$ & $p\equiv -1,,-7,17,-49\ (\bmod\ 120)$  \\
$8$ & $\big(\frac{-1}{p}\big)=-1$ and $\big(\frac{3}{p}\big)=\big(\frac{5}{p}\big)=\big(\frac{7}{p}\big)=1$ & $p\equiv -1,59,-109,-121,131,-169\ (\bmod\ 420)$  \\
$9$ & $\big(\frac{-2}{p}\big)=\big(\frac{-5}{p}\big)=-1$ and $\big(\frac{7}{p}\big)=1$ & $p\equiv -1,-3,-9,,-27,31,37,53,-81,-83,$ \\
 & & \qquad $93,111,-121\ (\bmod\ 280)$  \\
$10$ & $\big(\frac{-1}{p}\big)=-1$ and $\big(\frac{6}{p}\big)=\big(\frac{3}{p}\big)\big(\frac{7}{p}\big)=1$ & $p\equiv -1,-5,-25,43,47,67\ (\bmod\ 168)$  \\
\end{tabular}
\end{center}

\bigskip

\subsection{Optimal CACs of length $(2w-1)p^r$}


\begin{theorem}\label{thm:construction_2w-1pr}
Let $p$ be a prime such that $p>2w-1$.
If there is a code in $\CACe(p,w)$ with $m$ codewords, then for any integer $r\geq 1$, there exists a code $\C\in\CACe((2w-1)p^r,w)$ with $|\C|=p^r+m(p^r-1)/(p-1)$ codewords.
\end{theorem}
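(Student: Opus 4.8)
# Proof Proposal for Theorem (Construction of CACs of length $(2w-1)p^r$)

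The plan is to mimic the structure of the proof of Theorem~\ref{thm:construction_wpr}, but now working in $\Z_{(2w-1)p^r}\cong\Z_{2w-1}\times\Z_{p^r}$ (the isomorphism being valid since $p>2w-1$ forces $\gcd(2w-1,p)=1$). Let $\Gamma$ be a set of $m$ generators of the given code in $\CACe(p,w)$. I would build the new code from two families of equi-difference codewords: first, the ``lifted'' codewords with generators $(0,g)$ for $g\in\mathcal{S}_r(\Gamma)$, which by \eqref{eq:S_r(A)-size} number $m(p^r-1)/(p-1)$; second, a family of $p^r$ codewords whose generators use a nonzero first coordinate in $\Z_{2w-1}$ — the natural choice, following \cite[Theorem 13]{SWC10}, is to take generators of the form $(1,c)$ for all $c\in\Z_{p^r}$. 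For $a$ in this combined generator set, set $S_a=\{ja:\,j=0,1,\ldots,w-1\}$, so $d^*(S_a)=\{\pm j a:\,j=1,\ldots,w-1\}$. The total count is then $p^r+m(p^r-1)/(p-1)$ as claimed, and it remains to verify the disjoint-difference-set property.

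Next I would check disjointness case by case. When both generators are of the lifted type $(0,g),(0,h)$ with $g\neq h$ in $\mathcal{S}_r(\Gamma)$, the first coordinate of every element of $d^*(S)$ is $0$, and the argument from the proof of Theorem~\ref{thm:construction_pr} applies verbatim to the second coordinate (here I use $p>2w-1$, so $\pm1,\ldots,\pm(w-1)\in\mathsf{L}_0$ and Proposition~\ref{prop:p-ary} forces $g,h$ to the same layer, whence $g_t=h_t$, a contradiction). When one generator is lifted, say $(0,g)$, and the other is $(1,c)$: for any $j$ with $1\le j\le w-1$, the first coordinate of $j(1,c)$ is $j\pmod{2w-1}$, which is nonzero since $1\le j\le w-1<2w-1$; similarly $-j\pmod{2w-1}=2w-1-j$ lies in $\{w,\ldots,2w-2\}$, also nonzero. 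Hence every element of $d^*(S_{(1,c)})$ has nonzero first coordinate while every element of $d^*(S_{(0,g)})$ has first coordinate $0$, so the two difference sets are disjoint.

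The remaining and most delicate case is when both generators are of the form $(1,c)$ and $(1,c')$ with $c\neq c'$ in $\Z_{p^r}$. Suppose $i(1,c)=\pm j(1,c')$ in $\Z_{2w-1}\times\Z_{p^r}$ for some $1\le i,j\le w-1$. Looking at the first coordinate in $\Z_{2w-1}$: in the $+$ case we get $i\equiv j$, and since $1\le i,j\le w-1<2w-1$ this forces $i=j$, hence $c=c'$ from the second coordinate, a contradiction; in the $-$ case we get $i\equiv -j\pmod{2w-1}$, i.e. $i+j\equiv 0\pmod{2w-1}$, and since $2\le i+j\le 2w-2<2w-1$ this is impossible. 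This is exactly where the modulus $2w-1$ (rather than $w$) does the work — the value $i+j$ never wraps around — and it is the reason no quadratic-residue hypothesis is needed here, unlike in Theorem~\ref{thm:construction_wpr}. I expect this coordinate-in-$\Z_{2w-1}$ bookkeeping (making sure every index $i$, $-i$, $i+j$ with $1\le i,j\le w-1$ stays in the correct residue class and that the only collapse is the harmless $i=j$) to be the main thing to get right; once it is in place, combining the three cases shows $\{S_a\}$ is a code in $\CACe((2w-1)p^r,w)$ of the stated size, completing the proof.
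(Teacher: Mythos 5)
Your proposal is correct and follows essentially the same route as the paper: the same CRT decomposition $\Z_{(2w-1)p^r}\cong\Z_{2w-1}\times\Z_{p^r}$, the same two generator families $(0,g)$ with $g\in\mathcal{S}_r(\Gamma)$ and $(1,c)$ with $c\in\Z_{p^r}$, the lifted-vs-lifted case delegated to the proof of Theorem~\ref{thm:construction_pr}, and the $\Lambda$-vs-$\Lambda$ case resolved by the observation that $i+j$ cannot reach $2w-1$ (the paper phrases this as $i=2w-1-j\geq w$ being impossible, which is the same bookkeeping). No gap; your write-up is if anything slightly more explicit than the paper's treatment of the mixed case.
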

\begin{proof}
Let $\Gamma$ be a set of $m$ generators of a given code in $\CACe(p,w)$.
Define 
\begin{align*}
\widehat{\Gamma}\triangleq\{(0,g)\in\mathbb{Z}_{2w-1}\times\mathbb{Z}_{p^r}:g\in\mathcal{S}_r(\Gamma)\}
\end{align*}
and
\begin{align*}
\Lambda\triangleq\{(1,g)\in\mathbb{Z}_{2w-1}\times\mathbb{Z}_{p^r}:\,0\leq g\leq p^r-1\}.
\end{align*}
Obviously, $\widehat{\Gamma}$ and $\Lambda$ are disjoint.
We shall prove $\widehat{\Gamma}\uplus\Lambda$ is the set of generators of the desired code $\C$.

Since $p$ is a prime with $p>2w-1$, we have $\gcd(2w-1,p)=1$ and thus $\Z_{(2w-1)p^r}\cong\Z_{2w-1}\times\Z_{p^r}$.
For $a\in\widehat{\Gamma}\uplus\Lambda$, define $S_g=\{jg:\,j=0,1,\ldots,w-1\}$.
We will show $d^*(S_a)\cap d^*(S_b)=\emptyset$ for $a\neq b\in\widehat{\Gamma}\uplus\Lambda$.
The assertion is obviously true in the case when $a\in\widehat{\Gamma},b\in\Lambda$.
It also holds in the case when both $a,b\in\widehat{\Gamma}$ by the proof of Theorem~\ref{thm:construction_pr} since $p>2w-1$.
So, it suffices to consider the case when $a,b\in\Lambda$.

Notice that $d^*(S_{(1,g)})=\{\pm j(1,g)\in\mathbb{Z}_{2w-1}\times\mathbb{Z}_{p^r}:\,j=1,2,\ldots,w-1\}$.
Assume $j(1,g)=\pm i(1,h)$ for some $g\neq h$ and $1\leq i,j\leq w-1$.
There are two cases $i=j$ and $i=-j$ (i.e., $i=2w-1-j$) according to the first component.
The former case yields a contradiction that $g=h$, while the latter one also implies a contradiction that $i\geq (2w-1)-(w-1)=w$ due to $j\leq w-1$.

Finally, by \eqref{eq:S_r(A)-size}, we have
\begin{align*}
|\C|=|\Lambda| + |\mathcal{S}_r(\Gamma)| = p^r + m(1+p+\cdots+p^{r-1})  = p^r+\frac{m(p^r-1)}{p-1}.
\end{align*}
\end{proof}

\begin{remark}\rm
The proof of $d^*(S_a)\cap d^*(S_b)=\emptyset$ in Theorem~\ref{thm:construction_2w-1pr} for the case that $a,b$ are distinct elements in $\Lambda$ can be found in~\cite{SWSC09}.
\end{remark}

\begin{example}\label{ex:7x37r4} \rm
Let $p=37,w=4$. 
Following Example~\ref{ex:37r4}, $\Gamma=\{1,6,8,10,11,14\}$ forms a code in $\CACe(37,4)$ of size $6$.
By Theorem~\ref{thm:construction_2w-1pr}, we have an equi-difference CAC of length $7\cdot {37}^r$ and weight $4$ with $37^r+(37^r-1)/6$ codewords, for each integer $r\geq 1$.
When $r=1$, we have $\widehat{\Gamma}=\{(0,g)\in\mathbb{Z}_7\times\mathbb{Z}_{37}:\,g\in\Gamma\}$ and $\Lambda=\{(1,g)\in\mathbb{Z}_7\times\mathbb{Z}_{37}:\,0\leq g\leq 36\}$.
So, the obtained code in $\CACe(259,4)$ has generators in $\theta^{-1}(a):a\in\widehat{\Gamma}\uplus\Lambda$, where the bijection $\theta:\mathbb{Z}_{259}\to\mathbb{Z}_7\times\mathbb{Z}_{37}$ is given in \eqref{eq:CRT-correspondence}.
The generators produced from $\widehat{\Gamma}$ are $14,84,112,119,154,196$, and from $\Lambda$ are 
\begin{align*}
&1,8,15,22,29,36,43,50,57,64,71,78,	85,92,99,106,113,120,127,134,141,\\
&148,155,162,169,176,183,190,197,204,211,218,225,232,239,246,253.
\end{align*}
\end{example}


\begin{theorem}\label{thm:main_2w-1pr}
Let $p$ be a prime such that $p-1$ is divisible by $2w-2$. 
If there is a code in $\CACe(p,w)$ with $(p-1)/(2w-2)$ codewords, then for any integer $r\geq 1$,
\begin{align*}
K\left((2w-1)p^r,w\right)=p^r+\frac{p^r-1}{2w-2}.
\end{align*}
\end{theorem}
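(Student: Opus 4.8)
The plan is to combine the existence construction from Theorem~\ref{thm:construction_2w-1pr} with a matching upper bound, mirroring the structure of the proof of Theorem~\ref{thm:main_wpr}. First I would observe that since $2w-2 \mid p-1$ we have $p \geq 2w-1$, so Theorem~\ref{thm:construction_2w-1pr} applies with $m = (p-1)/(2w-2)$, yielding a code in $\CACe((2w-1)p^r,w)$ of size $p^r + m(p^r-1)/(p-1) = p^r + (p^r-1)/(2w-2)$. It therefore remains to prove $K((2w-1)p^r,w) \leq p^r + (p^r-1)/(2w-2)$ for every code $\C \in \CAC((2w-1)p^r,w)$.

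For the upper bound I would let $\mathcal{E} \subseteq \C$ be the set of exceptional codewords and write $H_S = \HH(d(S))$, $H_S^* = H_S \setminus \{0\}$ for $S \in \mathcal{E}$. By Corollary~\ref{cor:excpetional-bound}, $2 \leq |H_S| \leq 2w-2 < p$, so $\gcd(|H_S|,p)=1$; since $|H_S|$ divides $(2w-1)p^r$, it must divide $2w-1$. This is exactly where Lemma~\ref{lem:nonexceptional} does the crucial work: it forbids $|H_S|$ from dividing $2w-1$, so in fact $\mathcal{E} = \emptyset$ — there are no exceptional codewords at all. Consequently every $S \in \C$ has $|d^*(S)| \geq 2w-2$, and the disjoint-difference-set property gives $(2w-1)p^r - 1 = |\Z^*_{(2w-1)p^r}| \geq (2w-2)|\C|$, hence $|\C| \leq \lfloor ((2w-1)p^r-1)/(2w-2) \rfloor$. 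A short arithmetic simplification, writing $(2w-1)p^r - 1 = (2w-2)p^r + (p^r-1)$, shows this floor equals $p^r + (p^r-1)/(2w-2)$, which matches the construction.

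I expect the main subtlety — though it is not really an obstacle — to be the clean deduction that $\mathcal{E}$ is empty rather than merely small: the earlier Theorem~\ref{thm:known_2w-1p} needed the extra hypothesis $m>1$ precisely because it controlled the difference set directly rather than its stabilizer, whereas here Lemma~\ref{lem:nonexceptional} eliminates exceptional codewords outright for \emph{all} $r \geq 1$ and all valid $p$. The only care needed is the final floor computation and confirming that $2w-2 \mid p^r - 1$ (which follows from $2w-2 \mid p-1$ since $p^r - 1 = (p-1)(p^{r-1}+\cdots+1)$), so that the bound is an exact integer and the construction is indeed optimal.
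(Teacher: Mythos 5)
Your overall route is the same as the paper's: lower bound from Theorem~\ref{thm:construction_2w-1pr} with $m=(p-1)/(2w-2)$, and an upper bound obtained by showing every codeword is non-exceptional via Corollary~\ref{cor:excpetional-bound} (which forces $|\HH(d(S))|$ to divide $2w-1$) together with Lemma~\ref{lem:nonexceptional}, followed by the counting $(2w-1)p^r-1\geq(2w-2)|\C|$. That part, including the divisibility remark $2w-2\mid p^r-1$, is correct and matches the paper.

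There is, however, one genuine (if small) gap in the construction step: you write that $2w-2\mid p-1$ gives $p\geq 2w-1$ and that Theorem~\ref{thm:construction_2w-1pr} therefore applies, but that theorem requires the \emph{strict} inequality $p>2w-1$. The boundary case $p=2w-1$ is allowed by the hypotheses of the statement (take, e.g., $w=3$, $p=5$, with the single-codeword code in $\CACe(p,(p+1)/2)$), and there the construction you invoke genuinely breaks down: $\gcd(2w-1,p^r)=p\neq 1$, so the CRT splitting $\Z_{(2w-1)p^r}\cong\Z_{2w-1}\times\Z_{p^r}$ underlying Theorem~\ref{thm:construction_2w-1pr} is unavailable. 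Your upper-bound argument still goes through in this case (one gets $|\HH(d(S))|=1$, contradicting Corollary~\ref{cor:excpetional-bound}), but the matching lower bound needs a separate argument. The paper handles this by noting that for $p=2w-1$ the length is $p^{r+1}$ and $w=(p+1)/2$, so the claim reduces to Theorem~\ref{thm:pr-recall-optimal}, whose value $(p^{r+1}-1)/(p-1)$ indeed equals $p^r+\frac{p^r-1}{2w-2}$; adding this one-line case split would complete your proof.
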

\begin{proof}
The assumption that $p-1$ is divisible by $2w-2$ guarantees that $p\geq 2w-1$.
The case when $p=2w-1$ can be reduced to Theorem~\ref{thm:pr-recall-optimal}, i.e., $K(p^{r+1},(p+1)/2)=(p^{r+1}-1)/(p-1)$.
So, we may assume $p>2w-1$ in the followings.

As $p>2w-1$, by Theorem~\ref{thm:construction_2w-1pr}, there exists a code in $\CACe((2w-1)p^r,w)$ with $p^r+(p^r-1)/(2w-2)$ codewords.
Therefore, it suffices to show $K\left((2w-1)p^r,w\right)\leq p^r+(p^r-1)/(2w-2)$.

Assume $\C\in\CAC((2w-1)p^r,w)$.
We shall claim that every codeword in $\C$ is non-exceptional.
Suppose to the contrary that $S\in\C$ is exceptional.
By Corollary~\ref{cor:excpetional-bound}, $|\HH(d(S))|\leq 2w-2<p$, namely $\gcd(|\HH(d(S))|,p)=1$.
As $|\HH(d(S))|$ divides $(2w-1)p^r$ due to $\HH(d(S))$ a subgroup of $\mathbb{Z}_{(2w-1)p^r}$, it follows that $|\HH(d(S))|$ divides $2w-1$.
By Lemma~\ref{lem:nonexceptional}, $S$ is non-exceptional, a contradiction occurs.
%
By the disjoint-difference-set property, $|\Z^*_{(2w-1)p^r}|\geq \sum_{S\in\C}|d^*(S)| \geq (2w-2)|\C|$, yielding
\begin{align*}
|\C| \leq \frac{(2w-1)p^r-1}{2w-2} = p^r+\frac{p^r-1}{2w-2},
\end{align*}
as desired.
\end{proof}


The case when $w\leq p<2w-1$ is missing in Theorem~\ref{thm:main_2w-1pr}.
We will fill the gap in the following theorem, which is an improvement of~\cite[Theorem 14]{SWC10}.

\begin{theorem}\label{thm:main_2w-1p}
Let $p$ be a prime such that $w\leq p<2w-1$. 
One has
\begin{align*}
K((2w-1)p,w)=p+1.
\end{align*}
\end{theorem}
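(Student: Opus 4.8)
The plan is to prove the two inequalities $K((2w-1)p,w) \geq p+1$ and $K((2w-1)p,w) \leq p+1$ separately, the first by a direct construction and the second by a careful count of exceptional codewords via Kneser's theorem. For the lower bound, since $w \leq p < 2w-1$ forces $\gcd(2w-1,p)=1$, we work in $\Z_{2w-1}\times\Z_{p}$ via the CRT correspondence $\theta$. I would take the $p$ equi-difference codewords generated by $(1,g)$ for $g=0,1,\dots,p-1$; the argument in the proof of Theorem~\ref{thm:construction_2w-1pr} (the case $a,b\in\Lambda$) shows these have pairwise disjoint difference sets, using only that $i=-j$ in $\Z_{2w-1}$ together with $1\le i,j\le w-1$ is impossible. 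To these $p$ codewords I would adjoin one more equi-difference codeword supported entirely in the $\Z_p$-component, e.g. generated by $(0,1)$, whose difference set lies in $\{0\}\times\Z_p^*$ and hence meets none of the previous ones (their difference sets all have nonzero first component since $1\le j\le w-1 < 2w-1$). This gives a code of size $p+1$, establishing $K((2w-1)p,w)\geq p+1$.

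For the upper bound I would let $\C\in\CAC((2w-1)p,w)$ and let $\mathcal E\subseteq\C$ be the set of exceptional codewords. For $S\in\mathcal E$, write $H_S=\HH(d(S))$. By Corollary~\ref{cor:excpetional-bound}, $2\le |H_S|\le 2w-2$, and $|H_S|$ divides $(2w-1)p$. Here, unlike in Theorem~\ref{thm:main_2w-1pr}, we cannot exclude exceptional codewords: since $p$ can be as small as $w$, the value $|H_S|=p$ is now possible (and $|H_S|$ cannot divide $2w-1$ by Lemma~\ref{lem:nonexceptional}), so $|H_S|\in\{p,2p\}$ with $2p\le 2w-2$ only when $p\le w-1$, which is excluded; hence $|H_S|=p$ for every $S\in\mathcal E$. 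By Proposition~\ref{prop:unique_subgroup} there is a \emph{unique} subgroup of order $p$, so all exceptional codewords share the same stabilizer subgroup $H$ of order $p$; and since $H\subseteq d(S)$ for each $S\in\mathcal E$ (Proposition~\ref{prop:stabilizer}(ii)) while the difference sets are pairwise disjoint, we get $|\mathcal E|\leq 1$. I expect the main obstacle to be pinning down exactly what $|d^*(\widehat S)|$ is, or at least a good lower bound on it, when the unique exceptional codeword $\widehat S$ exists: we know $|d^*(\widehat S)|\ge |H|-1=p-1$ from Proposition~\ref{prop:stabilizer}(ii), but to get a tight final count one likely needs the sharper estimate $|d^*(\widehat S)| \geq 2w-2-|H_{\widehat S}^*|$ that mirrors~\eqref{eq:main_wpr_case2-1}, or a direct structural description of $\widehat S$ (it should look like a union of a coset structure over $H$ intersected with a short equi-difference set), which would pin $|d^*(\widehat S)|$ down more precisely.

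Granting a suitable lower bound $\delta$ on $|d^*(\widehat S)|$, I would finish by the disjoint-difference-set count: if $\mathcal E=\emptyset$ then $(2w-1)p-1 \geq (2w-2)|\C|$, giving $|\C|\le \lfloor ((2w-1)p-1)/(2w-2)\rfloor$, which I would check equals $p+1$ exactly in the range $w\le p<2w-1$ (here $(2w-1)p-1 = (2w-2)(p+1) + (p-2w+1)$ with $-(w-1)\le p-2w+1 < 0$, so the floor is indeed $p+1$ provided $p\ge w$, with the boundary $p=w$ needing a moment's care); and if $\widehat S$ exists then $(2w-1)p-1 \geq (2w-2)(|\C|-1) + \delta$, and I would verify this still yields $|\C|\le p+1$ for the relevant $\delta$. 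Combined with the lower bound, this gives $K((2w-1)p,w)=p+1$. The delicate points to get right are the two floor computations at the endpoints of the interval $w\le p<2w-1$ and the precise handling of the single exceptional codeword; everything else is a direct transcription of the machinery already developed for Theorems~\ref{thm:main_w-1dpr} and~\ref{thm:main_wpr}.
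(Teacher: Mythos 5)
Your proposal follows essentially the same route as the paper: the same $\Lambda\cup\{(0,1)\}$ construction (via Theorem~\ref{thm:construction_2w-1pr}) for the lower bound, and for the upper bound the same chain of Corollary~\ref{cor:excpetional-bound}, Lemma~\ref{lem:nonexceptional} and Proposition~\ref{prop:unique_subgroup} forcing $|H_S|=p$, uniqueness of the exceptional codeword, and the disjoint-difference count with $|d^*(\widehat S)|\ge p-1$. The two points you left open resolve immediately: taking $\delta=p-1$ already gives $(2w-2)(|\C|-1)\le (2w-1)p-1-(p-1)=(2w-2)p$, hence $|\C|\le p+1$ with no sharper estimate needed, while in the exceptional-free case the floor $\lfloor((2w-1)p-1)/(2w-2)\rfloor$ equals $p$ (not $p+1$ as you wrote), which is harmless since it is still below $p+1$.
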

\begin{proof}
Let $\Lambda=\{(1,g)\in\Z_{2w-1}\times\Z_p:\,0\leq g\leq p-1\}$.
It follows from the proof of Theorem~\ref{thm:construction_2w-1pr} that $\Lambda\cup\{(0,1)\}$ forms a set of generators of a code in $\CAC((2w-1)p,w)$ with $p+1$ codewords.

It suffices to show that for any code $\C\in\CAC((2w-1)p,w)$, one has $|\C|\leq p+1$.
Let $S\in\C$ be an exceptional codeword (if any), and denote by $H_S=\HH(d(S))$.
Note that $2w-1$ and $p$ are relatively prime due to $w\leq p<2w-1$.
By the fact that $H$ is a subgroup of $\Z_{(2w-1)p}$, we have $|H_S|$ divides either $2w-1$ or $p$.
However, it must be the case that $H_S$ divides $p$ because the other one contradicts to the assertion in Lemma~\ref{lem:nonexceptional}.
As shown in the Case 2 in the proof of Theorem~\ref{thm:main_w-1dpr}, the exceptional codeword is unique and satisfies $|d^*(S)|\geq p-1$.
Hence it concludes that there is at most one exceptional codeword in $\C$, and the unique codeword $S$ is of the property that $|d^*(S)|\geq p-1$.
By the same argument in the proof of Theorem~\ref{thm:main_w-1dpr}, we have $|\C|\leq p+1$, as desired.
\end{proof}

\begin{remark}\rm
\cite[Theorem 14]{SWC10} claimed that the equality $K((2w-1)p,w)=p+1$ holds when $w\leq p\leq w+\frac{\varphi(2w-1)}{2}$, where $\varphi(n)$ is the Euler's totient function, i.e., counts the number of integers up to $n$ that are relatively prime to $n$. 
\end{remark}

\section{Mixed-Weight CACs}\label{sec:MW-CAC}


By the help of the construction given in Theorem~\ref{thm:construction_direct}, in this subsection we first propose a general construction of a mixed-weight CAC of length $(w-1)p^r$ with weight-set $\{w-1,w,w^*\}$, where $p$ is an odd prime and $r,w,w^*$ are any positive integers with $p\geq w$.
Based on this construction, we derive the exact value of $K\left((w-1)p^r,w-1;w,n\right)$ for some $n$.


%

Recall that when $d=1$, the two conditions \eqref{eq:SDR1} and \eqref{eq:SDR2} are respectively reduced to \eqref{eq:QR1} and \eqref{eq:QR2}, i.e.,
\begin{align*}
\left(\frac{-1}{p}\right)=-1
\end{align*}
and
\begin{align*}
\left(\frac{i}{p}\right)\left(\frac{i-w+1}{p}\right)=-1,\ \forall i=1,2,\ldots,w-2.
\end{align*}

\begin{theorem}\label{thm:mixed-w-1pr}
Let $r,w$ be positive integers and $p$ be an odd prime such that $p\geq w$.
Suppose $p$ and $w$ satisfy the two conditions given in \eqref{eq:QR1} and \eqref{eq:QR2}, and there exists a code $\mathcal{A}\in\text{CAC}(p^r,w^*)$ that contains $n$ equi-difference codewords, where $w^*$ is an arbitrary positive integer.
Then, there exists a code $\mathcal{C}\in\text{CAC}\left((w-1)p^r,\{w-1,w,w^*\}\right)$ with $|\mathcal{C}|=\frac{p^r-1}{2}+n+1$ codewords.
In particular, if the $n$ equi-difference codewords in $\mathcal{A}$ are all not exceptional, then $\mathcal{C}$ contains $n$ codewords with weight $w^*$, $\frac{p^r-1}{2}-n(w^*-1)$ codewords with weight $w$, and $n(w^*-1)+1$ codewords with weight $w-1$.
\end{theorem}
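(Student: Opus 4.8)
The plan is to build $\mathcal{C}$ from three ingredients living in $\mathbb{Z}_{w-1}\times\mathbb{Z}_{p^r}$ via the CRT correspondence $\theta$ of \eqref{eq:CRT-correspondence}. First I would invoke Theorem~\ref{thm:construction_direct} with $d=1$: since $p$ and $w$ satisfy \eqref{eq:QR1} and \eqref{eq:QR2} (which are \eqref{eq:SDR1} and \eqref{eq:SDR2} for $d=1$), there is an equi-difference code $\mathcal{C}_0\in\CACe((w-1)p^r,w)$ with $(p^r-1)/2$ codewords, whose generators are $\{(1,g):g\in\mathcal{S}_r(Q(p))\}$. Second, I would take the given code $\mathcal{A}\in\text{CAC}(p^r,w^*)$ and embed its $n$ equi-difference codewords into the ``$0$-th fiber'' $\{0\}\times\mathbb{Z}_{p^r}$; call these embedded codewords $\widehat{\mathcal{A}}$. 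Third, I would add the single extra codeword $\{0,(1,0),(2,0),\ldots,(w-2,0)\}=\mathbb{Z}_{w-1}\times\{0\}$, the equi-difference $(w-1)$-subset with generator $(1,0)$, whose difference set is exactly $(\mathbb{Z}_{w-1}^*)\times\{0\}$. Then $\mathcal{C}\triangleq\mathcal{C}_0\uplus\widehat{\mathcal{A}}\uplus\{\mathbb{Z}_{w-1}\times\{0\}\}$, which has $\frac{p^r-1}{2}+n+1$ codewords; the weight-set is contained in $\{w-1,w,w^*\}$ since codewords of $\mathcal{A}$ have weight $w^*$ and the last one has weight $w-1$.

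The core task is to verify the disjoint-difference-set property \eqref{eq:CAC} across the three blocks. Within $\mathcal{C}_0$ this is Theorem~\ref{thm:construction_direct}; within $\widehat{\mathcal{A}}$ it follows because $\mathcal{A}$ is already a CAC in $\mathbb{Z}_{p^r}$ and the embedding $x\mapsto(0,x)$ is an injective group homomorphism, so it preserves disjointness of difference sets; the singleton block is trivially fine. The cross terms are where the structure is exploited. Every nonzero difference of a codeword in $\widehat{\mathcal{A}}$ lies in $\{0\}\times\mathbb{Z}_{p^r}^*$ (first coordinate $0$), and the difference set of $\mathbb{Z}_{w-1}\times\{0\}$ lies in $\mathbb{Z}_{w-1}^*\times\{0\}$ (second coordinate $0$); these two sets are disjoint. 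For a codeword $S_g\in\mathcal{C}_0$ with generator $(1,g)$, one has $d^*(S_g)=\{k(1,g):k=\pm1,\ldots,\pm(w-1)\}$; modulo $w-1$ the first coordinate ranges over $k\bmod(w-1)$, which is nonzero whenever $k\not\equiv 0$, and is $0$ precisely when $k=\pm(w-1)$, in which case the second coordinate is $\pm(w-1)g\ne 0$ in $\mathbb{Z}_{p^r}$ (as $\gcd(w-1,p)=1$ and $g\in\mathcal{S}_r(Q(p))$ has $g\ne0$). Thus a difference of $S_g$ with first coordinate $0$ has nonzero second coordinate, so it cannot collide with a difference of $\mathbb{Z}_{w-1}\times\{0\}$; and a difference of $S_g$ collides with a difference of some $\widehat{A}\in\widehat{\mathcal{A}}$ only if its first coordinate is $0$, i.e.\ $k=\pm(w-1)$, forcing $\pm(w-1)g = \pm(w-1)g'$ for the generator $g'$ of the corresponding equi-difference codeword of $\mathcal{A}$ — but here I need the key point that $g\in\mathcal{S}_r(Q(p))$ while $g'$ is an arbitrary generator in $\mathbb{Z}_{p^r}$, so this requires a layer/Legendre-symbol argument exactly parallel to the proof of Theorem~\ref{thm:construction_direct}: reducing modulo $p$ on the common layer $\mathsf{L}_t$ and using that $(w-1)$ is a unit, one gets $g_t = \pm g'_t$, and since $g_t$ is a nonzero quadratic residue this pins down both sign and value, finally forcing $g=g'$, which is impossible as they lie in different blocks. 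I expect this cross term between $\mathcal{C}_0$ and $\widehat{\mathcal{A}}$ to be the main obstacle, and it is handled by essentially transcribing the $p$-ary/Proposition~\ref{prop:p-ary} machinery already used for Theorem~\ref{thm:construction_direct}.

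For the ``in particular'' clause, assume the $n$ equi-difference codewords of $\mathcal{A}$ are all non-exceptional, so each such codeword $\widehat{A}$ has $|d^*(\widehat{A})|=2w^*-2$. The difference set of $\mathbb{Z}_{w-1}\times\{0\}$ is $\mathbb{Z}_{w-1}^*\times\{0\}$, of size $w-2$. The total number of nonzero differences consumed by $\widehat{\mathcal{A}}$ together with the singleton block, all lying in $\{0\}\times\mathbb{Z}_{p^r}\cup\mathbb{Z}_{w-1}\times\{0\}$, does not directly dictate the weight split; rather, the point is that the $(p^r-1)/2$ codewords of $\mathcal{C}_0$ must be \emph{modified} so that $\mathcal{C}$ still satisfies \eqref{eq:CAC} — but here a cleaner route is to note that the claimed weight distribution is obtained by taking the $(p^r-1)/2$ generators of $\mathcal{C}_0$ and, for each of the $n$ equi-difference codewords of $\mathcal{A}$ with generator $g'$, the corresponding generator $g'$ (suitably lifted) also appears among the generators of $\mathcal{C}_0$; replacing the weight-$w$ codeword $S_{g'}$ by the weight-$(w-1)$ codeword $\{0,(1,g'),\ldots,(w-2,g')\}$ frees up the difference $\pm(w-1)(1,g')$ while keeping the code conflict-free. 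Since $\widehat{A}$ has $w^*-1$ distinct (unsigned) generators as an equi-difference set, this substitution is performed $w^*-1$ times per codeword of $\mathcal{A}$, giving $n(w^*-1)$ codewords demoted from weight $w$ to weight $w-1$. Hence $\mathcal{C}$ has $n$ codewords of weight $w^*$, $n(w^*-1)+1$ codewords of weight $w-1$ (the $n(w^*-1)$ demoted ones plus $\mathbb{Z}_{w-1}\times\{0\}$), and $\frac{p^r-1}{2}-n(w^*-1)$ codewords of weight $w$, as claimed; I would close by checking that this count is non-negative and that the difference sets of the demoted codewords remain disjoint from those of $\widehat{\mathcal{A}}$ — which holds because $\pm(w-1)(1,g')=\pm(w-1)g'$ in the second coordinate and this value is now \emph{not} used by $S_{g'}$, while it could coincide with a difference of $\widehat{A}$ only under the same layer-and-Legendre obstruction handled above, completing the argument.
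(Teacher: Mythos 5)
Your construction in the first two paragraphs does not work, and the step where you claim it does is exactly where it breaks. You form the plain union $\mathcal{C}_0\uplus\widehat{\mathcal{A}}\uplus\{\Z_{w-1}\times\{0\}\}$ and assert that a collision between a difference $(0,\pm(w-1)g)$ of $S_g\in\mathcal{C}_0$ and a difference $(0,\pm k a)$ of an embedded codeword of $\mathcal{A}$ can be excluded by a layer/Legendre argument ``forcing $g=g'$''. In fact the opposite is true: since $j=\pm k$ and $(w-1)^{-1}$ are units of $\Z_{p^r}$ and $\left(\frac{-1}{p}\right)=-1$, for every generator $a$ of an equi-difference codeword of $\mathcal{A}$ and every $k\in\{1,\ldots,w^*-1\}$ exactly one of $\pm k a(w-1)^{-1}$ lies in $\mathcal{S}_r(Q)$ (Proposition~\ref{prop:p-ary} applied digit-wise), so there is always a $g\in\mathcal{S}_r(Q)$ with $(w-1)g=\pm k a$, and $d^*(S_g)\cap d^*(T_a)\ni(0,\pm k a)\neq\emptyset$. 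The generators of $\mathcal{A}$ are arbitrary, so nothing pins $g$ down to equal anything related to $a$ by ``sign and value''; the collision is unavoidable, and the untrimmed union is not a CAC. The Legendre conditions \eqref{eq:QR1}--\eqref{eq:QR2} are what make $\mathcal{C}_0$ itself conflict-free; they do not protect $\mathcal{C}_0$ against the new $w^*$-weight codewords.

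The paper's proof resolves precisely this: for each $i$ and each $k=1,\ldots,w^*-1$ it locates the unique offending generator $g_{i_k}\in\{\pm k a_i(w-1)^{-1}\}\cap\mathcal{S}_r(Q)$ and \emph{removes the top element} $(w-1)(1,g_{i_k})$ from $S_{g_{i_k}}$, so the deleted codeword's difference set loses exactly $(0,\pm k a_i)$, which is then handed to $T_{a_i}$; together with $S'_0=\{(j,0)\}$ this yields the stated weight distribution, with non-exceptionality of the $A_i$ guaranteeing the $n(w^*-1)$ trimmed generators are distinct. Your ``in particular'' paragraph gestures at this demotion, but you misidentify the trimmed codewords (you demote $S_{g'}$ for lifted generators $g'$ of $\mathcal{A}$, rather than $S_{\pm k a_i(w-1)^{-1}}$ for all $k$), and, more importantly, you present the trimming as optional bookkeeping for the weight split while the first half asserts the untrimmed union already satisfies \eqref{eq:CAC}. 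Those two positions are inconsistent, and the trimming is in fact essential for the main statement, not just for the refined count. As written, the proposal therefore has a genuine gap: the claimed code of size $\frac{p^r-1}{2}+n+1$ in the main clause is not conflict-free, and no correct verification of the cross differences between $\mathcal{C}_0$ and $\widehat{\mathcal{A}}$ is supplied.
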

\begin{proof}
Let $A_1,\ldots,A_n$ be the $n$ equi-difference codewords in $\mathcal{A}$. 
We only consider the case that each $A_i$ is not exceptional, since the other cases can be dealt with in the same way.
Assume the generator of $A_i$ is $a_i$ for $i=1,\ldots,n$.
By definition, $A_i=\{0,a_i,\ldots,(w^*-1)a_i\}$ and $d^*(A_i)=\{\pm a_i,\ldots,\pm(w^*-1)a_i\}$ for all $i$, and $d^*(A_i)\cap d^*(A_j)=\emptyset$ for any two distinct $i,j$.

Recall that $H^2(p)=Q(p)$.
Let $Q=Q(p)$ for the sake of notational convenience.
For convenience, let $\mathcal{S}_r(Q)=Q_0\uplus Q_1\uplus\cdots\uplus Q_{r-1}$, where $Q_t=\{c\in\mathsf{L}_t:\,c_t\in Q\}$.
Consider the code $\mathcal{C}'\in\text{CAC}((w-1)p^r,w)$ obtained in Theorem~\ref{thm:construction_direct} consists of equi-difference codewords
\begin{equation}\label{eq:mix_w-1pr-Sg}
S_g=\{j(1,g)\in\mathbb{Z}_{w-1}\times\mathbb{Z}_{p^r}:\,j=0,1,2,\ldots,w-1\},\ \forall g\in\mathcal{S}_r(Q).
\end{equation}
Note that the difference set of $S_g$ is in the form
\begin{equation}\label{eq:mix_w-1pr-d*}
d^*(S_g)=\{\pm j(1,g)\in\mathbb{Z}_{w-1}\times\mathbb{Z}_{p^r}:\,j=1,2,\ldots,w-2\}\cup\{0,\pm(w-1)g\}.
\end{equation}

We will obtain three classes of codewords, say $\mathcal{C}_{w^*}, \mathcal{C}_{w}$ and $\mathcal{C}_{w-1}$, consist of codewords with weights $w^*,w$ and $w-1$, respectively.
The main idea is, for each codeword in $\mathcal{A}$, to associate some $w^*-1$ codewords in $\mathcal{C}'$ and reconstruct them to obtain one $w^*$-weight codeword and $w^*-1$ $(w-1)$-weight codewords.

Firstly, let $\mathcal{C}_{w^*}=\{T_{a_1},T_{a_2},\ldots,T_{a_n}\}$, where
\begin{align*}
T_{a_i} = \{(0,0),(0,a_i),(0,2a_i),\ldots,(0,(w^*-1)a_i)\},
\end{align*}
for $i=1,\ldots,n$.
Observe that 
\begin{equation}\label{eq:difference-w*}
d^*(T_{a_i}) = \{(0,\pm a_i), (0,\pm 2a_i), (0,\pm (w^*-1)a_i)\}.
\end{equation}
For $i\neq j$, since $d^*(A_i)\cap d^*(A_j)=\emptyset$, it is easy to see that 
\begin{equation}\label{eq:difference-disjoint-w*}
d^*(T_{a_i})\cap d^*(T_{a_j})=\emptyset.
\end{equation}

Secondly, fix any $1\leq i\leq n$.
For each $k\in\{1,2,\ldots,w^*-1\}$, since $\left(\frac{-1}{p}\right)=-1$, it is not hard to see that exactly one of $ka_i(w-1)^{-1}$ or $-ka_i(w-1)^{-1}$ is in $Q_{t}$, for some $0\leq t\leq r-1$.
Here, $(w-1)^{-1}$ indicates the multiplicative inverse of $w-1$ in the multiplicative group $\mathbb{Z}^{\times}_{p^r}$, and its existence is guaranteed by $w-1<p$.
Let $g_{i_k}\in\{ka_i(w-1)^{-1},-ka_i(w-1)^{-1}\}$ be the quadratic residue.
Observe that $S_{g_{i_k}}$ is a codeword in $\mathcal{C}'$ with difference set
\begin{align*}
d^*(S_{g_{i_k}}) = \{\pm j(1,g_{i_k}):\, j=1,2,\ldots,w-2\} \cup \{(0,\pm ka_i)\}
\end{align*}
due to \eqref{eq:mix_w-1pr-d*} and $\pm (w-1)g_{i_k}=\pm ka_i$.
Now, for $i=1,\ldots,n$ and $k=1,2,\ldots,w^*-1$, let 
\begin{align*}
S'_{g_{i_k}} = S_{g_{i_k}} \setminus \{(w-1)(1,g_{i_k})\},
\end{align*}
whose difference set would be
\begin{equation}\label{eq:difference-w}
d^*(S'_{g_{i_k}}) = d^*(S_{g_{i_k}}) \setminus \{(0,\pm ka_i)\}.
\end{equation}
Let 
\begin{align*}
G = \{g_{i_k}:\,i=1,\ldots,n\text{ and }k=1,2,\ldots,w^*-1\}
\end{align*}
be the collection of the generators considered here.
It follows from \eqref{eq:difference-w*} and \eqref{eq:difference-w} that $d^*(S'_g)\cap d^*(T)=\emptyset$ for $g\in G$ and $T\in\mathcal{C}_{w^*}$.
Moreover, define
\begin{equation}\label{eq:difference-S'0}
S'_0=\{(j,0)\in\mathbb{Z}_{w-1}\times\mathbb{Z}_{p^r}:\,j=0,1,\ldots,w-2\},
\end{equation}
and let 
\begin{align*}
\mathcal{C}_{w-1} = \left\{S'_0\}\cup\{S'_{g}:\,g\in G\right\}.
\end{align*}
Observe that the differences in $d^*(S'_0)$ are all of the form $(\pm j,0)$, for $j=1,2,\ldots, w-2$, and thus $d^*(S'_0)\cap\bigcup_{S\in\C'}d^*(S)=\emptyset$, due to \eqref{eq:mix_w-1pr-d*}.
Hence the difference sets of codewords in $\mathcal{C}_{w^*}\cup\mathcal{C}_{w-1}$ are mutually disjoint.

Finally, let 
\begin{align*}
\mathcal{C}_{w} = \mathcal{C}'\setminus\{S_g:\,g\in G\}.
\end{align*}
By the assumption that $\C'\in\text{CAC}\left((w-1)p^r,w\right)$, the set $\C=\C_{w^*}\cup\C_{w-1}\cup\C_w$ forms a code in $\text{CAC}\left((w-1)p^r,\{w^*,w,w-1\}\right)$, as desired.
\end{proof}

One can apply the construction given in Theorem~\ref{thm:construction_direct} iteratively to construct a mixed-weight CAC with various weights.
In other words, if the based code $\mathcal{A}$ is a mixed-weight CAC with weight set $\{w^*_1,\ldots,w^*_t\}$, then the resulting mixed-weight CAC is with weight set $\{w-1,w\}\cup\{w^*_1,\ldots,w^*_t\}$.
Note that $w^*_i$, $1\leq i\leq t$, may be identical to $w$ or $w-1$.

The following example illustrates our idea in the proof of Theorem~\ref{thm:mixed-w-1pr}.

\begin{example}\label{ex:mixed-1} \rm
Let $p=23,r=1,w=4,w^*=7$ and $n=1$.
One has $L=p^r=23$.
The set of quadratic residues in $\mathbb{Z}_{23}$ is $Q(23)=\{1,2,3,4,6,8,9,12,13,16,18\}$.
One can check that
\begin{align*}
\left(\frac{-1}{23}\right) = \left(\frac{1}{23}\right)\left(\frac{-2}{23}\right)=-1,
\end{align*}
that satisfy the conditions in \eqref{eq:QR1} and \eqref{eq:QR2}.

By the CRT correspondence~\eqref{eq:CRT-correspondence}, the elements $(1,g)$, $g\in Q(23)$, in $\mathbb{Z}_3\times\mathbb{Z}_{23}$ are $1$, $25$, $49$, $4$, $52$, $31$, $55$, $58$, $13$, $16$ and $64$ in $\mathbb{Z}_{69}$, respectively.
Then, the code in $\CACe(69,4)$ obtained by the construction in Theorem~\ref{thm:construction_direct} (or, \cite[Theorem 3]{SW10} since $r=1$) contains the following $11$ codewords:

\begin{center}
\begin{tabular}{llll}
$S_1=\{0,1,2,3\},$ & $S_2=\{0,25,50,6\},$ & $S_3=\{0,49,29,9\},$ & $S_4=\{0,4,8,12\},$ \\
$S_6=\{0,52,35,18\}$,  & $S_8=\{0,31,62,24\}$,  & $S_9=\{0,55,41,27\}$,  & $S_{12}=\{0,58,47,36\}$,  \\
$S_{13}=\{0,13,26,39\}$,  & $S_{16}=\{0,16,32,48\}$,  & $S_{18}=\{0,64,59,54\}$  &   \\
\end{tabular}
\end{center}

Consider $\mathcal{A}=\{A_1=\{0,1,2,3,4,5,6\}\}$ a CAC of length $23$ with weight $w^*=7$ containing only one element.
Define $\C_7=\{T_1\}$ by
\begin{align*}
T_1 &= \{(0,k)\in\mathbb{Z}_3\times\mathbb{Z}_{23}:\,k=0,1,\ldots,6\} = \{0,24,48,3,27,51,6\} \subseteq\mathbb{Z}_{69},
\end{align*}
where the last identity is due to the CRT correspondence.

As $w^{-1}=3^{-1}=8$ in the multiplicative group $\mathbb{Z}^{\times}_{23}$, the elements $kw^{-1}$ and $-kw^{-1}$ for $k=1,\ldots,6$ are listed as follows, where the bold face refers to an element in $Q(23)$.
\begin{center}
\begin{tabular}{c|cccccc}
$k$ & 1 & 2 & 3 & 4 & 5 & 6 \\ \hline
$kw^{-1}$ & \textbf{8} & \textbf{16} & \textbf{1} & \textbf{9} & 17 & \textbf{2} \\ \hline
$-kw^{-1}$ & 15 & 7 & 22 & 14 & \textbf{6} & 21 \\ 
\end{tabular}
\end{center}
Therefore, $G=\{1,2,6,8,9,16\}$, and thus $\mathcal{C}_3$ contains
\begin{center}
\begin{tabular}{lll}
$S'_{1}=\{0,1,2\}$, & $S'_{2}=\{0,25,50\}$, & $S'_{6}=\{0,52,35\}$, \\
$S'_8=\{0,31,62\}$, & $S'_{9}=\{0,55,41\}$, & $S'_{16}=\{0,16,32\}$, 
\end{tabular}
\end{center}
and the extra one $S'_0=\{0,46,23\}$.
Finally, the codewords with weight $w=4$ are $S_3, S_4, S_{12}, S_{13}$ and $S_{18}$.
\end{example}


\begin{theorem}\label{thm:mixed-w-1pr-optimal}
Let $r,w$ be positive integers and $p$ be an odd prime such that $p\geq 2w-1$.
Suppose $p$ and $w$ satisfy the two conditions given in \eqref{eq:QR1} and \eqref{eq:QR2}.
For $w^*=w-1$ or $w$, if there exists a code $\mathcal{A}\in\text{CAC}(p^r,w^*)$ that contains $n$ ($n\leq\lfloor\frac{p^r-1}{2(w^*-1)}\rfloor$) equi-difference codewords, then 
\begin{align*}
K\left((w-1)p^r,w-1;w,n'\right)=n+\frac{p^r+1}{2},
\end{align*}
where $n'=\frac{p^r-1}{2}-n(w-2)$.
\end{theorem}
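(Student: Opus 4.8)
The plan is to combine the construction in Theorem~\ref{thm:mixed-w-1pr} with a matching upper bound. For the lower bound, observe that when $w^* = w-1$ or $w$, the condition $n \leq \lfloor\frac{p^r-1}{2(w^*-1)}\rfloor$ ensures we may assume (by the optimality results for constant-weight CACs, e.g.\ Theorem~\ref{thm:known_w-1p} or the $p^r$-analogues) that $\mathcal{A}$ consists of $n$ equi-difference, non-exceptional codewords in $\text{CAC}(p^r,w^*)$; such $\mathcal{A}$ exists because $2(w^*-1)$ divides into $p^r-1$ enough room, and one can always extract $n$ non-exceptional equi-difference codewords since $p \geq 2w-1 > 2(w^*-1)$ forces no exceptional equi-difference codeword of weight $\leq w$ to exist in $\mathbb{Z}_{p^r}$. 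Then Theorem~\ref{thm:mixed-w-1pr} yields a code $\mathcal{C} \in \text{CAC}((w-1)p^r,\{w-1,w,w^*\})$ with $\frac{p^r-1}{2}+n+1$ codewords. Since $w^* \in \{w-1,w\}$, the weight-set collapses to $\{w-1,w\}$, and by the last sentence of Theorem~\ref{thm:mixed-w-1pr} the number of weight-$w$ codewords is $\frac{p^r-1}{2} - n(w^*-1)$; when $w^*=w-1$ this is $\frac{p^r-1}{2}-n(w-2) = n'$, and when $w^*=w$ the $n$ codewords of ``weight $w^*$'' are just extra weight-$w$ codewords, so the count of weight-$w$ codewords becomes $\frac{p^r-1}{2} - n(w-1) + n = \frac{p^r-1}{2} - n(w-2) = n'$ as well. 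Hence in both cases $\mathcal{C}$ has exactly $n'$ codewords of weight $w$ and total size $\frac{p^r-1}{2}+n+1 = n + \frac{p^r+1}{2}$, giving $K((w-1)p^r,w-1;w,n') \geq n+\frac{p^r+1}{2}$.

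For the upper bound, let $\mathcal{C} \in \text{CAC}((w-1)p^r,\{w-1,w\})$ be any code with exactly $n'$ codewords of weight $w$. Write $L = (w-1)p^r$ and split $\mathcal{C} = \mathcal{C}_w \uplus \mathcal{C}_{w-1}$ with $|\mathcal{C}_w| = n'$. Each codeword of weight $w$ contributes at least $2w-3$ nonzero differences if exceptional and $2w-2$ otherwise; each codeword of weight $w-1$ contributes at least $2w-5$ if exceptional and $2w-4$ otherwise. The key is to control exceptional codewords via Corollary~\ref{cor:excpetional-bound} and Lemma~\ref{lem:nonexceptional}: for an exceptional $S$, $H_S = \mathsf{H}(d(S))$ has $2 \leq |H_S| \leq 2|S|-2$, divides $L = (w-1)p^r$, and (by Lemma~\ref{lem:nonexceptional}) cannot divide $w-1$ nor $2w-1$. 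Since $p \geq 2w-1 > 2|S|-2$ for $|S| \leq w$, $|H_S|$ is coprime to $p$, hence divides $w-1$ — contradiction. So no codeword of either weight is exceptional, and the disjoint-difference-set property gives
\begin{align*}
(w-1)p^r - 1 = |\mathbb{Z}^*_L| \geq (2w-2)\,n' + (2w-4)\,|\mathcal{C}_{w-1}|.
\end{align*}
Solving for $|\mathcal{C}_{w-1}|$ and adding $n'$ then comparing with $n + \frac{p^r+1}{2}$, using $n' = \frac{p^r-1}{2} - n(w-2)$, should yield $|\mathcal{C}| = n' + |\mathcal{C}_{w-1}| \leq n + \frac{p^r+1}{2}$ after a short arithmetic simplification (here the identity $(w-1)p^r - 1 - (2w-2)n' = (2w-4)\left(\frac{p^r-1}{2} + n\right) + (p^r-1) - \text{(small correction)}$ is what must be checked exactly).

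The main obstacle I anticipate is twofold. First, the upper-bound arithmetic is delicate: one must verify that the floor in the bound on $|\mathcal{C}_{w-1}|$ lands exactly on $n + \frac{p^r+1}{2} - n'$ and not one higher, which relies on the precise divisibility $2w-2 \mid p-1$ (hence $\mid p^r-1$) and on the ``$+1$'' coming from the all-in-one-coset codeword $S'_0$. Second, and more subtly, the upper bound must forbid a code that trades weight-$w$ codewords for extra weight-$(w-1)$ ones beyond what the difference-count permits — i.e.\ one must be sure the size-constraint ``exactly $n'$ codewords of weight $w$'' is genuinely binding and that there is no cleverer packing; this is handled entirely by the non-existence of exceptional codewords established above, so the only real work is the clean bookkeeping of differences. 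I would double-check the boundary case $n = \lfloor\frac{p^r-1}{2(w^*-1)}\rfloor$ separately to make sure $n'$ stays non-negative and the construction still goes through.
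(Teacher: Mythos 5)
Your lower bound is fine and matches the paper, but your upper bound contains a genuine error: the claim that no codeword of weight $w-1$ can be exceptional. Lemma~\ref{lem:nonexceptional} is stated for a subset $S$ in terms of its \emph{own} size: for $|S|=w-1$ it only rules out stabilizer orders dividing $|S|-1=w-2$ or $2|S|-1=2w-3$, not those dividing $w-1$. Since $|\HH(d(S))|$ divides $(w-1)p^r$ and is less than $p$, the surviving possibility is exactly $|\HH(d(S))|$ dividing $w-1$, and such exceptional weight-$(w-1)$ codewords really do occur: the codeword $S'_0=\{(j,0):j=0,\ldots,w-2\}$ from the construction of Theorem~\ref{thm:mixed-w-1pr} is the subgroup of order $w-1$, has $|d^*(S'_0)|=w-2<2w-4$ for $w>2$, and is therefore exceptional. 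Your argument would in fact disprove the theorem: with no exceptional codewords the count $(w-1)p^r-1\geq(2w-2)n'+(2w-4)|\mathcal{C}_{w-1}|$ gives $|\mathcal{C}_{w-1}|\leq n(w-1)$ and hence $|\mathcal{C}|\leq\frac{p^r-1}{2}+n$, one \emph{below} the size $\frac{p^r-1}{2}+n+1$ of the code you just constructed --- so the step cannot be repaired by arithmetic care alone.

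The paper's proof handles precisely this point differently: only the weight-$w$ codewords are shown to be non-exceptional (as in Case~1 of Theorem~\ref{thm:main_w-1dpr}), while exceptional weight-$(w-1)$ codewords are permitted and their total loss of differences is controlled. For each exceptional $S$ one has, via Kneser's theorem, $|d^*(S)|\geq 2(w-1)-2-|H_S^*|$ with $H_S^*=\HH(d(S))\setminus\{0\}$, and since each $H_S\subseteq d(S)$ lies in the unique subgroup of $\mathbb{Z}_{(w-1)p^r}$ of order $w-1$ and the sets $H_S^*$ are pairwise disjoint, $\sum_{S\in\mathcal{E}}|H_S^*|\leq w-2$. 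Feeding this aggregate deficiency of at most $w-2$ into the difference count is exactly what produces the extra $+1$ and the stated value $n+\frac{p^r+1}{2}$; without it your floor computation lands one short, which is the discrepancy you flagged but attributed to bookkeeping rather than to the (nonexistent) exclusion of exceptional weight-$(w-1)$ codewords.
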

\begin{proof}
Pick any $S\in\mathcal{A}$.
If $S$ is exceptional, by Corollary~\ref{cor:excpetional-bound}, $|\mathsf{H}(d(S))|\leq 2w^*-2$, which is less than $p$ due to $w-1\leq w^*\leq w$ and $p\geq 2w-1$.
This indicates that $\gcd(|\mathsf{H}(d(S))|,p)=1$.
Since $\mathsf{H}(d(S))$ is a subgroup of $\mathbb{Z}_{p^r}$, it follows that $|\mathsf{H}(d(S))|=1$, which is a contradiction to the assertion in Corollary~\ref{cor:excpetional-bound} that $|\HH(d(S))|\geq 2$.
Therefore, the $n$ equi-difference codewords in $\mathcal{A}$ are all not exceptional.

Let $\C=\C_w\cup\C_{w-1}\in\CAC((w-1)p^r,\{w-1,w\})$ be the resulting mixed-weight CAC by plugging $w^*=w$ or $w-1$ into the construction of Theorem~\ref{thm:mixed-w-1pr}, where $\C_w$ (resp., $\C_{w-1}$) refers to the set of codewords with weight $w$ (resp., $w-1$).
One can check that $|\C_w|=\frac{p^r-1}{2}-n(w-2)$ and $|\C_{w-1}|=n(w-1)+1$.
So, it suffices to show that $K\left(wp^r,w;w+1,n'\right)\leq n+\frac{p^r+1}{2}$.

Let $\C'=\C'_w\uplus\C'_{w-1}\in\CAC\left((w-1)p^r,\{w-1,w\}\right)$ be any mixed-weight CAC, where $C'_w$ (resp., $C'_{w-1}$) consists of all $w$-weight (resp., $(w-1)$-weight) codewords, and $|C'_w|=n'=\frac{p^r-1}{2}-n(w-2)$.
Let $\mathcal{E}\subseteq\C'$ be the collection of all exceptional codewords.
The \textit{Case 1} in the proof of Theorem~\ref{thm:main_w-1dpr} shows that any codeword with weight $w$ is non-exceptional.
That is, $\mathcal{E}\subseteq\C'_{w-1}$.
For $S\in\mathcal{E}$, by Corollary~\ref{cor:excpetional-bound}, one has $\HH(d(S))|\leq 2w-4<p$.
Since $\HH(d(S))$ is a subgroup of $\mathbb{Z}_{(w-1)p^r}$, it follows that $\gcd(|\HH(d(S))|,p)=1$, and thus $|\mathsf{H}(d(S))|$ divides $w-1$.
By the same argument as in the derivation of \eqref{eq:main_wpr_case2-3} with placing $w$ by $w-1$, we have $|d^*(S)|\geq 2w-4-|H^*_S|$ and $\sum_{S\in\mathcal{E}}|H^*_S|\leq w-2$, where $H^*_S=\HH(d(S))\setminus\{0\}$.
Therefore,
\begin{align*}
\sum_{S\in\mathcal{E}}|d^*(S)| \geq (2w-4)|\mathcal{E}|-(w-2).
\end{align*}
By the disjoint-difference-set property, 
\begin{align*}
(w-1)p^r = |\mathbb{Z}^*_{(w-1)p^r}| \geq & \sum_{S\in\C'_w}|d^*(S)| + \sum_{S\in\C'_{w-1}\setminus\mathcal{E}}|d^*(S)| + \sum_{S\in\mathcal{E}}|d^*(S)| \\
\geq & \ (2w-2)\left(\frac{p^r-1}{2}-n(w-2)\right) \\
& + (2w-4)\left(|\C'|-\left(\frac{p^r-1}{2}-n(w-1)\right)-|\mathcal{E}|\right) \\ 
& + (2w-4)|\mathcal{E}|-(w-2) \\
= & (2w-4)|\C'| + p^r-1 -(w-2)(2n+1),
\end{align*}
yielding that
\begin{align*}
|\C'| \leq \left\lfloor\frac{(w-2)p^r+(w-2)(2n+1)}{2w-4}\right\rfloor = \frac{p^r-1}{2}+n+1,
\end{align*}
as desired.
\end{proof}

%



Let us turn back to the constructions in Theorems~\ref{thm:construction_wpr} and \ref{thm:construction_2w-1pr}.
Let the based CAC be a code in $\CACe(p,w^*)$, for some $w^*\neq w$, and $\Gamma$ be the set of $m$ generators.
By defining the corresponding set of generators as $\widehat{\Gamma}=\{(0,g)\in\mathbb{Z}_{w}\times\mathbb{Z}_{p^r}:\,g\in\mathcal{S}_r(\Gamma)\}$, we get the following two consequences.


\begin{corollary}\label{cor:mixed-wpr}
Let $p$ be a prime such that $p\geq 2w-1$.
Assume $w^*$ is an arbitrary positive integer.
If there is a code in $\CACe(p,w^*)$ with $m$ codewords and the condition in \eqref{eq:quadratic-non-residue} holds, then for any integer $r\geq 1$, there exists a code $\mathcal{C}\in\CAC(wp^r,\{w,w^*\})$ with $(p^r+1)/2$ codewords of weight $w$ and $m(p^r-1)/(p-1)$ codewords of weight $w^*$.
\end{corollary}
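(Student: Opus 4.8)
The plan is to reuse the construction in Theorem~\ref{thm:construction_wpr} almost verbatim, replacing the role played there by the smaller equi-difference $\CACe(p,w)$ code with the given $\CACe(p,w^*)$ code. First I would let $\Gamma$ be the set of $m$ generators of the given code in $\CACe(p,w^*)$, let $Q=Q(p)$, and set $\widehat{\Gamma}=\{(0,g)\in\mathbb{Z}_w\times\mathbb{Z}_{p^r}:\,g\in\mathcal{S}_r(\Gamma)\}$ and $\widehat{Q}=\{(1,g)\in\mathbb{Z}_w\times\mathbb{Z}_{p^r}:\,g\in\mathcal{S}_r(Q)\}$. Since $p\geq 2w-1$ we have $\gcd(w,p)=1$, so $\mathbb{Z}_{wp^r}\cong\mathbb{Z}_w\times\mathbb{Z}_{p^r}$. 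For $a\in\widehat{\Gamma}$ I would build the $w^*$-subset $S_a=\{ja:\,j=0,1,\ldots,w^*-1\}$ (note $\gcd(w^*,p)$ need not be $1$, but we only need $jg$ to stay inside $\mathbb{Z}_{p^r}$, which is automatic because the first coordinate is $0$), and for $a\in\widehat{Q}\uplus\{(1,0)\}$ the $w$-subset $S_a=\{ja:\,j=0,1,\ldots,w-1\}$. The claim is that $\mathcal{C}=\{S_a:\,a\in\widehat{\Gamma}\uplus\widehat{Q}\uplus\{(1,0)\}\}$ is the desired mixed-weight CAC; by~\eqref{eq:S_r(A)-size} the weight-$w$ part has $|\widehat{Q}|+1=(p^r-1)/2+1=(p^r+1)/2$ codewords and the weight-$w^*$ part has $|\widehat{\Gamma}|=m(p^r-1)/(p-1)$ codewords, matching the statement.

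The verification of the disjoint-difference-set property splits into the same cases as in Theorem~\ref{thm:construction_wpr}. When $a,b$ lie in different blocks among $\widehat{\Gamma}$, $\widehat{Q}$, $\{(1,0)\}$ the first coordinates of the difference sets are $0$, $\pm 1,\dots$, never mixing, exactly as before. When $a,b\in\widehat{Q}\uplus\{(1,0)\}$ the argument of Theorem~\ref{thm:construction_wpr} using~\eqref{eq:quadratic-non-residue} goes through unchanged. When $a,b\in\widehat{\Gamma}$, i.e. $a=(0,g)$, $b=(0,h)$ with $g\ne h\in\mathcal{S}_r(\Gamma)$, the difference sets are $\{(0,\pm jg):\,1\le j\le w^*-1\}$ and $\{(0,\pm jh):\,1\le j\le w^*-1\}$; a coincidence $ig=\pm jh\pmod{p^r}$ with $1\le i,j\le w^*-1$ forces (since $w^*-1<p-1$, so $i,j\in\mathsf{L}_0$) via Proposition~\ref{prop:p-ary} that $g,h$ lie in the same layer $\mathsf{L}_t$ and $ig_t=\pm jh_t\pmod p$, contradicting that $g_t,h_t\in\Gamma$ are distinct generators of the given $\CACe(p,w^*)$ code — this is precisely the proof of Theorem~\ref{thm:construction_pr} with $w$ replaced by $w^*$, legitimate because $p\geq 2w-1\geq w^*$ in the relevant regime. (If $w^*>w$ one should instead invoke the hypothesis $p\geq 2w-1$ together with the fact that $\Gamma$ generates a genuine $\CACe(p,w^*)$ code, which forces $p\geq 2w^*-1$; I would add a line noting $w^*\le p$ is implicit in the existence of the base code.)

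The only genuinely new point relative to Theorem~\ref{thm:construction_wpr} is that the weight-$w^*$ codewords $S_{(0,g)}$ now live entirely inside the subgroup $\{0\}\times\mathbb{Z}_{p^r}$, so their difference sets never touch the first coordinate and are automatically disjoint from everything involving $\widehat{Q}$ or $(1,0)$; I expect this to be the easy part, and the main (small) obstacle is just keeping the two weight regimes $w^*<w$ and $w^*>w$ straight when citing the layer argument, since Proposition~\ref{prop:p-ary} needs all the coefficients $1,\dots,w^*-1$ to be units mod $p$. I would dispatch this by remarking once that the existence of a $\CACe(p,w^*)$ code forces $p\ge 2w^*-1$, hence $w^*-1<p$, so the argument applies uniformly. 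Then the counting via~\eqref{eq:S_r(A)-size} finishes the proof.
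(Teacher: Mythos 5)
Your construction is exactly the one the paper intends: reuse Theorem~\ref{thm:construction_wpr} with the base code replaced by the given $\CACe(p,w^*)$ code, take $\widehat{\Gamma}=\{(0,g):g\in\mathcal{S}_r(\Gamma)\}$ as generators of the weight-$w^*$ codewords and $\widehat{Q}\uplus\{(1,0)\}$ as generators of the weight-$w$ codewords; the cross-block disjointness, the within-$\widehat{Q}$ case via \eqref{eq:quadratic-non-residue}, and both counts are handled correctly.

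The one genuine flaw is your claim that ``the existence of a $\CACe(p,w^*)$ code forces $p\ge 2w^*-1$.'' It does not: existence only forces $w^*<p$ (e.g.\ $\{\{0,1,2,3\}\}$ is a $\CACe(5,4)$ code). And the inequality $p\ge 2w^*-1$ is genuinely needed for the within-$\widehat{\Gamma}$ case once $r\ge 2$: distinct lifts $g\neq h\in\mathcal{S}_r(\Gamma)$ may share the same digit $g_t=h_t$, and then a coincidence $ig\equiv -jh\pmod{p^r}$ is ruled out only by showing $i+j\equiv 0\pmod p$ is impossible for $1\le i,j\le w^*-1$, which is precisely where Theorem~\ref{thm:construction_pr} uses $p\ge 2w-1$ (here $p\ge 2w^*-1$); your weaker observation $w^*-1<p$ (so that $i,j\in\mathsf{L}_0$) only disposes of the subcase $g_t\neq h_t$. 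The failure is real, not hypothetical: with $p=5$, $w=3$, $w^*=4$, $\Gamma=\{1\}$, $r=2$, the lifted generators $1$ and $11$ satisfy $3\cdot 1\equiv -2\cdot 11\pmod{25}$, so the corresponding weight-$4$ codewords have intersecting difference sets. Hence your argument is complete exactly when $p\ge 2w^*-1$ (in particular whenever $w^*\le w$, and in all of the paper's applications, where $2w^*-2$ divides $p-1$) or when $r=1$; for $(p+1)/2<w^*<p$ and $r\ge 2$ the construction itself breaks, a point that the paper's one-sentence justification of this corollary (stated for ``arbitrary'' $w^*$) also glosses over, but which your proof should state as an explicit hypothesis rather than derive from a false implication.
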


\begin{corollary}\label{cor:mixed-2w-1pr}
Let $p$ be a prime such that $p>2w-1$.
Assume $w^*$ is an arbitrary positive integer.
If there is a code in $\CACe(p,w^*)$ with $m$ codewords, then for any integer $r\geq 1$, there exists a code $\mathcal{C}\in\CAC((2w-1)p^r,\{w,w^*\})$ with $p^r$ codewords of weight $w$ and $m(p^r-1)/(p-1)$ codewords of weight $w^*$.
\end{corollary}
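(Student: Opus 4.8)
The plan is to re-run the proof of Theorem~\ref{thm:construction_2w-1pr} almost verbatim, the only change being that the ``inner'' generators now produce codewords of weight $w^*$ rather than $w$. Since $p>2w-1$ we have $\gcd(2w-1,p)=1$, so $\mathbb{Z}_{(2w-1)p^r}\cong\mathbb{Z}_{2w-1}\times\mathbb{Z}_{p^r}$, and I will write every codeword as a subset of the latter. Let $\Gamma$ be a set of $m$ generators of the given code in $\CACe(p,w^*)$ and set
\begin{align*}
\widehat{\Gamma}&\triangleq\{(0,g)\in\mathbb{Z}_{2w-1}\times\mathbb{Z}_{p^r}:\,g\in\mathcal{S}_r(\Gamma)\},\\
\Lambda&\triangleq\{(1,g)\in\mathbb{Z}_{2w-1}\times\mathbb{Z}_{p^r}:\,0\le g\le p^r-1\}.
\end{align*}
To each $a\in\widehat{\Gamma}$ I attach the weight-$w^*$ codeword $S_a=\{ja:\,j=0,1,\ldots,w^*-1\}$, and to each $a\in\Lambda$ the weight-$w$ codeword $S_a=\{ja:\,j=0,1,\ldots,w-1\}$; the candidate is $\mathcal{C}=\{S_a:\,a\in\widehat{\Gamma}\uplus\Lambda\}$. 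By \eqref{eq:S_r(A)-size} one has $|\widehat{\Gamma}|=|\mathcal{S}_r(\Gamma)|=m(p^r-1)/(p-1)$ and $|\Lambda|=p^r$, so once the disjoint-difference-set property is confirmed, $\mathcal{C}$ will have exactly the asserted number of codewords of each weight.

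I would then verify the disjoint-difference-set property in the same three cases as in Theorem~\ref{thm:construction_2w-1pr}. First, for two distinct $a,b\in\widehat{\Gamma}$: both codewords lie inside the subgroup $\{0\}\times\mathbb{Z}_{p^r}$, and $g\mapsto S_{(0,g)}$ is precisely the lifting construction of Theorem~\ref{thm:construction_pr} performed at weight $w^*$, so that theorem yields $d^*(S_a)\cap d^*(S_b)=\emptyset$; concretely, this invokes Proposition~\ref{prop:p-ary} to confine each $d^*(S_{(0,g)})$ to a single $p$-ary layer and then reduces any putative collision modulo $p$ to the disjointness of difference sets in the inner $\CACe(p,w^*)$ code. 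Second, for $a\in\widehat{\Gamma}$ and $b\in\Lambda$: every element of $d^*(S_a)$ has first coordinate $0$, whereas every element of $d^*(S_b)$ has first coordinate $\pm j\pmod{2w-1}$ for some $1\le j\le w-1$, which is never $0$ since $1\le j\le w-1<2w-1$; hence the two difference sets are disjoint. Third, for two distinct $a=(1,g),b=(1,h)\in\Lambda$: a collision $j(1,g)=\pm i(1,h)$ with $1\le i,j\le w-1$ forces $i\equiv\pm j\pmod{2w-1}$ on the first coordinate; the ``$+$'' case gives $i=j$ and hence $g=h$, a contradiction, and the ``$-$'' case gives $i=2w-1-j\ge w$, contradicting $i\le w-1$ --- this is the computation already used in Theorem~\ref{thm:construction_2w-1pr} and recorded in \cite{SWSC09}.

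Combining the three cases shows $\mathcal{C}\in\CAC((2w-1)p^r,\{w,w^*\})$ with $m(p^r-1)/(p-1)$ codewords of weight $w^*$ and $p^r$ codewords of weight $w$, which is the claim. I do not expect a genuine obstacle: the separation of the $\widehat{\Gamma}$-part from the $\Lambda$-part is governed solely by the first coordinate and is thus insensitive to the inner weight, while the $\Lambda$-part is identical to Theorem~\ref{thm:construction_2w-1pr}. The one point that deserves a moment's care is the invocation of the lifting step of Theorem~\ref{thm:construction_pr} at weight $w^*$: it needs $p\ge 2w^*-1$ (equivalently, that the inner equi-difference codewords are non-exceptional), which is the case whenever $w^*$ is small enough relative to $w$, as in the intended applications where $p>2w-1\ge 2w^*-1$.
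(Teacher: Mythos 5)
Your construction is exactly the one the paper has in mind (the paper obtains this corollary by re-reading Theorem~\ref{thm:construction_2w-1pr} with $\widehat{\Gamma}$ built from the weight-$w^*$ code), and your Cases 2 and 3 are fine. The flaw is the point you flag at the end and then wave away. The $\widehat{\Gamma}$-versus-$\widehat{\Gamma}$ case invokes the lifting of Theorem~\ref{thm:construction_pr} at weight $w^*$, and that proof genuinely needs $p\geq 2w^*-1$: after passing to leading digits one gets $(i-j)g_t\equiv 0\ (\bmod\ p)$ with $i,j\in\{\pm1,\ldots,\pm(w^*-1)\}$, and only when $p\geq 2w^*-1$ can one conclude $i=j$. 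Your justification that this holds ``in the intended applications where $p>2w-1\geq 2w^*-1$'' silently assumes $w^*\leq w$; the statement allows arbitrary $w^*$, and in the paper's actual application (Theorem~\ref{thm:mixed-2w-1pr-optimal}) one has $w^*>w$ --- there the requirement is rescued not by $p>2w-1$ but by $2w^*-2\mid p-1$, which forces $p\geq 2w^*-1$. So as written, your proof does not cover the hypotheses of the corollary, and the inequality you offer to close the gap is false in the very situation the corollary is used.

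The gap is real, not cosmetic: for $w^*<p\leq 2w^*-2$ and $r\geq 2$ the lifted inner code need not be a CAC. Take $p=5$, $w^*=4$, inner code the single codeword $\{0,1,2,3\}$ (so $\Gamma=\{1\}$, $m=1$), and $r=2$: then $1,11\in\mathcal{S}_2(\Gamma)$, and in $\Z_{25}$ the difference sets of $\{0,1,2,3\}$ and $\{0,11,22,8\}$ both contain $3$, since $3\cdot 1\equiv -2\cdot 11\ (\bmod\ 25)$. Worse, with $w=2$ (so $p=5>2w-1$) a short counting argument in $\Z_{75}$ shows that no code in $\CAC(75,\{2,4\})$ with $25$ codewords of weight $2$ and $6$ codewords of weight $4$ exists at all: the weight-$2$ codewords use $50$ differences, every non-exceptional weight-$4$ codeword uses $6$, and the only possible exceptional weight-$4$ difference set is the order-$5$ subgroup minus zero, usable at most once, giving at least $50+4+5\cdot 6=84>74=|\Z^*_{75}|$. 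Hence the statement with ``$w^*$ arbitrary'' cannot be saved by any construction; the correct fix, both for your write-up and for the statement, is to add the hypothesis $p\geq 2w^*-1$ (equivalently, that the inner equi-difference codewords are non-exceptional), under which your argument goes through verbatim and which holds wherever the paper applies the corollary.
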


Finally, we have the following two classes of optimal mixed-weight CACs of length $wp^r$ and $(2w-1)p^r$.

\begin{theorem}\label{thm:mixed-wpr-optimal}
Let $p$ be a prime and $w<w^*$ be positive integers such that $p-1$ is divisible by $2w^*-2$.
If there is a code in $\CACe(p,w^*)$ with $(p-1)/(2w^*-2)$ codewords and the condition in \eqref{eq:quadratic-non-residue} holds, then for any integer $r\geq 1$, 
\begin{align*}
K\left(wp^r, w; w^*,\frac{p^r-1}{2w^*-2}\right)=\frac{p^r+1}{2}+\frac{p^r-1}{2w^*-2}.
\end{align*}
\end{theorem}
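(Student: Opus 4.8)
The plan is to obtain the lower bound directly from Corollary~\ref{cor:mixed-wpr} and the matching upper bound from a difference-counting argument patterned on the proof of Theorem~\ref{thm:main_wpr}.

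\emph{Lower bound.} First I would check the hypotheses of Corollary~\ref{cor:mixed-wpr}. Since $2w^*-2\mid p-1$ and $w^*\geq w+1$, we get $p-1\geq 2w^*-2\geq 2w$, so $p\geq 2w-1$; condition~\eqref{eq:quadratic-non-residue} and the existence of a code in $\CACe(p,w^*)$ with $m:=(p-1)/(2w^*-2)$ codewords are assumed. Applying Corollary~\ref{cor:mixed-wpr} with this $m$ produces, for every $r\geq 1$, a code in $\CAC(wp^r,\{w,w^*\})$ with $(p^r+1)/2$ codewords of weight $w$ and $m(p^r-1)/(p-1)=\frac{p^r-1}{2w^*-2}$ codewords of weight $w^*$; since the latter count is exactly the prescribed one, this establishes the ``$\geq$'' direction.

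\emph{Upper bound.} Take any $\mathcal{C}'=\mathcal{C}'_w\uplus\mathcal{C}'_{w^*}\in\CAC(wp^r,\{w,w^*\})$ with $|\mathcal{C}'_{w^*}|=n=\frac{p^r-1}{2w^*-2}$, and let $\mathcal{E}$ be its set of exceptional codewords, $H_S=\HH(d(S))$ and $H_S^*=H_S\setminus\{0\}$ for $S\in\mathcal{E}$. The crucial step, exactly as in Theorem~\ref{thm:main_wpr}, is that $|H_S|$ divides $w$ for every $S\in\mathcal{E}$ regardless of its weight: Corollary~\ref{cor:excpetional-bound} gives $|H_S|\leq 2w^*-2<p$ (using $2w^*-2\mid p-1$), so $|H_S|$ is coprime to $p$, and being the order of a subgroup of $\mathbb{Z}_{wp^r}$ it must divide $w$; by Proposition~\ref{prop:unique_subgroup} it then lies inside the order-$w$ subgroup $G=\{ip^r:0\leq i<w\}$, and $H_S\subseteq d(S)$ by Proposition~\ref{prop:stabilizer}(ii), so disjointness of the difference sets forces $\sum_{S\in\mathcal{E}}|H_S^*|\leq |G\setminus\{0\}|=w-1$. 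Feeding $A=S$, $B=-S$ into Kneser's inequality~\eqref{eq:Kneser1} (with $H_S=-H_S$) gives $|d^*(S)|\geq 2|S|-2-|H_S^*|$ for $S\in\mathcal{E}$, while $|d^*(S)|\geq 2|S|-2$ for non-exceptional $S$. Summing these bounds over $\mathcal{C}'$ with the two weight classes separated, absorbing the exceptional corrections, and using $\bigcup_{S}d^*(S)\subseteq\mathbb{Z}^*_{wp^r}$ together with $(2w^*-2)n=p^r-1$ yields
\[
wp^r-1\ \geq\ (2w-2)|\mathcal{C}'_w|+(2w^*-2)n-\sum_{S\in\mathcal{E}}|H_S^*|\ \geq\ (2w-2)|\mathcal{C}'_w|+(p^r-1)-(w-1).
\]
Rearranging gives $|\mathcal{C}'_w|\leq\frac{(w-1)(p^r+1)}{2w-2}=\frac{p^r+1}{2}$, hence $|\mathcal{C}'|=|\mathcal{C}'_w|+n\leq\frac{p^r+1}{2}+\frac{p^r-1}{2w^*-2}$, which matches the lower bound.

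\emph{Main obstacle.} The only genuinely new ingredient relative to Theorem~\ref{thm:main_wpr} is the treatment of exceptional codewords of the larger weight $w^*$; this works precisely because $2w^*-2$ stays strictly below $p$, so such a codeword's stabilizer still has order dividing $w$ (not $w^*$) and is trapped in $G$. Once this is observed, the rest is the bookkeeping of Theorem~\ref{thm:main_wpr} carried out with two weight classes, and I anticipate no further difficulty beyond routine integrality checks, e.g. that $(p^r+1)/2$ and $(p^r-1)/(2w^*-2)$ are integers (immediate from $p$ odd and $2w^*-2\mid p-1\mid p^r-1$).
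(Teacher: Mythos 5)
Your proposal is correct and follows essentially the same route as the paper: the lower bound via Corollary~\ref{cor:mixed-wpr} with $m=(p-1)/(2w^*-2)$, and the upper bound by rerunning the Kneser-based bookkeeping of Theorem~\ref{thm:main_wpr} with two weight classes, using $|H_S|\le 2w^*-2<p$ to force $|H_S|\mid w$, trapping all stabilizers in the order-$w$ subgroup so that $\sum_{S\in\mathcal{E}}|H_S^*|\le w-1$, and then counting differences. The final rearrangement $|\mathcal{C}'_w|\le(p^r+1)/2$ matches the paper's conclusion exactly.
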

\begin{proof}
By setting $m=(p-1)/(2w^*-2)$ in Corollary~\ref{cor:mixed-wpr}, there exists a mixed-weight code in $\CAC(wp^r,\{w,w^*\})$ containing $(p^r+1)/2$ codewords of weight $w$ and $(p^r-1)/(2w^*-2)$ codewords of weight $w^*$.

Let $\C$ be any mixed-weight CAC of length $wp^r$ with weight-set $\{w,w^*\}$ having $(p^r-1)/(2w^*-2)$ codewords of weight $w^*$.
It suffices to show that $|\C|\leq (p^r+1)/2+ (p^r-1)/(2w^*-2)$.

Let $\mathcal{E}\subseteq\C$ be the collection of all exceptional codewords, and denote  by $H_S=\mathsf{H}(d(S))$ and $H_S^*=H_S\setminus\{0\}$ for $S\in\mathcal{E}$.
Consider any $S\in\mathcal{E}$.
Note that $|S|\leq w^*$, and $|H_S|\leq 2|S|-2<p$ due to Corollary~\ref{cor:excpetional-bound} and the assumption that $p-1$ is divisible by $2w^*-2$.
By the same argument as in the derivation of \eqref{eq:main_wpr_case2-1}--\eqref{eq:main_wpr_case2-3}, either $|S|=w$ or $w^*$, we have $|H_S|\big| w$, $|d^*(S)|\geq 2|S|-2-|H^*_S|$ and $\sum_{S\in\mathcal{E}}|H^*_S|\leq w-1$.
This concludes that 
\begin{align*}
\sum_{S\in\mathcal{E}}|d^*(S)| &\geq \sum_{S\in\mathcal{E}}\left(2|S|-2\right) - \sum_{S\in\mathcal{E}}|H^*_S| \\ 
&\geq |\mathcal{E}_{w^*}|(2w^*-2) + |\mathcal{E}_{w}|(2w-2) -(w-1),
\end{align*}
where $\mathcal{E}_{w^*}$ and $\mathcal{E}_w$ denote the sets of codewords in $\mathcal{E}$ with weights $w^*$ and $w$, respectively.
By the disjoint-difference-set property,
\begin{align*}
wp^r-1 = |\mathbb{Z}^*_{wp^r}| &\geq \sum_{S\in\C\setminus\mathcal{E},|S|=w^*}|d^*(S)| + \sum_{S\in\C\setminus\mathcal{E},|S|=w}|d^*(S)| + \sum_{S\in\mathcal{E}}|d^*(S)|\\
&\geq \left(\frac{p^r-1}{2w^*-2}-|\mathcal{E}_{w^*}|\right)(2w^*-2) + \left(|\C|-\frac{p^r-1}{2w^*-2}-|\mathcal{E}_w|\right)(2w-2) \\
& \quad + |\mathcal{E}_{w^*}|(2w^*-2) + |\mathcal{E}_{w}|(2w-2) -(w-1) \\
& \ \geq p^r-1 + \left(|\C|-\frac{p^r-1}{2w^*-2}\right)(2w-2) - (w-1),
\end{align*}
which implies that $|C|-(p^r-1)/(2w^*-2)\leq (p^r+1)/2$.
This completes the proof.
\end{proof}

\begin{theorem}\label{thm:mixed-2w-1pr-optimal}
Let $p$ be a prime and $w<w^*$ be positive integers such that $2w^*-2$ divides $p-1$ and $2w-1$ divides $w^*-1$ or $2w^*-1$.
If there is a code in $\CACe(p,w^*)$ with $(p-1)/(2w^*-2)$ codewords, then for any integer $r\geq 1$, 
\begin{align*}
K\left((2w-1)p^r, w; w^*,\frac{p^r-1}{2w^*-2}\right)=p^r+\frac{p^r-1}{2w^*-2}.
\end{align*}
\end{theorem}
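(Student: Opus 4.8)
The plan is to prove the two bounds separately, in close analogy with the proof of Theorem~\ref{thm:main_2w-1pr}.

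For the lower bound, I would first observe that the divisibility $2w^{*}-2\mid p-1$ together with $w^{*}\geq w+1$ forces $p-1\geq 2w^{*}-2\geq 2w$, hence $p>2w-1$; in particular the case $p=2w-1$ (which in Theorem~\ref{thm:main_2w-1pr} needed a separate reduction) cannot occur here, and Corollary~\ref{cor:mixed-2w-1pr} is applicable. Taking $m=(p-1)/(2w^{*}-2)$ (an integer by hypothesis) and feeding the given code in $\CACe(p,w^{*})$ with $m$ codewords into that corollary produces a code $\C\in\CAC((2w-1)p^{r},\{w,w^{*}\})$ with exactly $p^{r}$ codewords of weight $w$ and $m(p^{r}-1)/(p-1)=(p^{r}-1)/(2w^{*}-2)$ codewords of weight $w^{*}$. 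Since this meets the prescribed count of weight-$w^{*}$ codewords, it witnesses $K\big((2w-1)p^{r},w;w^{*},(p^{r}-1)/(2w^{*}-2)\big)\geq p^{r}+(p^{r}-1)/(2w^{*}-2)$.

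For the upper bound, let $\C\in\CAC((2w-1)p^{r},\{w,w^{*}\})$ be any code with exactly $n^{*}=(p^{r}-1)/(2w^{*}-2)$ codewords of weight $w^{*}$, and write $n=|\C|-n^{*}$ for its number of weight-$w$ codewords. The crucial step is to show that \emph{no} codeword of $\C$ is exceptional. Suppose $S\in\C$ were exceptional. By Corollary~\ref{cor:excpetional-bound}, $2\leq|\HH(d(S))|\leq 2|S|-2\leq 2w^{*}-2\leq p-1<p$, so $\gcd(|\HH(d(S))|,p)=1$; since $\HH(d(S))$ is a subgroup of $\Z_{(2w-1)p^{r}}$ by Proposition~\ref{prop:stabilizer}(i), its order divides $2w-1$. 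If $|S|=w$, then Lemma~\ref{lem:nonexceptional} applied with parameter $w$ (using $|\HH(d(S))|\mid 2w-1$) forces $S$ to be non-exceptional, a contradiction. If $|S|=w^{*}$, then the hypothesis that $2w-1$ divides $w^{*}-1$ or $2w^{*}-1$ yields $|\HH(d(S))|\mid w^{*}-1$ or $|\HH(d(S))|\mid 2w^{*}-1$, and Lemma~\ref{lem:nonexceptional} applied with parameter $w^{*}$ again forces $S$ non-exceptional, a contradiction. Once every codeword is non-exceptional, $|d^{*}(S)|\geq 2|S|-2$ for all $S\in\C$, so the disjoint-difference-set property gives
\begin{align*}
(2w-1)p^{r}-1=|\Z^{*}_{(2w-1)p^{r}}|\geq\sum_{S\in\C}|d^{*}(S)|\geq n(2w-2)+n^{*}(2w^{*}-2)=n(2w-2)+(p^{r}-1),
\end{align*}
whence $n\leq p^{r}$ and $|\C|=n+n^{*}\leq p^{r}+(p^{r}-1)/(2w^{*}-2)$. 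Combining this with the lower bound completes the argument.

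The computations are routine once Corollary~\ref{cor:mixed-2w-1pr} and the stabilizer machinery of Section~\ref{sec:Kneser} are in hand; the only load-bearing point — and the reason for the precise form of the hypothesis — is the observation that a stabilizer whose order divides $2w-1$ can be fed into Lemma~\ref{lem:nonexceptional} at weight $w^{*}$ exactly when $2w-1\mid w^{*}-1$ or $2w-1\mid 2w^{*}-1$. The small sanity checks ($p>2w-1$, and integrality of $(p^{r}-1)/(2w^{*}-2)$, which follows from $2w^{*}-2\mid p-1\mid p^{r}-1$) present no difficulty.
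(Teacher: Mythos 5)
Your proof is correct and follows essentially the same route as the paper's: the lower bound via Corollary~\ref{cor:mixed-2w-1pr} with $m=(p-1)/(2w^*-2)$, and the upper bound by showing every codeword is non-exceptional through the stabilizer-order divisibility $|\HH(d(S))|\mid 2w-1$ combined with Lemma~\ref{lem:nonexceptional} (using the hypothesis $2w-1\mid w^*-1$ or $2w^*-1$ for weight-$w^*$ codewords), followed by the same difference-count. Your explicit check that $2w^*-2\mid p-1$ and $w^*>w$ force $p>2w-1$, which Corollary~\ref{cor:mixed-2w-1pr} requires, is a detail the paper leaves implicit but does not change the argument.
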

\begin{proof}
By setting $m=(p-1)/(2w^*-2)$ in Corollary~\ref{cor:mixed-2w-1pr}, there exists a mixed-weight CAC in $\CAC((2w-1)p^r,\{w,w^*\})$ containing $p^r$ codewords of weight $w$ and $(p^r-1)/(2w^*-2)$ codewords of weight $w^*$.

Let $\C$ be any mixed-weight CAC of length $(2w-1)p^r$ with weight-set $\{w,w^*\}$ having $(p^r-1)/(2w^*-2)$ codewords of weight $w^*$.
It suffices to show $|\C|\leq p^r+ (p^r-1)/(2w^*-2)$.
Firstly, we claim that all codewords in $\C$ is non-exceptional.
Pick $S\in\C$.
Note that $|\HH(d(S))|$ divides $(2w-1)p^r$ since $\HH(d(S))$ is a subgroup of $\Z_{(2w-1)p^r}$.
When $|S|=w$, by Corollary~\ref{cor:excpetional-bound}, $|\HH(d(S))|\leq 2w-2<p$, which implies that $|\HH(d(S))|\big|(2w-1)$.
By Lemma~\ref{lem:nonexceptional}, $S$ is non-exceptional.
Similarly, we also have $|\HH(d(S))|\big|(2w-1)$ in the case when $|S|=w^*$.
By the assumption that $2w-1$ divides $w^*-1$ or $2w^*-1$, we further have $|\HH(d(S))|\big|(w^*-1)$ or $|\HH(d(S))|\big|(2w^*-1)$.
By Lemma~\ref{lem:nonexceptional} again, $S$ is non-exceptional.

Finally, by the disjoint-difference-set property,
\begin{align*}
(2w-1)p^r-1=|\Z^*_{(2w-1)p^r}| &\geq \sum_{S\in\C,|S|=w^*}|d^*(S)| + \sum_{S\in\C,|S|=w}|d^*(S)| \\
&\geq \left(\frac{p^r-1}{2w^*-2}\right)(2w^*-2) + \left(|\C|-\frac{p^r-1}{2w^*-2}\right)(2w-2).
\end{align*}
Hence, the result follows.
\end{proof}

\begin{remark}\rm
With Theorems~\ref{thm:mixed-w-1pr}--\ref{thm:mixed-2w-1pr-optimal}, we are ready to analyze how the proposed mixed-weight CACs provide heterogeneous throughput/delay performance.
Obviously, a user assigned a codeword of weight $w'\geq w$ is able to transmit at least $w'-w+1$ packets successfully during every $L$ consecutive slots, which leads to higher worst-case throughput for larger $w'$ if we employ erasure correcting coding across packets to recover data lost due to collisions\footnote{For example, each user uses a shortened Reed-Solomon (RS) code of length $w'$ over the finite field of size $Q\geq w'$ to code its $w'-w+1$ information packets into $w'$ transmitted packets in each frame of $L$ slots corresponding to its used codeword, so that the data lost can be recovered because of the maximal-distance separable (MDS) property of RS codes.}.  
On the other hand, we note that the distance of adjacent ones in every codeword of a mixed-weight CAC is not smaller than $L/p^r$, which takes different values for different cases.  
More specifically, we have $L/p^r=w-1$, $w$, and $2w-1$ for the cases stated in Theorem~\ref{thm:mixed-w-1pr} and Corollaries~\ref{cor:mixed-wpr}--\ref{cor:mixed-2w-1pr}, respectively.
So, a user assigned a codeword of weight $w'\geq w$ would enjoy the worst-case delay $L-(w'-w)L/p^r$ slots.
\end{remark}

\section{Conclusion}\label{sec:conclusion}

We generalize some previously known constructions of constant-weight CACs in various aspects and propose several classes of optimal CACs.
Firstly, a direct construction of CACs of length $\frac{w-1}{d}p^r$ with weight $w$ is proposed in Theorem~\ref{thm:construction_direct} by the help of some properties of cosets in Group Theory.
By some techniques in Additive Combinatorics and Kneser's Theorem, the obtained CACs are proved to be optimal in Theorem~\ref{thm:main_w-1dpr}.
As an application of Theorem~\ref{thm:main_w-1dpr}, we provide several series of optimal CACs in Corollaries~\ref{coro:w-1prw} -- \ref{coro:3pr7} by Gauss's Lemma and the Law of Quadratic Reciprocity.
Secondly, constructions of CACs of length $p^r,wp^r$ and $(2w-1)p^r$ by extending smaller-length CACs are given in Theorems~\ref{thm:construction_pr}, \ref{thm:construction_wpr} and \ref{thm:construction_2w-1pr}, respectively.
Sufficient conditions of the constructed CACs to be optimal are characterized in Theorems~\ref{thm:main_pr} -- \ref{thm:main_2w-1pr}.
Finally, we study mixed-weight CACs for the first time for the purpose of increasing the throughput and deducing the access delay of some potential users with higher priority.
As an application of the proposed direct construction of CACs given in Theorem~\ref{thm:construction_direct}, we in Theorem~\ref{thm:mixed-w-1pr} provide a general construction of mixed-weight CACs of length $(w-1)p^r$ consisting of three or more distinct weights.
With some specific parametric requirements, we obtain a series of optimal mixed-weight CACs containing two different weights in Theorem~\ref{thm:mixed-w-1pr-optimal}.
Two classes of optimal mixed-weight CACs of length $wp^r$ and $(2w-1)p^r$ are respectively given in Theorems~\ref{thm:mixed-wpr-optimal} and \ref{thm:mixed-2w-1pr-optimal} as well.
We also analyze the heterogeneous throughput/worst-case delay supported by mixed-weight CACs based on Theorems~\ref{thm:mixed-w-1pr}--\ref{thm:mixed-2w-1pr-optimal}.
It is worth noting that the heterogeneous average delay performance can be easily analyzed using the method in~\cite[Theorem 1]{WSLWS13}.

\section*{Acknowledgment}
The authors would like to thank the anonymous reviewers and the Editor for their valuable comments that improved the presentation of this article.



\begin{thebibliography}{9}

\bibitem{LT05}
V.~I.~Levenshtein and V.~D.~Tonchev, ``Optimal conflict-avoiding codes for three active users,'' in \emph{IEEE Int. Symp. Inform. Theory}, Adelaide, Sep. 2005, pp. 535--537.

\bibitem{Bennis18PIEEE}
M. Bennis, M. Debbah, and H. V. Poor, ``Ultrareliable and low-latency wireless communication: Tail, risk, and scale,'' \textit{Proc. IEEE}, vol.~106, no.~10, pp.~1834--1853, Oct.~2018.

\bibitem{LWLZCX23}
F. Liu, W. S. Wong, Y.-H. Lo, Y. Zhang, C. S. Chen, and G. Xing, ``Age of information for periodic status updates under sequence based scheduling,'' \textit{IEEE Trans. Comm.}, vol. 71, no. 10, pp. 5963--5978, Oct.~2023.

\bibitem{CZWZYL22}
L. Chen, Y. Zhang, K. Wang, M. Zheng, J. Yu, and W. Liang, ``Deterministic collision-resilient channel rendezvous: theory and algorithm,'' \textit{IEEE Trans. Wireless Comm.}, vol. 21, no. 11, pp. 8967--8978, Nov.~2022.

\bibitem{MM_85}
J. L. Massey, and P. Mathys, ``The collision channel without feedback,'' \emph{IEEE Trans. Inf. Theory}, vol. IT-31, no. 2, pp. 192--204, Mar. 1985.

\bibitem{SWC10}
K. W. Shum, W. S. Wong, and C. S. Chen, ``A general upper bound on the size of constant-weight conflict-avoiding codes'', \textit{IEEE Trans. Inf. Theory}, vol. 56, no. 7, pp. 3265--3276, Jul. 2010.

\bibitem{SW10}
K. W. Shum and W. S. Wong, ``A tight asymptotic bound on the size of constant-weight conflict-avoiding codes'' \textit{Des. Codes Cryptogr.}, vol. 57, pp. 1--14, 2010.

\bibitem{MMSJ07}
K. Momihara, M. M\"{u}ller, J. Satoh, and M. Jimbo, ``Constant weight conflict-avoiding codes,'' \textit{SIAM J. Discrete Math.}, vol. 21, no. 4, pp. 959--979, 2007.

\bibitem{JMJTT07}
M.~Jimbo, M.~Mishima, S.~Janiszewski, A.~Y.~Teymorian and V.~D.~Tonchev, ``On conflict-avoiding codes of length $n = 4m$ for three active users,'' \emph{IEEE Trans. Inf. Theory}, vol. 53, no. 8, pp. 2732--2742, Aug. 2007.

\bibitem{MFU09}
M.~Mishima, H.-L.~Fu, and S.~Uruno, ``Optimal conflict-avoiding codes of length $n\equiv 0$ (mod $16$) and weight $3$,'' \emph{Des. Codes Cryptogr.}, vol. 52, no. 3, pp. 275--291, 2009.

\bibitem{FLM10}
H.-L.~Fu, Y.-H.~Lin, and M.~Mishima, ``Optimal conflict-avoiding codes of even length and weight $3$,'' \emph{IEEE Trans. Inf. Theory}, vol. 56, no. 11, pp. 5747--5756, Nov. 2010.

\bibitem{Levenshtein07}
V.~I.~Levenshtein, ``Optimal conflict-avoiding codes and cyclic triple system,'' \emph{Probl. Inf. Transm.}, vol. 43, no. 3, pp. 199--212, Sep. 2007.

\bibitem{FLS14}
H.-L.~Fu, Y.-H.~Lo, and K.~W.~Shum, ``Optimal conflict-avoiding codes of odd length and weight three,'' \emph{Des. Codes Cryptogr.}, vol. 72, no. 2, pp. 289--309, Aug. 2014.

\bibitem{MZS14}
W. Ma, C.-E. Zhao, and D. Shen, ``New optimal constructions of conflict- avoiding codes of odd length and weight 3,'' \emph{Des., Codes Cryptogr.}, vol. 73, no. 3, pp. 791--804, Dec. 2014.

\bibitem{MM17}
M.~Mishima and K.~Momihara, ``A new series of optimal tight conflict-avoiding codes of weight $3$,'' \emph{Discrete Math.}, vol. 340, no. 4, pp. 617--629, Apr. 2017.

\bibitem{WF13}
S.-L.~Wu and H.-L.~Fu, ``Optimal tight equi-difference conflict-avoiding codes of length $n=2^k\pm 1$ and weight $3$'' \emph{J. Comb. Des.}, vol. 21, pp. 223--231, 2013.

\bibitem{LMSJ14}
Y. Lin, M. Mishima, J. Satoh, and M. Jimbo, ``Optimal equi-difference conflict-avoiding codes of odd length and weight three,'' \emph{Finite Fields Their Appl.}, vol. 26, pp. 49--68, 2014.

\bibitem{LFL15}
Y.-H.~Lo, H.-L.~Fu and Y.-H.~Lin, ``Weighted maximum matchings and optimal equi-difference conflict-avoiding codes,'' Designs, \emph{Des. Codes Cryptogr.}, vol. 76, no. 2, pp. 361--372, Aug. 2015.

\bibitem{LMJ16}
Y. Lin, M. Mishima, and M. Jimbo, ``Optimal equi-difference conflict- avoiding codes of weight four,'' \emph{Des., Codes Cryptogr.}, vol. 78, no. 3, pp. 747--776, Mar. 2016.

\bibitem{Momihara07}
M.~Momihara, ``Necessary and sufficient conditions for tight equi-difference conflict-avoiding codes of weight three,'' \emph{Des. Codes Cryptogr.}, vol. 45, no. 3, pp. 379--390, 2007.

\bibitem{IR90}
K. Ireland and M. Rosen, ``A Classical Introduction to Modern Number Theory''. Springer-Verlag, New York, 1990.

\bibitem{TF15}
L. Toni and P. Frossard, ``Prioritized random MAC optimization via graph-based analysis,'' \emph{IEEE Trans. Comm.}, vol. 63, no. 12, pp. 5002--5013, Dec. 2015.

\bibitem{CLW19}
C.-S. Chang, D.-S. Lee, and C. Wang, ``Asynchronous grant-free uplink transmissions in multichannel wireless networks with heterogeneous QoS guarantees,'' \textit{IEEE/ACM Trans. Netw.}, vol. 27, no. 4, pp. 1584--1597, Aug. 2019.









\bibitem{TV06}
T. Tao and V. H. Vu, ``Additive Combinatorics,'' Cambridge University Press, 2006.

\bibitem{Kneser53}
M. Kneser, ``Absch\"{a}tzungen der asymptotischen dichte von summenmengen,'' \textit{Math. Zeit.}, vol 58, pp. 459--484, 1953.

\bibitem{Burton10}
D. M. Burton, ``Elementary Number Theory,'' McGraw Hill Companies, 7th Ed., 2010.

\bibitem{SWSC09}
K.~W.~Shum, W.~S.~Wong, C.~W.~Sung, and C.~S.~Chen, ``Design and construction of protocol sequences: Shift invariance and user irrepressibility'', in \emph{IEEE Int. Symp. Inf. Theory (ISIT)}, Seoul, Jun. 2009, pp. 1368--1372.

\bibitem{WSLWS13}
Y. Wu, K. W. Shum, Z. Lin, W. S. Wong, and L. Shen, ``Protocol sequences for mobile ad hoc networks'', in \emph{IEEE Int. Conf. Commun. (ICC)}, Budapest, Hungary, Jun. 2013, pp. 1730--1735.

\end{thebibliography}
\end{document}